\pgfplotsset{compat=newest}
\definecolor{tudcyan}{RGB}{0,166,214}
\definecolor{tudmagenta}{RGB}{109,23,127}
\definecolor{tudpurple}{RGB}{29,28,115}
\definecolor{tudgraygreen}{RGB}{107,134,137}
\colorlet{lighttudcyan}{tudcyan!20}
\colorlet{lighttudmagenta}{tudmagenta!20}
\newlength{\hatchspread}
\newlength{\hatchthickness}
\newlength{\hatchshift}
\newcommand{\hatchcolor}{}
\tikzset{hatchspread/.code={\setlength{\hatchspread}{#1}},
	hatchthickness/.code={\setlength{\hatchthickness}{#1}},
	hatchshift/.code={\setlength{\hatchshift}{#1}},
	hatchcolor/.code={\renewcommand{\hatchcolor}{#1}}}
\tikzset{hatchspread=7pt,
	hatchthickness=0.5pt,
	hatchshift=0pt,
	hatchcolor=black}
\def\centerarc[#1](#2)(#3:#4:#5)
\let\color@begingroup\relax
	\let\color@endgroup\relax}{}%
\def\fix@ieeecolor@hbox#1{%
	\hbox{\color@begingroup#1\color@endgroup}}
\patchcmd\@makecaption{\hbox}{\fix@ieeecolor@hbox}{}{\FAILED}
\patchcmd\@makecaption{\hbox}{\fix@ieeecolor@hbox}{}{\FAILED}
\let\old@makecaption=\@makecaption
\let\@makecaption=\old@makecaption
\def\BibTeX{{\rm B\kern-.05em{\sc i\kern-.025em b}\kern-.08em
		T\kern-.1667em\lower.7ex\hbox{E}\kern-.125emX}}
\newtheorem{defn}{Definition}
\newtheorem{rem}{Remark}
\newtheorem{prop}{Proposition}
\newtheorem{lem}{Lemma}
\newtheorem{thm}{Theorem}
\newtheorem{cor}{Corollary}
\newtheorem{example}{Example}
\newcommand*{\tran}{^{\mkern-1.5mu\mathsf{T}}\!}  
\def\d{\ensuremath{\mathrm{d}}}
\def\imag{\ensuremath{\mathrm{i}}}
\def\Id{\ensuremath{\mathrm{Id}}}
\DeclareMathOperator{\sign}{sign}
\DeclareMathOperator{\Post}{Post}
\DeclareMathOperator{\InfLimInf}{ILI}
\DeclareMathOperator{\SupLimSup}{SLS}
\DeclareMathOperator{\InfLimAvg}{ILA}
\def\Rob{\mathrm{Rob}}
\DeclareMathOperator{\bigO}{\mathcal{O}}
\def\norm[#1]{\left|#1\right|}
\def\shortnorm[#1]{|#1|}
\def\bisim{\ensuremath{\cong}}
\DeclareMathOperator{\cl}{cl}
\DeclareMathOperator{\ang}{angle}
\def\Xs{\mathcal{X}}
\def\Ys{\mathcal{Y}}
\def\Us{\mathcal{U}}
\def\Ws{\mathcal{W}}
\def\Vs{\mathcal{V}}
\def\No{\mathbb{N}_{0}}
\def\N{\mathbb{N}}
\def\R{\mathbb{R}}
\def\Q{\mathbb{Q}}
\def\S{\mathbb{S}}
\def\P{\mathbb{P}}
\def\C{\mathbb{C}}
\def\Rs{\mathcal{R}}
\def\Os{\mathcal{O}}
\def\Ss{\mathcal{S}}
\def\Ks{\mathcal{K}}
\def\Qs{\mathcal{Q}}
\def\As{\mathcal{A}}  
\def\Es{\mathcal{E}}  
\def\Cs{\mathcal{C}}  
\def\Bs{\mathcal{B}}  
\def\Us{\mathcal{U}}  
\def\Ps{\mathcal{P}}  
\def\Ts{\mathcal{T}}
\def\nx{{n_{\mathrm{x}}}}
\def\xv{\boldsymbol{x}}
\def\yv{\boldsymbol{y}}
\def\vv{\boldsymbol{v}}
\def\xiv{\boldsymbol{\xi}}
\def\dv{\boldsymbol{d}}
\def\lv{\boldsymbol{l}}
\def\zv{\boldsymbol{z}}
\def\pv{\boldsymbol{p}}
\def\Am{\boldsymbol{A}}
\def\Bm{\boldsymbol{B}}
\def\Em{\boldsymbol{E}}
\def\I{\mathbf{I}}
\def\O{\mathbf{0}}
\def\Mm{\boldsymbol{M}}
\def\Nm{\boldsymbol{N}}
\def\Pm{\boldsymbol{P}}
\def\Qm{\boldsymbol{Q}}
\def\Tm{\boldsymbol{T}}
\def\Sm{\boldsymbol{S}}
\def\Wm{\boldsymbol{W}}
\def\Vm{\boldsymbol{V}}
\def\Km{\boldsymbol{K}}
\def\Om{\boldsymbol{O}}
\def\e{\mathrm{e}}
\def\piconst{\textup{\textpi}}
\def\uppi{{\pi}}
\def\myvdots{\vbox{\baselineskip4\p@ \lineskiplimit\z@
		\hbox{.}\hbox{.}\hbox{.}}}
\def\sincos{\begin{bmatrix}\sin\theta & \cos\theta\end{bmatrix}\tran}
\def\customproof[#1]{\noindent\hspace{2em}{\itshape {#1}:}}
\begin{document}
	
	\title{Chaos and Order in Event-Triggered Control}
	
	\author{Gabriel de Albuquerque Gleizer and Manuel Mazo Jr. 
		\thanks{This work is supported by the European Research Council through the SENTIENT project (ERC-2017-STG \#755953).}
		\thanks{G.~de A.~Gleizer and M.~Mazo Jr. are with the Delft Center for Systems and Control,
			Delft Technical University, 2628 CD Delft, The Netherlands
			{\tt\small \{g.gleizer, m.mazo\}@tudelft.nl}}%
	}
	\maketitle
	
	\begin{abstract}
		
		Event-triggered control (ETC) is claimed to provide enormous reductions in sampling frequency when compared to periodic sampling, but little is formally known about its generated traffic. This work shows that ETC can exhibit very complex, even chaotic traffic, especially when the triggering condition is aggressive in reducing communications. First, we characterize limit traffic patterns by observing invariant lines and planes through the origin, as well as their attractivity. Then, we present abstraction-based methods to compute limit metrics, such as limit average and limit inferior inter-sample time (IST) of periodic ETC (PETC), with considerations to the robustness of such metrics, as well as measuring the emergence of chaos. The methodology and tools allow us to find ETC examples that provably outperform periodic sampling in terms of average IST. In particular for PETC, we prove that this requires aperiodic or chaotic traffic.
		
	\end{abstract}

	\section{INTRODUCTION}
	Since the seminal paper from \cite{tabuada2007event},  
	event-triggered control (ETC) has been considered a disruptive method for sampled-data control implementations over digital media. The astonishingly simple design and stability analysis methods proposed by Tabuada cast new light on the idea of aperiodic sampling, which had been studied since the 1950s \cite{hufnagel1958aperiodic} and gained renewed interest in the early 2000s \cite{aastrom2002comparison}. The idea behind ETC is natural: instead of sampling periodically, sample only when ``needed'' based on some significant event; therefore, massive reductions in communications, as well as in energy of battery-powered motes, can be achieved, enabling new control applications with cheap hardware, or larger networks of control systems. Unsurprisingly, immense interest followed, and a lot of effort was dedicated in the following years to design better event-triggering mechanisms \cite{wang2008event, girard2015dynamic} for e.g., perturbed systems, extend applications to output-feedback control \cite{heemels2012introduction}, or make implementations more practical, as the periodic ETC (PETC) from \cite{heemels2013periodic}, where events are checked periodically.
	
	While understanding of ETC's control performance and stability has reached a high level of maturity thanks not only to the aforementioned papers and their successors, but also to the hybrid-systems formalism of \cite{goebel2012hybrid}, the comprehension of ETC's sampling patterns and performance is severely lacking. In other words, how much communication savings can ETC achieve when compared to, e.g., periodic sampling? What is an ETC's average sampling time, or average \emph{inter-event time}? Most analyses in ETC papers are only concerned with a lower bound estimate of the \emph{minimum inter-event time} (MIET), to prove the absence of Zeno behavior. We argue that, while this is absolutely important to obtain, it is often conservative and does not prove that the ETC's performance is better than a well-designed periodic sampling time. Not surprisingly, most, if not all, ETC papers contain numerical simulations showing inter-event time trajectories and statistics of average inter-event time, to give evidence of ETC's practical relevance. 
	
	Only recently, effort has been conducted to model ETC's generated traffic, which we split here into two categories.\footnote{It is also worth mentioning the approach of \cite{linsenmayer2018performance}, which proposed an event-triggering mechanism that ensures given traffic criteria in terms of a token bucket model. Despite very interesting, we veer away from this approach because it is unclear whether adding conditions to enforce traffic patterns could actually degrade the sampling performance of the original mechanism.}  The first category is aimed at qualitative understanding of asymptotic properties of inter-event times of ETC \cite{postoyan2019interevent, rajan2020analysis, postoyan2022explaining}; in these papers, the studies are dedicated to two-dimensional linear time-invariant (LTI) systems, and some conditions are given to show when traffic converges to periodic sampling or oscillatory patterns. In particular, \cite{postoyan2022explaining} allows to approximately compute average inter-sample for such planar systems when the triggering parameters are sufficiently small. The second category aims at taming the highly variable inter-event times of ETC for scheduling purposes, and for that it relies on traffic models using finite-state abstractions under the framework of \cite{tabuada2009verification}: such models have been developed for continuous ETC for LTI systems in \cite{kolarijani2016formal, mazo2018abstracted}, for PETC in \cite{fu2018traffic, gleizer2020scalable}, and nonlinear systems in \cite{delimpaltadakis2020homogeneous, delimpaltadakis2020traffic}, while only in \cite{gleizer2020towards} longer-term traffic predictions are addressed. Based on this second category, we have recently developed tools to compute the smallest (across initial states) average inter-sample time (SAIST) of an LTI system under PETC \cite{gleizer2021hscc, gleizer2021computing}, by using weighted automata \cite{chatterjee2010quantitative} as abstractions.
	
	There are issues involved in both the qualitative and quantitative analyses in the present literature. On the quantitative side, the obtained metrics lack a sense of robustness: that is, a given PETC system may have a SAIST of, e.g., 1 time unit, but this may be only observed by a neglibible, 0-measure subset of initial conditions. If all other initial states converge to some other traffic pattern with higher SAIST, e.g., 3, this much higher value is clearly a more representative performance metric. However, ETC systems may not have stable traffic patterns, which hints on a problem of the qualitative side of the literature: the focus has been given to relatively simple, stable traffic patterns. Perhaps due to experiments with small triggering parameters, more complex traffic patterns have not been described or observed. We show in this work that ETC systems can exhibit chaotic traffic, and as such a stable representative traffic pattern may not be found. The emergence of chaotic traffic also forces us to carefully define robust metrics for ETC before attempting to compute them.
	
	The present work makes an attempt to expand the qualitative understanding of ETC's asymptotic traffic patterns and bridge it to the quantitative approach of \cite{gleizer2021hscc, gleizer2021computing}, focusing on LTI systems and a common class of quadratic triggering conditions \cite{heemels2013periodic}. For that, we first characterize limit metrics of interest, such as limit inferior and limit average, and observe that they are related to the asymptotic properties of the traffic. This is the starting point for our main contributions: (i) presenting limit behaviors of LTI ETC systems and methods to compute them, not limited to $\R^2$; (ii) classifying limit behaviors in terms of stable vs.~unstable, periodic vs.~aperiodic, orderly vs.~chaotic; and (iii) based on this classification, expanding the results from \cite{gleizer2021hscc, gleizer2021computing} for PETC to compute robust metrics. 
	We propose auxiliary concepts and obtain results that may be useful on their own right: (i) we show that the law of evolution of state samples can be regarded as a map of the projective space to itself, which allows us to conclude that stationary traffic patterns are always exhibited in CETC for odd-dimensional systems (Theorem~\ref{thm:topofixed}); (ii) we show that if a PETC that renders the origin globally asymptotically stable (GES) converges to a periodic traffic pattern, then this traffic pattern can be used as a (multi-rate) periodic sampling schedule (Prop.~\ref{prop:petchurwitz}) --- this does not necessarily happen to CETC; (iii) we provide a stability characterization for outputs of a system, when these outputs come from a finite set; and (iv) we present the notion of behavioral entropy (Def.~\ref{def:entropy}) as a measure of chaos of a system's set of output trajectories, how to compute this quantity in an abstraction (Theorem \ref{thm:valueofentropy}), and show that this quantity is an upper bound of the concrete system's (Prop.~\ref{prop:entropybounds}).
	
	This paper is organized as follows: §\ref{sec:prob} presents the basic ETC formulation, how the inter-event times can be computed, and the main problem statement. The qualitative side of the work, presenting limiting behaviors and their general properties is given in §\ref{sec:behaviors}, where we are able to establish conditions for which periodic patterns occur and the associated states that generate them. In doing that, we explore their local attractivity and the emergence of chaotic invariant sets. This paves the way for the quantitative side of this work in §\ref{sec:symbolic} using symbolic abstractions, where we properly define robust limit metrics for PETC taking chaos into consideration, provide methods to estimate PETC's behavioral entropy using the abstraction, establish when traffic patterns are not involved in chaos, and describe how to estimate the desired robust limit metrics. Numerical examples are given in §\ref{sec:numerical}, and a discussion and conclusions in §\ref{sec:conclusions}.
	
	\section{Mathematical Preliminaries}
	
	We denote by $\No$ the set of natural numbers including zero, $\N \coloneqq \No \setminus \{0\}$, $\N_{\leq n} \coloneqq \{1,2,...,n\}$, and $\R_+$ the set of non-negative reals.
	We denote by $|\xv|$ the norm of a vector $\xv \in \R^n$, but if $s$ is a sequence or set, $|s|$ denotes its length or cardinality, respectively. For a square matrix $\Am \in \R^{n \times n},$ $\lambda(\Am) \subset \C^n$ is the set of its eigenvalues, and $\lambda_i(\Am)$ is the $i$-th largest-in-magnitude. The complex conjugate of $z \in \C^n$ is denoted by $z^*$ . The set $\S^n$ denotes the set of symmetric matrices in $\R^n$. For $\Pm \in \S^n$, we write $\Pm \succ \O$ ($\Pm \succeq \O$) if $\Pm$ is positive definite (semi-definite); $\lambda_{\max}(\Pm)$ ($\lambda_{\min}(\Pm)$) denotes its maximum (minimum) eigenvalue. For a set $\Xs\subseteq\Omega$, we denote by $\cl(\Xs)$ its closure, $\partial\Xs$ its boundary, and $\bar{\Xs}$ its complement: $\Omega \setminus \Xs$. 
	We often use a string notation for sequences, e.g., $\sigma = abc$ reads $\sigma(1) = a, \sigma(2) = b, \sigma(3) = c.$ Powers and concatenations work as expected, e.g., $\sigma^2 = \sigma\sigma = abcabc.$ In particular, $\sigma^\omega$ denotes the infinite repetition of $\sigma$. An infinite sequence of numbers is denoted by $\{a_i\} \coloneqq a_0, a_1, a_2, ... .$ %
	We apply a function $f : \Xs \to \Ys$ to a set $\As \subseteq \Xs$ the usual way, $f(\As) \coloneqq \{f(\xv)  \mid \xv \in \As\}.$ %
	For a relation $\Rs \subseteq \Xs_a \times \Xs_b$, its inverse is denoted as $\Rs^{-1} = \{(x_b, x_a) \in \Xs_b \times \Xs_a \mid (x_a, x_b) \in \Rs\}$. 
	
	We say that an autonomous system $\dot{\xiv}(t) = f(\xiv(t))$ is globally exponentially stable (GES) if there exist $M < \infty$ and $b > 0$ such that every solution of the system satisfies $|\xiv(t)| \leq M\e^{-bt}|\xiv(0)|$ for every initial state $\xiv(0)$. 
	When needed to avoid ambiguity, we use $\xiv_{\xv}(t)$ to denote a trajectory from initial state $\xiv(0) = \xv.$
	
	\subsection{Chaos}
	
	Consider the map $f:\Xs \to \Xs$ and the discrete-time system (or recursion) $\xv(k+1) = f(\xv(k))$. A set $\Ys \subset \Xs$ is said to \emph{fixed} or \emph{invariant} if $f(\Ys) = \Ys$, \emph{forward invariant} if $f(\Ys) \subseteq \Ys$, and \emph{periodic} if there is some $m \in \N$ such that $f^m(\Ys) = \Ys$. %
	The \emph{forward orbit} of a point $\xv$ is $\Os(\xv) \coloneqq \{f^k(\xv) \mid k \in \N_0\}.$ Obviously, every forward orbit is forward invariant. 
	Whilst there are multiple slightly different definitions of chaos, we use the concept of \cite{robinson1999dynamical}, which relies on the notions of \emph{transitivity} and \emph{sensitivity to initial conditions.}

	\begin{defn}[Transitivity{{\cite[Sec.~2.5]{robinson1999dynamical}}}]\label{def:trans}
		A map $f:\Xs\to\Xs$ is said to be \emph{(topologically) transitive} on an invariant set $\Ys$ if the forward orbit of some point $p \in \Xs$ is dense in $\Ys$. From the Birkhoff Transitivity Theorem, this is equivalent to the following property: for every two open subsets $\Us$ and $\Vs$ of $\Ys$, there is a positive integer $n$ such that $f^n(\Us) \cap \Vs \neq \emptyset.$
	\end{defn}
	
	If $f$ is transitive, points starting arbitrarily close to each other can drift away but will come arbitrarily close back to each other after enough iterations.
	
	\begin{defn}[Sensitivity to initial conditions{{\cite[Sec.~3.5]{robinson1999dynamical}}}]
		A map $f:\Xs\to\Xs$, $\Xs$ being a metric space, is said to be \emph{sensitive to initial conditions} on an invariant set $\Ys \subseteq \Xs$ if there is an $r>0$ such that, for each point $\xv\in\Ys$ and for each $\epsilon>0$, there exists a point $\yv\in\Ys$ satisfying $d(\xv,\yv)<\epsilon$ and a $k \in \N$ with $d(f^k(\xv),f^k(\yv)) \geq r$.
	\end{defn}
	
	\begin{defn}[Chaos{{\cite[Sec.~3.5]{robinson1999dynamical}}}]
		A map $f:\Xs\to\Xs$, $\Xs$ being a metric space, is said to be \emph{chaotic on an invariant set} $\Ys$ provided (i) $f$ is transitive on $\Ys$, and (ii) $f$ is sensitive to initial conditions on $\Ys$.
	\end{defn}

	In case a chaotic system is additionally \emph{ergodic}\footnote{See \cite{robinson1999dynamical} for a rigorous definition of ergodicity. We skip the definition and present a simplified version of the Birkhoff Ergodic Theorem due to readability and space considerations.} the celebrated Birkhoff Ergodic Theorem is particularly useful when one is interested in limit average metrics:
	
	\begin{thm}[Birkhoff Ergodic Theorem \cite{robinson1999dynamical}]\label{thm:birk}
		Assume $f : \Xs \to \Xs$ is an ergodic function with ergodic measure $\mu$, and let $g : \Xs \to \R $ be a $\mu$-integrable function. Then, 
		$$\lim_{n\to\infty}\frac{1}{n+1} \sum_{i=1}^n g \circ f^i(x) = \int_\Xs g(x) \d\mu(x)$$ for $\mu$-almost every $x$.
	\end{thm}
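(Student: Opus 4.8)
The plan is to prove the two ingredients that the statement bundles together: (i) that the time averages $A_n g(x) \coloneqq \frac{1}{n}\sum_{i=1}^{n} g(f^i(x))$ converge for $\mu$-almost every $x$ to some $f$-invariant function, and (ii) that ergodicity forces this limit to be the constant $\int_\Xs g \,\d\mu$. The normalization by $n+1$ in the statement is asymptotically immaterial, so it suffices to analyze $A_n$. First I would introduce the upper and lower averages $\bar g(x) \coloneqq \limsup_n A_n g(x)$ and $\underline g(x) \coloneqq \liminf_n A_n g(x)$, observe that both are $f$-invariant (since $A_n g \circ f$ and $A_n g$ differ by a term of size $O(1/n)$), and reduce the theorem to showing $\bar g = \underline g$ $\mu$-a.e.\ together with $\int_\Xs \bar g \,\d\mu = \int_\Xs g \,\d\mu$. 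Since the paper defers the rigorous definition of ergodicity, I would take as the working characterization that $f$ is ergodic precisely when every $f$-invariant measurable function is $\mu$-a.e.\ constant.

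The key tool is the \emph{Maximal Ergodic Theorem}: with the convention $S_0 g \coloneqq 0$ and $S_n g \coloneqq \sum_{i=0}^{n-1} g\circ f^i$, set $F_N \coloneqq \max_{0\le n\le N} S_n g \ge 0$; then $\int_{\{F_N > 0\}} g \,\d\mu \ge 0$ for every $N$. I would establish this by Garsia's telescoping argument: since $F_N \circ f \ge (S_n g)\circ f$ and $S_{n+1}g = g + (S_n g)\circ f$, we get $g + F_N\circ f \ge S_{n+1} g$ for all $0 \le n \le N$, hence $g + F_N \circ f \ge \max_{1\le n\le N} S_n g$, which equals $F_N$ on $\{F_N > 0\}$. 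Therefore $g \ge F_N - F_N\circ f$ there, and integrating over $\{F_N>0\}$ while using $F_N \ge 0$ and $\int_\Xs (F_N\circ f)\,\d\mu = \int_\Xs F_N \,\d\mu$ (because $f$ preserves $\mu$) yields the asserted nonnegativity.

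With the maximal inequality secured, I would run the classical squeezing argument. For rationals $\alpha > \beta$, the set $E \coloneqq \{\underline g < \beta < \alpha < \bar g\}$ is $f$-invariant, so applying the maximal ergodic theorem to the restriction of $g - \alpha$ to $E$ (and symmetrically to $\beta - g$) and letting $N \to \infty$ via dominated convergence gives $\int_E g \,\d\mu \ge \alpha\,\mu(E)$ and $\int_E g \,\d\mu \le \beta\,\mu(E)$; since $\alpha > \beta$ this forces $\mu(E) = 0$. Taking the countable union over all rational pairs gives $\bar g = \underline g$ a.e., so $A_n g$ converges a.e.\ to an invariant limit $g^\ast$. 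A truncation plus dominated-convergence bookkeeping, passing to the limit in the identity $\int_\Xs A_n g\,\d\mu = \int_\Xs g\,\d\mu$, then yields $\int_\Xs g^\ast \,\d\mu = \int_\Xs g \,\d\mu$. Finally, ergodicity makes the invariant function $g^\ast$ equal $\mu$-a.e.\ to a constant $c$, and since $\mu$ is a probability measure $c = \int_\Xs g^\ast \,\d\mu = \int_\Xs g \,\d\mu$, which is exactly the claimed identity.

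The main obstacle is the Maximal Ergodic Theorem together with the integrability bookkeeping: the pointwise inequality $g \ge F_N - F_N\circ f$ must be justified exactly on $\{F_N > 0\}$, and passing from the finite-$N$ estimate to statements about $\bar g$ and $\underline g$ requires that the truncated functions $g-\alpha$ stay integrable and that the invariant set $E$ genuinely supports a subsystem to which the theorem applies. Once that inequality is in place, the squeeze and the ergodic-implies-constant step are routine.
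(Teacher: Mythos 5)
The paper does not actually prove this statement: it is quoted (in simplified form) directly from Robinson's textbook, with a footnote explicitly deferring both the definition of ergodicity and the proof to that reference. So there is no in-paper argument to compare yours against. On its own merits, your sketch is the standard proof of the Birkhoff Ergodic Theorem --- the Maximal Ergodic Theorem established by Garsia's telescoping trick, the squeeze over rational pairs $\beta<\alpha$ applied to the invariant sets $E=\{\underline g<\beta<\alpha<\bar g\}$, and the ergodicity-forces-constancy step at the end --- and it is correct as an outline. The two places you flag as delicate are indeed where the real work sits: (a) applying the maximal inequality to $g-\alpha$ restricted to the subsystem on $E$ requires that $E$ be genuinely $f$-invariant (which follows from the invariance of $\bar g$ and $\underline g$) and that $(g-\alpha)\mathbf{1}_E$ remain in $L^1(\mu)$, which holds because $\mu$ is a probability measure; and (b) the identity $\int_{\Xs} g^\ast\,\d\mu=\int_{\Xs} g\,\d\mu$ does not follow from dominated convergence alone for general $g\in L^1$ --- one needs a truncation to bounded $g$ plus an $L^1$-density/maximal-inequality argument, as you indicate. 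Your observation that the paper's normalization $\frac{1}{n+1}\sum_{i=1}^{n}$ is asymptotically equivalent to $\frac{1}{n}\sum_{i=1}^{n}$ is also correct. In short: nothing to fix, but nothing in the paper to measure it against either.
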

	
	As a consequence, if $f$ is ergodic, the time-average converges to the same value from almost every initial condition.

\subsection{Invariants of linear systems}\label{ssec:technicallemmas}

Most of the analysis of limit behaviors of (P)ETC on linear systems involve studying invariants of linear systems, their stability and relationship with quadratic cones. In this subsection we provide some definitions and results on this topic.

\begin{defn}[Homogeneous set] A set $\Qs \subseteq \R^n$ is called \emph{homogeneous} if $\xv \in \Qs \implies \lambda\xv \in \Qs, \forall \lambda \in \R \setminus \{0\}$. \end{defn}

For the next results, we borrow a few definitions from previous work \cite{gleizer2021computing} concerning square matrices.

\begin{defn}[Mixed matrix \cite{gleizer2021computing}] Consider a matrix $\Mm \in \R^{n \times n}$ and let $\lambda_i, i \in \N_{\leq n}$ be its eigenvalues sorted such that $\norm[\lambda_i] \geq \norm[\lambda_{i+1}]$ for all $i$. We say that $\Mm$ is \emph{mixed} if, for all $i < n$, $\norm[\lambda_i] = \norm[\lambda_{i+1}]$ implies that $\Im(\lambda_i) \neq 0$ and $\lambda_i = \lambda_{i+1}^*$.
\end{defn}

Mixed matrices are diagonalizable and do not have distinct eigenvalues of the same magnitude, with the exception of pairs of complex conjugate eigenvalues.

\begin{defn}[Matrix of irrational rotations \cite{gleizer2021computing}] A matrix $\Mm \in \R^{n \times n}$ is said to be of \emph{irrational rotations} if the arguments of all of its complex eigenvalues are irrational multiples of $\piconst$.
\end{defn}

The set of mixed matrices of irrational rotations is of full Lebesgue measure in the set of square matrices \cite{gleizer2021computing}, and as such these matrices can be considered generic, or non-pathological. 

Proposition \ref{prop:quadsubspace} presents a simple way to verify whether a linear subspace is a subset of a quadratic cone.	
\begin{prop}[\!\!\cite{gleizer2021hscc}]\label{prop:quadsubspace}
Let $\As$ be a linear subspace with basis $\vv_1, \vv_2,..., \vv_m$, and let $\Vm$ be the matrix composed of the vectors $\vv_i$ as columns. Let $\Qm \in \S^n$ be a symmetric matrix and define $\Qs_n \coloneqq \{\xv \in \R^n \mid  \xv\tran\Qm\xv \geq 0\}$, $\Qs_s \coloneqq \{\xv \in \R^n \mid  \xv\tran\Qm\xv > 0\}$ and $\Qs_e \coloneqq \{\xv \in \R^n \mid  \xv\tran\Qm\xv = 0\}$. Then, $\As \setminus \{\O\} \subseteq \Qs_n$ (resp.~$\Qs_s$ and $\Qs_e$) if and only if $\Vm\tran\Qm\Vm \succeq \O$ (resp.~$\Vm\tran\Qm\Vm \succ \O$ and $\Vm\tran\Qm\Vm = \O$).
\end{prop}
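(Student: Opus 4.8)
The plan is to reduce the statement about the subspace $\As$ to an equivalent statement about the $m \times m$ symmetric matrix $\Nm \coloneqq \Vm\tran\Qm\Vm$ by parametrizing $\As$ through its basis. First I would observe that since $\vv_1, \dots, \vv_m$ form a basis, the columns of $\Vm$ are linearly independent, so $\Vm$ has full column rank; consequently the linear map $\cv \mapsto \Vm\cv$ is a bijection from $\R^m$ onto $\As$, and in particular $\Vm\cv = \O$ if and only if $\cv = \O$. Every $\xv \in \As$ can thus be written uniquely as $\xv = \Vm\cv$, and the quadratic form transforms as $\xv\tran\Qm\xv = \cv\tran\Vm\tran\Qm\Vm\cv = \cv\tran\Nm\cv$.

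With this substitution the three membership conditions become conditions on $\Nm$ ranging over all nonzero $\cv \in \R^m$. For the semidefinite case, $\As\setminus\{\O\}\subseteq\Qs_n$ says exactly that $\cv\tran\Nm\cv \geq 0$ for every $\cv \neq \O$ (the case $\cv = \O$ being trivially satisfied), which is the definition of $\Nm \succeq \O$. The strict case is identical with $\geq$ replaced by $>$, yielding $\Nm \succ \O$. The bijection and the full-column-rank property are precisely what guarantee that the quantifier over nonzero $\xv \in \As$ matches the quantifier over nonzero $\cv$, so that no spurious or missing directions appear on either side of the equivalence.

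The only case requiring more than a direct definitional matching is the equality case $\As\setminus\{\O\}\subseteq\Qs_e$, i.e.\ $\cv\tran\Nm\cv = 0$ for all $\cv$. This does not immediately read off as $\Nm = \O$, but since $\Nm$ is symmetric one recovers every entry via the polarization identity: $\cv\tran\Nm\dv = \tfrac{1}{2}\big((\cv+\dv)\tran\Nm(\cv+\dv) - \cv\tran\Nm\cv - \dv\tran\Nm\dv\big) = 0$ for all $\cv,\dv$, whence $\Nm = \O$. The converse directions in all three cases are immediate, since $\Nm \succeq \O$ (resp.\ $\succ \O$, $= \O$) directly forces the corresponding sign of $\cv\tran\Nm\cv$ and hence of $\xv\tran\Qm\xv$. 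I expect this polarization step in the equality case to be the only genuine subtlety; everything else follows routinely once the change of variables $\xv = \Vm\cv$ is in place.
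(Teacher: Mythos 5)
Your proof is correct: the change of variables $\xv=\Vm\cv$ with $\Vm$ of full column rank makes the quantifier over $\As\setminus\{\O\}$ correspond exactly to the quantifier over $\cv\neq\O$, and the polarization identity correctly handles the equality case. The paper itself states this proposition as a citation to prior work without reproducing a proof, but your argument is the standard one underlying that result, so nothing further is needed.
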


\section{Event-triggered control and its traffic}\label{sec:prob}

Consider a closed-loop linear system
\begin{align}
\dot{\xiv}(t) &= \Am\xiv(t) + \Bm\Km\hat{\xiv}(t),\label{eq:plant}\\
\xiv(0) &= \hat{\xiv}(0) = \xv_0, \nonumber
\end{align}
which is a sampled-data state feedback with zero-order hold: the state $\xiv(t) \in \R^\nx$ is sampled at instants $t_i$, $\forall i \in \N$, and held constant for feedback, which makes the state signal used for control $\hat\xiv$ satisfy $\hat\xiv(t) = \xiv(t_i), \forall t \in [t_i, t_{i+1})$. Matrices $\Am, \Bm, \Km$ have appropriate dimensions.

In ETC, a \emph{triggering condition} determines the sequence of times $t_i$. In PETC, this condition is checked only periodically, with a fundamental checking period $h$. The sampling time $t_{i+1}$ hence assumes the following form:
\begin{equation}\label{eq:trigtime}
t_{i+1} = \inf\{t \in \Ts \mid t > t_i \text{ and } c(t-t_i, \xiv(t), \hat\xiv(t))\},
\end{equation}
where $c : \Ts \times \R^\nx \times \R^\nx \to \{\text{true},\text{false}\}$ is the \emph{triggering condition}, and $\Ts$, the set of checking times, is $\R_+$ for continuous ETC (CETC) and $h\N$ for PETC.
We consider the family of \emph{quadratic triggering conditions} from \cite{heemels2013periodic} with an additional maximum inter-event time condition below:
\begin{equation}\label{eq:quadtrig}
c(s, \xv, \hat\xv) \coloneqq 
\begin{bmatrix}\xv \\ \hat\xv \end{bmatrix}\tran
\!\Qm(s) \begin{bmatrix}\xv \\ \hat\xv\end{bmatrix} > 0
\text{ or } \ s \leq \bar{\tau}
\end{equation}
where $\Qm : \Ts \to \S^{2\nx}$ is the designed triggering matrix function (possibly constant), and $\bar{\tau}$ is the chosen maximum inter-event time.%
\footnote{Often a maximum inter-event time arises naturally from the closed-loop system itself (see \cite{gleizer2018selftriggered}). Still, one may want to set a smaller maximum inter-event time so as to establish a ``heart beat'' of the system.}
When $\Ts = \R_+$, we assume $\Qm$ is differentiable. %
Many of the triggering conditions available in the literature can be written as in Eq.~\eqref{eq:quadtrig}; the interested reader may refer to \cite{heemels2013periodic} for a comprehensive list of triggering and stability conditions.

We are interested in modeling the traffic generated by (P)ETC, i.e., understanding how the inter-sample times evolve from different initial conditions. As noted in \cite{gleizer2020scalable}, the inter-event time $t_{i+1} - t_{i}$ is solely a function of the $i$-th sample $\xiv(t_i)$. First, note that, $\xiv(t)$ is a function of $\hat\xiv(t) = \xiv(t_i)$ and the elapsed time $s \coloneqq t-t_i$:
\begin{gather}
\xiv(t_i+s) = \Mm(s)\xiv(t_i), \label{eq:intersamplestate}\\
\!\!\Mm(s) \coloneqq \Am_\d(s) + \Bm_\d(s)\Km \coloneqq \e^{\Am s} + \int_0^{s}\!\e^{\Am t}\d t \Bm\Km.\!\!\nonumber
\end{gather}
Now let $\tau : \R^\nx \to (0,\bar\tau] \cap \Ts$ be the inter-event time function.\footnote{We assume the triggering condition prevents Zeno behavior, which is standard in ETC design.} That is, for every state $\xv \in \R^\nx$, $\tau$ must return the value of $t_{i+1}-t_i$. It follows from Eqs.~\eqref{eq:trigtime}--\eqref{eq:intersamplestate} that
\begin{equation}\label{eq:time}
\begin{gathered}
	\tau(\xv) = \inf\left\{s \in \Ts \mid \xv\tran\Nm(s)\xv > 0 \text{ or } s=\bar{\tau}\right\}, \\
	\Nm(s) \coloneqq \begin{bmatrix}\Mm(s) \\ \I\end{bmatrix}\tran
	\Qm(s) \begin{bmatrix}\Mm(s) \\ \I\end{bmatrix},
\end{gathered}
\end{equation}
where $\I$ denotes the identity matrix. Thus, the event-driven evolution of sampled states can be compactly described by the recurrence
\begin{equation}\label{eq:samples}
\xiv(t_{i+1}) = \Mm(\tau(\xiv(t_i)))\xiv(t_i).
\end{equation}
Throughout the paper, we refer to the function above as the \emph{sample system}, using the shortened version
\begin{equation}\label{eq:samplemap}
\begin{aligned}
	\xv_{i+1} &= f(\xv_i), \\
	y_i &= \tau(\xv_i),
\end{aligned}
\end{equation}
where $\xv_i \coloneqq \xiv(t_i)$ and $f(\xv) \coloneqq \Mm(\tau(\xv))\xv$. The map is equipped with an output $y$ which is the associated inter-event time: for a traffic model, this is the output of interest. We shall denote the sequence of outputs from Eq.~\eqref{eq:samplemap} for a given initial state $\xv_0$ by $\{y_i(\xv_0)\}$. 

\subsection{Isochronous subsets in ETC}

We start our analysis of sampling behaviors of ETC by studying the subsets of $\R^\nx$ that generate the same inter-sample time. The first characteristic to be highlighted is that inter-sample times are insensitive to magnitude.

\begin{prop}[Adapted from \cite{mazo2010iss}]\label{prop:props}
The sample system \eqref{eq:samplemap} is homogeneous; more specifically, for all $\lambda \in \R \setminus \{0\}, \tau(\lambda\xv) = \tau(\xv)$ and $f(\lambda \xv) = \lambda f(\xv)$.
\end{prop}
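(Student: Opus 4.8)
The plan is to exploit the fact that the triggering decision in Eq.~\eqref{eq:time} is governed by a quadratic form in $\xv$, and that a nonzero scaling of the argument only rescales such a form by a strictly positive factor, leaving its sign — and hence the triggering decision — untouched.

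First I would record that $\Nm(s) \in \S^{\nx}$ for every $s \in \Ts$, since it equals $\begin{bmatrix}\Mm(s) \\ \I\end{bmatrix}\tran\Qm(s)\begin{bmatrix}\Mm(s) \\ \I\end{bmatrix}$, a congruence of the symmetric matrix $\Qm(s)$. Consequently the map $\xv \mapsto \xv\tran\Nm(s)\xv$ is a quadratic form, homogeneous of degree two: for any $\lambda \in \R\setminus\{0\}$ one has $(\lambda\xv)\tran\Nm(s)(\lambda\xv) = \lambda^2\,\xv\tran\Nm(s)\xv$.

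Next, I would substitute $\lambda\xv$ into the definition of $\tau$. Since $\lambda \neq 0$ forces $\lambda^2 > 0$, the inequality $(\lambda\xv)\tran\Nm(s)(\lambda\xv) > 0$ holds if and only if $\xv\tran\Nm(s)\xv > 0$. The remaining disjunct in the infimum, $s = \bar{\tau}$, does not involve the state at all and is therefore trivially unaffected by the scaling. Hence the feasible sets whose infima define $\tau(\lambda\xv)$ and $\tau(\xv)$ coincide, yielding $\tau(\lambda\xv) = \tau(\xv)$. The scaling property of $f$ then follows immediately from the linearity of the map $\Mm(s)$: using the identity just established,
\[
f(\lambda\xv) = \Mm(\tau(\lambda\xv))(\lambda\xv) = \lambda\,\Mm(\tau(\xv))\xv = \lambda f(\xv).
\]

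I do not expect a genuine obstacle here; the entire content is the sign-invariance of a quadratic form under nonzero scaling. The only point requiring a moment's care is that the infimum in Eq.~\eqref{eq:time} is taken over a set defined by a \emph{disjunction}, so I must verify that \emph{both} disjuncts transform correctly under scaling — which they do, the quadratic one by degree-two homogeneity and the $s=\bar{\tau}$ one by state-independence.
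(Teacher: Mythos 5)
Your proof is correct and follows essentially the same route as the paper's: both rest on the observation that the quadratic form in Eq.~\eqref{eq:time} scales by $\lambda^2 > 0$, so its sign and hence the triggering decision are unchanged, after which $f(\lambda\xv) = \lambda f(\xv)$ follows from linearity of $\Mm(\tau(\xv))$. Your explicit remark about the state-independent disjunct $s = \bar{\tau}$ is a small added care the paper leaves implicit, but it does not change the argument.
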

\begin{proof}
With respect to Eq.~\eqref{eq:time}, $\sign((\lambda\xv)\tran\Qm(s)(\lambda\xv)) = \sign(\lambda^2\xv\tran\Qm(s)\xv) = \sign(\xv\tran\Qm(s)\xv),$ hence $\tau(\lambda\xv) = \tau(\xv).$ With this, $f(\lambda\xv) = \Mm(\tau(\lambda\xv))\lambda\xv = \lambda\Mm(\tau(\xv))\xv = \lambda f(\xv).$
\end{proof}

This fact implies that the sequence $\{y_i(\xv)\}$ is equal to $\{y_i(\lambda\xv)\}$, for any $\lambda \neq 0.$ Hence, to determine whether ETC exhibits fixed (periodic) behavior, we need to verify which lines passing thorough the origin, or collections of lines, 
are invariant under $f$ or under a finite iterate of $f$. Hereafter we shall refer to lines that pass through the origin as \emph{o-lines}.

Let us first look in detail what are the subsets of $\R^\nx$ which share the same inter-event time:
\begin{defn}
Consider system \eqref{eq:samplemap}. We denote by $\Qs_s \subset \R^\nx$, the set of all states which trigger after $s$ time units, i.e.,
$$ \Qs_s \coloneqq \{\xv \in \R^\nx \mid \tau(\xv) = s\}. $$
We call $\Qs_s$ an \emph{isochronous subset}.\footnote{The concept of isochronous manifolds was introduced in ETC for nonlinear homogeneous systems in \cite{anta2011exploiting}. In our case, the isochronous subsets of CETC are (in general) $(\nx-1)$-dimensional subsets, but may not be manifolds. Whether they are manifolds or not is not relevant to our results.} 
\end{defn}
\begin{prop}\label{prop:isochronous}
Consider system \eqref{eq:plant}--\eqref{eq:quadtrig}. An isochronous subset $\Qs_s$ can be characterized as
\begin{enumerate}[i)]
\item If $\Ts = \R_+$ (CETC) and $s < \bar{\tau}$, $\Qs_s = \{\xv \in \R^n \mid \xv\tran\Nm(s)\xv = 0 \text{ and } \xv\tran\Nm(s')\xv \leq 0, \forall s' < s \text{ and }  \xv\tran\dot\Nm(s)\xv > 0\}$.
\item If $\Ts = h\N$ (PETC) and $s < \bar{\tau}$, $\Qs_s = \{\xv \in \R^n \mid \xv\tran\Nm(s)\xv > 0 \text{ and } \xv\tran\Nm(s')\xv \leq 0, \forall s' < s, s' \in h\N\}$.
\item $\Qs_{\bar\tau} = \{\xv \in \R^n \mid \xv\tran\Nm(s')\xv \leq 0, \forall s' < \bar\tau, s' \in \Ts\}$.
\end{enumerate}
\end{prop}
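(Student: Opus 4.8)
The plan is to prove each of the three characterizations by establishing both set inclusions, working directly from the definition of the inter-event time in Eq.~\eqref{eq:time}. Throughout, I would fix a state $\xv$ and introduce the scalar function $g_{\xv}(t) \coloneqq \xv\tran\Nm(t)\xv$, so that $\tau(\xv) = \inf\{s \in \Ts \mid g_{\xv}(s) > 0 \text{ or } s = \bar{\tau}\}$. Since $\Mm(s) = \e^{\Am s} + \int_0^{s}\!\e^{\Am t}\d t\,\Bm\Km$ is smooth in $s$ and $\Qm$ is assumed differentiable in the CETC case, $g_{\xv}$ is continuous on $\Ts$ (and differentiable when $\Ts = \R_+$); this regularity is what I would exploit.

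I would dispatch cases (iii) and (ii) first, as they follow directly from the structure of the triggering rule. For (iii), the clause ``$s=\bar{\tau}$'' forces a trigger at $\bar{\tau}$ regardless of $g_{\xv}$, so $\tau(\xv) = \bar{\tau}$ holds if and only if no earlier checking time triggers the quadratic condition, i.e. $g_{\xv}(s') \leq 0$ for every $s' < \bar{\tau}$ with $s' \in \Ts$ --- which is exactly the stated set, and the argument is identical for $\Ts = \R_+$ and $\Ts = h\N$. For the PETC case (ii), the index set $\Ts = h\N$ is discrete and well-ordered, so the infimum defining $\tau(\xv)$ is attained at its minimum. Hence $\tau(\xv) = s < \bar{\tau}$ if and only if the condition first holds at $s$, i.e. $g_{\xv}(s) > 0$, while it fails at every earlier instant, $g_{\xv}(s') \leq 0$ for all $s' < s$ in $h\N$; both inclusions are then immediate.

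The CETC case (i) is where the real work lies, because $\Ts = \R_+$ is a continuum and the triggering instant need not be reached by a strict inequality. For the ``$\supseteq$'' inclusion I would assume $g_{\xv}(s) = 0$, $g_{\xv}(s') \leq 0$ for all $s' < s$, and $\dot g_{\xv}(s) > 0$; the first two conditions rule out any trigger on $[0,s)$, while strict positivity of the derivative together with $g_{\xv}(s)=0$ yields $g_{\xv}(t) > 0$ on a right-neighbourhood of $s$, so the infimum equals $s$ and $\tau(\xv)=s$. For the harder ``$\subseteq$'' inclusion I would start from $\tau(\xv) = s < \bar{\tau}$ and use continuity: the set $\{t \mid g_{\xv}(t) > 0\}$ is open with infimum $s$, so $g_{\xv}(s') \leq 0$ for every $s' < s$, whereas a sequence $t_n \downarrow s$ with $g_{\xv}(t_n) > 0$ forces $g_{\xv}(s) \geq 0$ by continuity; combined with the left limit this gives $g_{\xv}(s)=0$.

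The main obstacle is recovering the \emph{strict} derivative condition $\dot g_{\xv}(s) > 0$ in the ``$\subseteq$'' direction. Difference quotients from the left (where $g_{\xv} \leq 0 = g_{\xv}(s)$) and from the right (along $t_n$ with $g_{\xv}(t_n) > 0$) only deliver $\dot g_{\xv}(s) \geq 0$. Strictness corresponds to the zero crossing at $s$ being transversal, and it can fail only at a tangential crossing, where $g_{\xv}(s)=0$ and $\dot g_{\xv}(s)=0$ yet $g_{\xv}$ still changes sign at $s$; the set of such states satisfies both $\xv\tran\Nm(s)\xv = 0$ and $\xv\tran\dot\Nm(s)\xv = 0$ and is therefore expected to form a lower-dimensional, non-generic subset of the isochronous set. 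I would accordingly either invoke transversality as the generic situation underlying the stated characterization, or restrict attention to the open, full-measure part of $\Qs_s$ on which the crossing is transversal --- this is the single point where the clean set equality requires care.
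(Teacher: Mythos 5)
Your argument follows the same direct route as the paper: the paper's entire proof of this proposition is the single sentence that it is ``a trivial manipulation of Eq.~\eqref{eq:time}, where in (i) we use the fact that $\Nm(s)$ is differentiable over $[0,\bar\tau)$,'' and your handling of (ii), (iii), and the $\supseteq$ inclusion of (i) is precisely the work that sentence is standing in for. The difficulty you isolate in the $\subseteq$ direction of (i) is genuine and is not resolved by the paper. From $\tau(\xv)=s$ one can only conclude $\xv\tran\dot\Nm(s)\xv \geq 0$: a tangential crossing, where $g_{\xv}(s)=0$, $\dot g_{\xv}(s)=0$, yet $g_{\xv}$ becomes positive immediately to the right of $s$ (e.g.\ cubic behaviour $(t-s)^3$), yields a state with inter-event time $s$ that the right-hand side of item (i) excludes, so as written the stated set is in general only a subset of $\Qs_s$. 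The paper never addresses this inside Proposition~\ref{prop:isochronous}; it only excludes tangential crossings later by \emph{assuming} $\xv\tran\Nm(s)\xv=0 \implies \xv\tran\dot\Nm(s)\xv>0$ as the hypothesis of Proposition~\ref{prop:cont}. Your proposed remedy --- invoking transversality as a standing genericity assumption, or restricting to the transversal part of $\Qs_s$ --- is therefore not a gap in your proof relative to the paper's; it makes explicit an assumption the paper leaves implicit in item (i). The remainder of your proposal is correct and complete.
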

\begin{proof}
This is a trivial manipulation of Eq.~\eqref{eq:time}, where in (i) we use the fact that $\Nm(s)$ is differentiable over $[0,\bar\tau)$.
\end{proof}

The isochronous subset $\Qs_{\bar\tau}$ is the intersection of an algebraic set with infinitely many semialgebraic sets for CETC; for PETC, it is the intersection of finitely many semialgebraic sets. We can extend the definition of isochronous subset to a sequence of inter-sample times:
\begin{defn}[Isosequential subset]\label{def:isosequential}
Consider the system \eqref{eq:samplemap}. The set $\Qs_{y\sigma}$, $y \in \Ts, \sigma \in \Ts^{m-1}$ for some $m\in\N$, is defined recursively as the set of states $\xv \in \R^\nx$ such that $\xv \in \Qs_{y_1},$ and $\Mm(y_1)\xv \in \Qs_{\sigma}$. By convention, $\Qs_{\epsilon} = \R^\nx,$ where $\epsilon$ denotes the empty sequence.
\end{defn}
As we can see, the set $\Qs_\sigma$ is also the intersection of (semi)algebraic sets as in the singleton case. 
We end this section with a result that simplifies the analysis for CETC under some special conditions.

\begin{prop}\label{prop:cont}
Consider system \eqref{eq:samplemap} and $\Ts = \R_+$ (CETC). If $\bar\tau = \inf\{s > 0 \mid \Nm(s) \succ \O\} < \infty$ and
$$ \forall \xv \in \R^\nx \setminus \{\O\}, s\in (0,\bar\tau], \xv\tran\Nm(s)\xv = 0 \implies \xv\tran\dot\Nm(s)\xv > 0 $$
then 
\begin{enumerate}[i)]
\item $f$ and $\tau$ are differentiable;
\item $\Qs_s = \{\xv \in \R^\nx \mid \xv\tran\Nm(s)\xv = 0\}, \forall s \in (0, \bar\tau).$
\end{enumerate}
\end{prop}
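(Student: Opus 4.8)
The plan is to reduce both claims to a scalar, transversality picture: for each fixed $\xv \neq \O$ consider $q_\xv(s) \coloneqq \xv\tran\Nm(s)\xv$, which is $C^1$ in $s$ because $\Nm$ is differentiable. The hypothesis says precisely that every zero of $q_\xv$ in $(0,\bar\tau]$ is a strict up-crossing, i.e. $q_\xv(s)=0 \Rightarrow \dot q_\xv(s) = \xv\tran\dot\Nm(s)\xv > 0$. I would first record a structural lemma: for each $\xv \neq \O$, the function $q_\xv$ has \emph{at most one} zero on $(0,\bar\tau]$. Indeed, two zeros $s_1 < s_2$ would both be up-crossings, forcing $q_\xv$ to be positive just after $s_1$ and negative just before $s_2$, hence to admit a down-crossing in between, i.e. a zero with non-positive slope, contradicting transversality.

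Next I would pin down where $q_\xv$ vanishes relative to $\tau(\xv)$. From $\bar\tau = \inf\{s:\Nm(s)\succ\O\}$ and continuity of $\Nm$ one gets $\Nm(\bar\tau) \succeq \O$, so $q_\xv(\bar\tau) \ge 0$ for every $\xv$, while the Zeno-free (uniform MIET) assumption gives $q_\xv \le 0$ on a fixed interval $(0,\underline{\tau})$. By the intermediate value theorem $q_\xv$ therefore has a zero in $(0,\bar\tau]$, which by the structural lemma is unique; denote it $\phi(\xv)$. Since this single zero is an up-crossing with $q_\xv<0$ immediately before it and no earlier sign change, $\phi(\xv)$ is exactly the first time $q_\xv$ turns positive, so $\tau(\xv) = \phi(\xv) \le \bar\tau$ and $q_\xv(\tau(\xv)) = 0$. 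Claim (ii) is then immediate: by Prop.~\ref{prop:isochronous}(i) any $\xv \in \Qs_s$ satisfies $q_\xv(s)=0$; conversely, if $q_\xv(s)=0$ for $s\in(0,\bar\tau)$, transversality plus the structural lemma make this the unique up-crossing, so $\tau(\xv)=s$. The auxiliary conditions $q_\xv(s')\le 0,\ \forall s'<s$ and $\dot q_\xv(s)>0$ appearing in Prop.~\ref{prop:isochronous}(i) thus become automatic, yielding $\Qs_s = \{\xv \in \R^\nx \mid \xv\tran\Nm(s)\xv = 0\}$.

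For (i) I would invoke the implicit function theorem on $G(\xv,s) \coloneqq \xv\tran\Nm(s)\xv$, which is $C^1$. At any $\xv_0 \neq \O$ we have $G(\xv_0,\tau(\xv_0))=0$ and $\partial_s G(\xv_0,\tau(\xv_0)) = \dot q_{\xv_0}(\tau(\xv_0)) > 0$, so there is a $C^1$ map $\phi$ on a neighborhood of $\xv_0$ with $G(\xv',\phi(\xv'))=0$ and $\phi(\xv_0)=\tau(\xv_0)$. Granting that this local root coincides with the inter-event time, $\tau=\phi$ is $C^1$, and then $f(\xv)=\Mm(\tau(\xv))\xv$ is differentiable as a product/composition of differentiable maps.

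The main obstacle is precisely promoting the local IFT root to the global first-crossing time uniformly over a neighborhood. Two things must be checked. First, no earlier trigger appears for nearby $\xv'$: since $q_{\xv_0}<0$ strictly on $(0,\tau(\xv_0))$, compactness of the relevant time interval together with continuity of $G$ and the uniform MIET bound give $q_{\xv'}\le 0$ on $(0,\phi(\xv'))$ for $\xv'$ close to $\xv_0$. Second, the cap at $\bar\tau$ must never bind near $\xv_0$: this is exactly where $\Nm(\bar\tau)\succeq\O$ is essential, since it forces $q_{\xv'}(\bar\tau)\ge 0$ and hence $\phi(\xv')\le\bar\tau$, ruling out the non-differentiable ``kink'' that would otherwise arise where $\tau$ saturates at $\bar\tau$. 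Handling these two points rigorously (via homogeneity to pass to the unit sphere and a compactness argument) is the delicate part of the proof; everything else reduces to the scalar transversality analysis.
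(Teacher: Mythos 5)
Your proposal is correct and follows essentially the same route as the paper's proof: a transversality analysis of the scalar function $s \mapsto \xv\tran\Nm(s)\xv$ (your $q_\xv$, the paper's $\phi_\xv$), with the same Bolzano/up-crossing contradiction to rule out earlier sign changes for part (ii), followed by the implicit function theorem for part (i). Your additional care in establishing existence of the zero, matching the local IFT root to the global first-crossing time, and handling the saturation at $\bar\tau$ fills in details the paper leaves implicit, but does not constitute a different approach.
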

\begin{proof}
We first prove (ii), which is a lemma to (i).

ii) Consider the function $\phi_{\xv}(s) = \xv\tran\Nm(s)\xv$, which is differentiable. We want to prove that, if for some $s$, $\phi_{\xv}(s)=0$, then all conditions from Prop.~\ref{prop:isochronous} (i) are satisfied; since $\dot\phi_{\xv}(s) > 0$ by assumption, we need to prove that $\phi_{\xv}(s') \leq 0, \forall s' < s$. Now, since $\phi_{\xv}(s) = 0 \implies \dot\phi_{\xv}(s) > 0,$ from continuity, it holds that $\phi_{\xv}(s^-) < 0$ for some $s^-<s$. For contradiction, assume $\phi_{\xv}(s') > 0$ for some $s' < s^-$. Then, from Bolzano's theorem there is some point $s'' \in (s', s^-)$ such that that $\phi_{\xv}(s'') = 0.$ One such $s''$ must have $\phi_{\xv}(s'')$ cross zero from positive to negative, which implies $\dot\phi_{\xv}(s'') \leq 0$, leading to a contradiction.

i) Now $\tau(\xv)$ is characterized by the implicit equation $\xv\tran\Nm(\tau)\xv = 0$. Therefore we can simply apply the implicit function theorem, whose condition ($\phi_{\xv}(s)=0\Rightarrow\dot\phi_{\xv}(s)\neq0$) is satisfied by ours.
\end{proof}

\begin{rem}
The condition in Proposition \ref{prop:cont} is equivalent, by the s-procedure, to the linear matrix inequality $\exists \lambda \in \R: \lambda\Nm(s) + \dot\Nm(s) \succ \O$. Note that it is trivially satisfied if $\dot\Nm(s) \succ \O$ for all $s\in [0,\bar\tau]$, which holds when the triggering function $\phi_{\xv}$ is monotonically increasing for all $\xv$.
\end{rem}

The condition in Proposition \ref{prop:cont} ensures that the triggering function crosses zero only once for each initial condition $\xv$, which in turn simplifies the isochronous subset description to a simple quadratic form and renders $f$ and $\tau$ continuous. As we will see, even when this continuity is observed, the behaviors generated by ETC can be extremely rich.

\subsection{Problem statement}\label{ssec:prob}

We are interested in quantifying the traffic usage of system \eqref{eq:plant}--\eqref{eq:quadtrig}, which involves studying the sample system \eqref{eq:samplemap}. Some candidate metrics are the following:
\begin{itemize}
\item $\text{Inf} \coloneqq \inf_{\xv\in\R^\nx} \tau(\xv)$;
\item $\text{Sup} \coloneqq \sup_{\xv\in\R^\nx} \tau(\xv)$;
\item $\text{InfLimInf} \coloneqq \inf_{\xv\in\R^\nx}\liminf_{i\to\infty} y_i(\xv)$;
\item $\text{SupLimSup} \coloneqq \sup_{\xv\in\R^\nx}\limsup_{i\to\infty} y_i(\xv)$;
\item $\text{InfLimAvg}\coloneqq\inf_{\xv\in\R^\nx}\liminf_{n\to\infty}\frac{1}{n+1}\sum_{i=0}^{n}y_i(\xv).$
\item $\text{SupLimAvg}\coloneqq\sup_{\xv\in\R^\nx}\limsup_{n\to\infty}\frac{1}{n+1}\sum_{i=0}^{n}y_i(\xv).$
\end{itemize}

The first two metrics are simply the minimal and maximal inter-event times that can be exhibited. The minimal is the one that has received most attention in the literature, mainly to prove absence of Zeno behavior for different triggering conditions. These metrics serve as worst- and best-case inter-event times and provide a basic information about how sample-efficient one given ETC system is. Inf is trivially calculated as $\text{Inf} = \inf\{s \in \Ts \mid \Nm(s) \nprec \O\}$, while Sup is a bit more complicated: $\text{Sup} = \min(\bar\tau, \inf\{s \in \Ts \mid \forall \xv \in \R^\nx \exists s'<s \ \xv\tran\Nm(s')\xv > \O\})$.
The last four metrics concern limit behaviors of the system. InfLimInf gives what is the minimal inter-sample time the system can exhibit as the number of samples goes to infinity: in other words, after transients on the sequence $y_i$ vanish. The symmetric case value is given by SupLimSup. Finally, InfLimAvg (SupLimAvg) gives the minimum (maximum) among initial states of average inter-sample time. Here, $\liminf$ ($\limsup$) is used to ensure that the value exists even if the sequence of averages does not converge.

We argue that the limit metrics are more informative to determine the performance of a sampling mechanism than the simpler Inf and Sup metrics. For instance, if the states $\xv$ associated to Inf are transient, in the sense that from almost all other initial states they are never visited, the Inf metric turns out to be very conservative; after a few samples, the typical inter-sample time of the system will be higher. InfLimInf gives the complementary information of what minimal inter-sample time can appear infinitely often. InfLimAvg informs about the average utilization rate. A disadvantage of these two metrics is that they can still capture exceptional behavior: suppose, for example, that a measure-zero set $\Xs \subset \R^\nx$ is invariant under \eqref{eq:samplemap} and it is associated to the InfLimInf or InfLimAvg of the system; moreover, suppose for every state $\xv \notin \Xs$, the trajectories $\xiv_{\xv}(t_i), i \in \N,$ never enter $\Xs$, but instead converge to some other subset with higher values of InfLimInf or InfLimAvg. Then, the metric will not reflect the dominant performance of the system. This information might still be useful, but a more robust version of these metrics is of interest. In any case, robust or not, we need a hint of how one could compute these metrics. This will allow us to properly define what robust should be in this context.

{\bf Problem Statement.} Given an ETC system, (i) identify its limit traffic patterns, (ii) characterize their robustness w.r.t.~small perturbations in the initial state, and (iii) compute the system's robust limit metrics.

\section{Qualitative analysis: limit behaviors in ETC}\label{sec:behaviors}

In this section we investigate the limit behaviors of the traffic generated by ETC. We first see that limit metrics are insensitive to transient behavior; then we look at some examples to classify the different limit behaviors that can be exhibited. 
In several cases, ETC traffic converges to a periodic sampling pattern, which is shown to be characterized by linear invariants. This characterization allows us to show that, if PETC stabilizes a periodic traffic pattern, then this traffic pattern can be used as a sampling schedule that guarantees GES of the system.

\subsection{Properties of limit metrics}

The following trivial result shows that limit metrics are insensitive to transient behavior. We focus on inferior metrics, as the superior counterparts follow similar reasoning.

\begin{prop}\label{prop:limmetrics}
Let $\{k_i\}$ be a sequence of real numbers and decompose it as $k_i = a_i + b_i$, where $b_i$ is the \emph{transient component}, i.e., it satisfies $\lim_{i\to\infty} b_i = 0$. Then,
\begin{enumerate}[(i)]
\item $\liminf_{i\to\infty}k_i = \liminf_{i\to\infty}a_i,$
\item $\liminf_{n\to\infty}\frac{1}{n+1}\sum_{i=0}^{n}k_i = \liminf_{n\to\infty}\frac{1}{n+1}\sum_{i=0}^{n}a_i.$
\end{enumerate}
\end{prop}

\begin{proof}
It is a property of $\liminf$ that $\liminf_{i\to\infty}(a_i + b_i) = \liminf_{i\to\infty}a_i + \liminf_{i\to\infty}b_i$ if either $\{a_i\}$ or $\{b_i\}$ converge. Thus, result (i) trivially holds. For item (ii), we only need to prove that the sequence $\{\frac{1}{n+1}\sum_{i=0}^{n}b_i\}$ converges and is equal to zero. For this, we apply the Stolz--Cesàro theorem:
$$ \liminf_{n\to\infty}b_i = 0 \leq \liminf_{n\to\infty}\frac{1}{n+1}\sum_{i=0}^{n}b_i \leq \limsup_{n\to\infty}b_i = 0, $$
which concludes the proof.
\end{proof}

\begin{cor}\label{cor:limavgperiodic}
Let $\{k_i\}$ be \emph{ultimately periodic}, i.e., $k_i = a_i + b_i$, $\lim_{i\to\infty} b_i = 0$ and $a_{i+M} = a_i$ for some $M \in \N_+$ and all $i$. Then,
\begin{enumerate}[(i)]
\item $\liminf_{i\to\infty}k_i = \min_{i<M}a_i,$
\item $\liminf_{n\to\infty}\frac{1}{n+1}\sum_{i=0}^{n}k_i = \frac{1}{M}\sum_{i=0}^{M-1}a_i.$
\end{enumerate}
\end{cor}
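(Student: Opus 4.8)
The plan is to invoke Proposition~\ref{prop:limmetrics}, which already absorbs the transient part $\{b_i\}$, so that both claims reduce to evaluating the corresponding quantities for the purely periodic sequence $\{a_i\}$. Concretely, part (i) of that proposition yields $\liminf_{i\to\infty}k_i = \liminf_{i\to\infty}a_i$, and part (ii) yields $\liminf_{n\to\infty}\frac{1}{n+1}\sum_{i=0}^{n}k_i = \liminf_{n\to\infty}\frac{1}{n+1}\sum_{i=0}^{n}a_i$. It then remains only to compute the $\liminf$ of a periodic sequence and the $\liminf$ of its Cesàro averages, for which periodicity does all the work.

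For (i), I would observe that since $a_{i+M}=a_i$, the sequence $\{a_i\}$ takes only the finitely many values $a_0,\dots,a_{M-1}$, each attained for infinitely many indices $i$. Consequently $\min_{i<M}a_i$ is a subsequential limit, while no value strictly below it occurs at all, so $\liminf_{i\to\infty}a_i=\min_{i<M}a_i$. For (ii), I would write $n+1 = qM + r$ with $0\le r<M$ and split the partial sum into $q$ complete periods plus an $r$-term remainder,
\begin{equation*}
\sum_{i=0}^{n}a_i = q\sum_{i=0}^{M-1}a_i + \sum_{i=0}^{r-1}a_i,
\end{equation*}
and then divide by $n+1=qM+r$. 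Letting $n\to\infty$ forces $q\to\infty$ while $r$ stays bounded in $\{0,\dots,M-1\}$, so the remainder term $\frac{1}{n+1}\sum_{i=0}^{r-1}a_i$ is $\bigO(1/q)$ and vanishes, while the main term converges to $\frac{1}{M}\sum_{i=0}^{M-1}a_i$. Since this is a genuine limit, the $\liminf$ coincides with it, establishing (ii).

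There is essentially no genuine obstacle here; the result is a routine corollary. The only points deserving a line of care are the claim in (i) that each periodic value recurs infinitely often (immediate from $a_{i+M}=a_i$) and the uniform bound on the remainder term in (ii). It is worth remarking explicitly that for $\{a_i\}$ the Cesàro average in (ii) actually converges, so the $\liminf$ in the statement is not strictly necessary for the periodic part itself; it is the transient $\{b_i\}$ that could in principle prevent convergence of $\{k_i\}$'s averages, and handling exactly that case is the role played by Proposition~\ref{prop:limmetrics}.
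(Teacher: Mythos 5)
Your proof is correct and follows exactly the route the paper intends: the corollary is stated without a separate proof precisely because it reduces, via Proposition~\ref{prop:limmetrics}, to the elementary computation of the $\liminf$ and Ces\`aro average of a purely periodic sequence, which you carry out correctly (finitely many values each recurring infinitely often for (i), and the complete-periods-plus-bounded-remainder decomposition for (ii)). Nothing is missing.
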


Proposition \ref{prop:limmetrics} implies that computing limit metrics of ETC is fundamentally a problem of finding its limit behaviors, ignoring transients. In particular, given Corollary \ref{cor:limavgperiodic}, if a sequence of inter-event times $y_i$ converges to a periodic pattern, then the limit metrics are solely functions of the periodic component. This motivates us to study fixed and periodic solutions of \eqref{eq:samplemap}; for example, if some $y$ is a recurring pattern of \eqref{eq:samplemap}, then there must be a subset of $\Qs_y$ that is invariant. This is done in §\ref{ssec:invariants}. Before that, we investigate some examples to understand what are the possible limit behaviors exhibited by ETC.

\subsection{An illustrative example}\label{ssec:example}

Consider system \eqref{eq:plant}--\eqref{eq:quadtrig} with $\nx=2$. In this case, an o-line is uniquely defined by the angle $\theta \coloneqq \ang(\xv) \coloneqq \arctan{x_1/x_2} \in [-\pi/2, \pi/2)$. Using the coordinate $\theta$ and identifying points along an o-line (that is, regarding any point along an o-line as the same), the sample system \eqref{eq:samplemap} becomes
\begin{equation}\label{eq:thetamap}
\begin{aligned}
\theta_{i+1} &= \tilde{f}(\theta_i) \coloneqq \ang{\left(f(\sincos)\right)}, \\
y_i &= \tilde\tau(\theta) \coloneqq \tau\big(\sincos\big).
\end{aligned}
\end{equation}
The map $\tilde{f}$ can be seen as a map on the unit circle. An analysis of system \eqref{eq:thetamap} has been conducted in \cite{rajan2020analysis}, aiming at finding fixed points or the absence of them. In the cases studied in \cite{rajan2020analysis}, when there was a fixed point, there was always a stable fixed point. In the next example we show that this is not always true, and investigate the many possible behaviors that ETC traffic exhibits.

\begin{example}\label{ex:r2}
Consider system \eqref{eq:plant}--\eqref{eq:quadtrig} with
\begin{equation}\label{eq:example2d}
\begin{gathered}
\Am = \begin{bmatrix}0 & 1 \\ -2 & 3\end{bmatrix}, \ \Bm = \begin{bmatrix}0 \\ 1\end{bmatrix}, \\
c(s,\xv,\hat\xv) = |\xv - \hat\xv| > \sigma|\xv|,
\end{gathered}
\end{equation}
where $\sigma \in (0,1)$ is the triggering parameter. This is the seminal triggering condition of \cite{tabuada2007event}, which can be put in the form \eqref{eq:quadtrig} with sufficiently large $\bar\tau$. The graphs of $\tilde{f}$ and $\tau(\cdot)$, for CETC ($\Ts = \R_+$) are given for four cases:

\begin{enumerate}
\item $\Km = \begin{bmatrix}0 & -5\end{bmatrix}, \sigma=0.2$: Fig.~\ref{fig:ex1c1}. This map is invertible, orientation-preserving\footnote{A map $f: \Xs \to \Xs$ is said to be \emph{orientation-preserving} if its Jacobian $J_f$ satisfies $\det(J_f(\xv)) > 0$ for all $\xv \in \Xs$.}, and has no fixed points.  
\item $\Km = \begin{bmatrix}0 & -6\end{bmatrix}, \sigma=0.32$: Fig.~\ref{fig:ex1c2}. This map is no longer invertible. It has one unstable fixed point near $\theta=-1.3$ and one stable fixed point near $\theta=-0.6$.
\item $\Km = \begin{bmatrix}0 & -6\end{bmatrix}, \sigma=0.5$: Fig.~\ref{fig:ex1c3}. This map has two unstable fixed points, but a stable period-4 solution as indicated by the cobweb diagram.
\item $\Km = \begin{bmatrix}0 & -6\end{bmatrix}, \sigma=0.6$: Fig.~\ref{fig:ex1c4}. This map has no stable fixed points or orbits, and exhibits chaotic behavior. By inspection of the graph, the system has as a minimal set\footnote{A minimal set is an invariant set which contains no proper subsets that are also invariant.} the interval $[-1.07, -0.42]$, which contains the maximum inter-sample time $\bar\tau \approxeq 0.76$, so SupLimSup = Sup $\approxeq$ 0.76.
\end{enumerate}

\begin{figure}
\begin{center}
\input{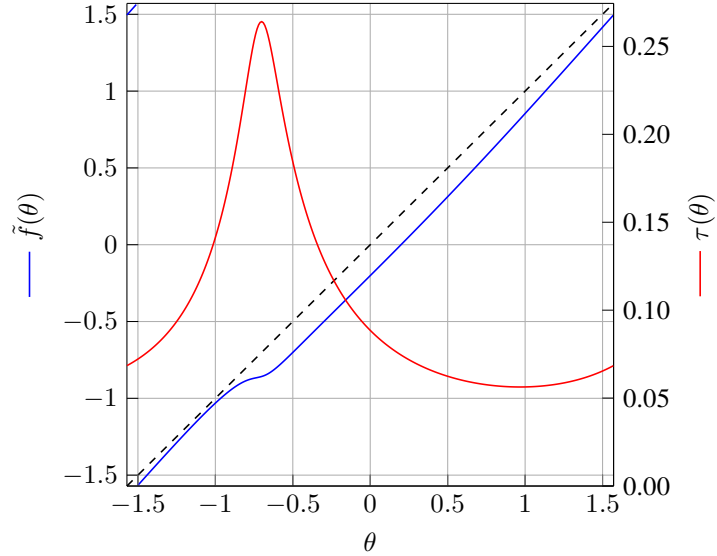}
\caption{\label{fig:ex1c1} Map $\tilde{f}$ and inter-event time $\tau$ for case 1 of Example \ref{ex:r2}.}
\end{center}
\end{figure}

\begin{figure*}[t!]
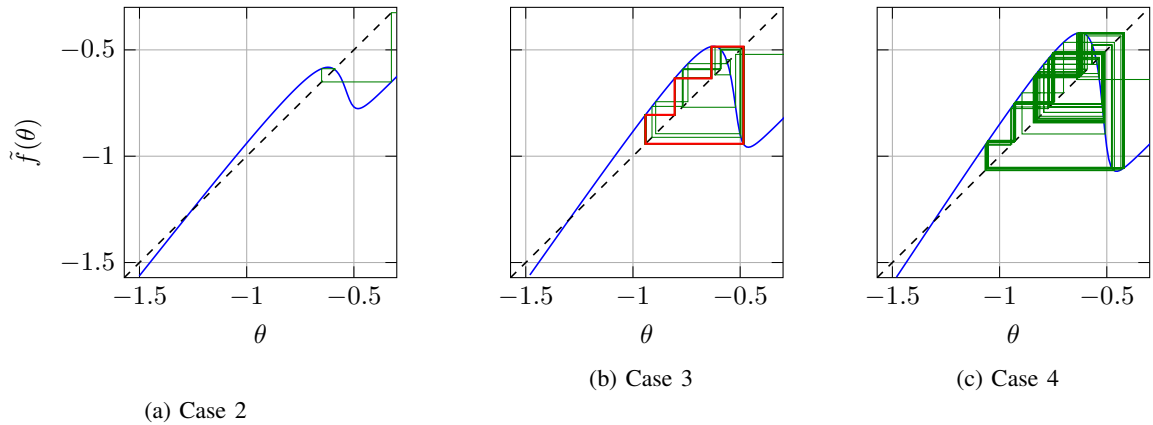

\begin{subfigure}[t]{0.32\linewidth}
\centering
\input{figures/ex1_case2.tex}\hspace{-2em}
\caption{Case 2}
\label{fig:ex1c2}
\end{subfigure}~~~~~
\begin{subfigure}[t]{0.32\linewidth}
\centering
\input{figures/ex1_case3.tex}
\caption{Case 3}
\label{fig:ex1c3}
\end{subfigure}\hspace{-1.2em}
\begin{subfigure}[t]{0.32\linewidth}
\centering
\input{figures/ex1_case4.tex}
\caption{Case 4}
\label{fig:ex1c4}
\end{subfigure}
\caption{Maps $\tilde{f}$ for Example \ref{ex:r2}, along with cobweb diagrams of solutions of \eqref{eq:thetamap} starting from $\theta_0 = 0.$  A stable orbit for Case 3 is highlighted in red.}
\end{figure*}

Finally, notice that all these maps are differentiable, but this is not always the case, as has been observed in \cite{rajan2020analysis}. In particular, it is almost never the case for PETC ($\Ts=h\N$). One example is shown in Fig.~\ref{fig:ex1c5}, for $\Km = \begin{bmatrix}0 & -6\end{bmatrix}, \sigma=0.32$ (like case 2) and $h=0.05$. Different from the CETC case, its fixed points are unstable and it exhibits chaos.

\begin{figure}
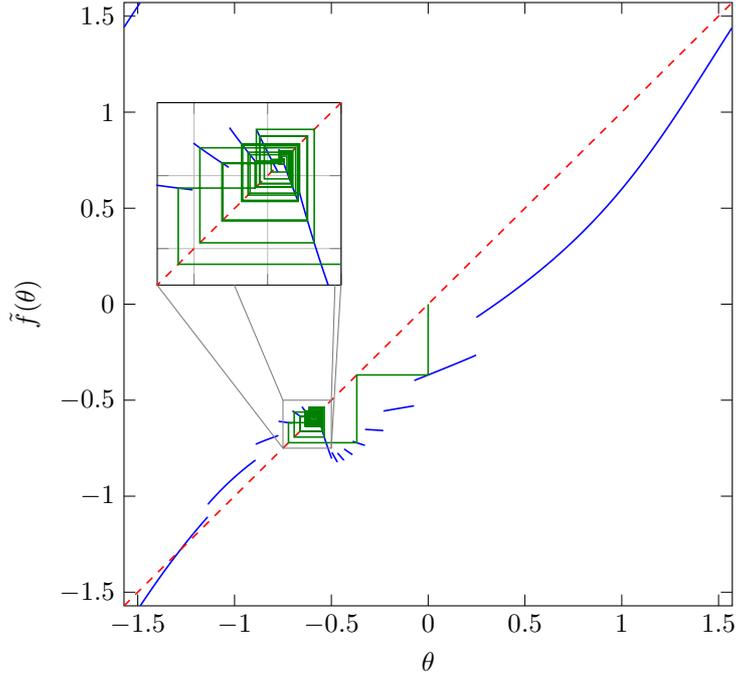

\begin{center}
\begin{tikzpicture}[spy using outlines={rectangle, magnification=5, connect spies}]

\begin{axis}[
scale only axis,
height=0.5\linewidth,
width=0.5\linewidth,
x grid style={white!69.0196078431373!black},
xlabel={$\theta$},
xmin=-1.571, xmax=1.571,
xtick style={color=black},
y grid style={white!69.0196078431373!black},
ylabel={$\tilde{f}(\theta)$},
ymin=-1.571, ymax=1.571,
ytick style={color=black}
]

\input{figures/ex1_case2_PETC_data.tex}

\draw[draw=gray] (-0.75, -0.75) rectangle (-0.5, -0.5);

\coordinate (ul) at (-0.75, -0.5);
\coordinate (ur) at (-0.5, -0.5);
\coordinate (bl) at (-0.75, -0.75);
\coordinate (br) at (-0.5, -0.75);

\coordinate (pt) at (axis cs:-1.4,0.1);

\end{axis}

\begin{axis}[
	at=(pt),
	xmin=-0.75,xmax=-0.5,
	ymin=-0.75,ymax=-0.5,
	width = 4cm,
	height=4cm,
	xmajorgrids, ymajorgrids,
	axis background/.style={fill=white},
	tick label style={font=\footnotesize} ,
	yticklabels={},
	xticklabels={},
	]
	
	\input{figures/ex1_case2_PETC_data.tex}
	
	\coordinate (ul2) at (-0.75, -0.5);
	\coordinate (ur2) at (-0.5, -0.5);
	\coordinate (bl2) at (-0.75, -0.75);
	\coordinate (br2) at (axis cs:-0.5, -0.75);
\end{axis}

	\coordinate (r) at (intersection of bl2--br2 and ur--ur2);
	\coordinate (l) at (intersection of bl2--br2 and ul--ul2);
	\draw[draw=gray] (br) -- (br2);
	\draw[draw=gray] (bl) -- (bl2);
	\draw[draw=gray] (ur) -- (r);
	\draw[draw=gray] (ul) -- (l);

\end{tikzpicture}
\caption{\label{fig:ex1c5} Map $\tilde{f}$ for the PETC implementation of case 2 of Example \ref{ex:r2}, with $h=0.05$, along with a cobweb diagram of a solution of \eqref{eq:thetamap} starting from $\theta_0 = 0$.}
\end{center}
\end{figure}
\end{example}

\begin{rem}
Invertible orientation-preserving maps on the circle have been extensively studied in the field of dynamical systems \cite{demelo2012one}, and they have an attribute called \emph{rotation number}. 
When the rotation number is rational $p/q$, $p$ and $q$ coprime, all solutions converge to a periodic orbit of period $q$. When it is irrational, all solutions are quasi-periodic: oscillatory, but the same point is never visited twice.
In the latter case, if $\tilde{f}$ is twice continuously differentiable, it is topologically conjugate to an irrational rotation $g(\theta) = (\theta + r\uppi \mod \uppi) - \uppi/2,$ which is ergodic and its orbit is dense in $[-\uppi/2, \uppi/2)$. Hence, InfLimInf = Inf, and InfLimAvg = SupLimAvg can be obtained to arbitrary precision through simulations from any initial condition.
\end{rem}

\subsection{Invariant isosequential sets in ETC}\label{ssec:invariants}

Example \ref{ex:r2} illustrates the complex behavior that can emerge in ETC traffic. Nonetheless it becomes apparent that obtaining fixed or periodic patterns is a fundamental step in the traffic characterization. The first thing we want is a computational or analytical method to determine fixed and periodic patterns. Then, we want to characterize their local stability.

In \cite{gleizer2021computing}, it has been shown that periodic patterns can be characterized by linear invariants. 
\begin{thm}\label{thm:verifycycle}(\!\!\cite{gleizer2021computing})	Consider system \eqref{eq:samplemap}, let $\sigma \coloneqq y_1y_2...y_m$ be a sequence of $m$ outputs. Denote by $\Mm_\sigma \coloneqq \Mm(y_m)\cdots\Mm(y_2)\Mm(y_1)$. (i) If $\Mm_\sigma$ is nonsingular and there exists a linear invariant $\As$ of $\Mm_\sigma$ such that $\As \setminus \{\O\} \subseteq \Qs_\sigma$, then $\sigma^\omega$ is a possible output sequence of system \eqref{eq:samplemap}. 
Moreover, if (ii) $\Mm_\sigma$ is additionally mixed and of irrational rotations, then $\sigma^\omega$ being an output sequence of system \eqref{eq:samplemap} implies that there exists a linear invariant $\As$ of $\Mm_\sigma$ such that $\As \subseteq \cl(\Qs_\sigma).$ 
\end{thm}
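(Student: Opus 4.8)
The plan is to prove the two implications separately: part (i) is a direct orbit construction, while part (ii) requires a spectral/asymptotic analysis of $\Mm_\sigma$ together with homogeneity.

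For part (i), I would take any nonzero $\xv_0 \in \As \setminus \{\O\}$. Since $\xv_0 \in \Qs_\sigma$, unrolling Definition~\ref{def:isosequential} gives $\tau(\xv_0) = y_1$, $\tau(f(\xv_0)) = y_2, \ldots$, so that $f^j(\xv_0) = \Mm(y_j)\cdots\Mm(y_1)\xv_0$ for $j \le m$ and the first $m$ outputs are exactly $\sigma$; in particular $f^m(\xv_0) = \Mm_\sigma \xv_0$. Because $\As$ is invariant under $\Mm_\sigma$ and $\Mm_\sigma$ is nonsingular, $\Mm_\sigma \xv_0 \in \As \setminus \{\O\} \subseteq \Qs_\sigma$, so the next block of $m$ outputs is again $\sigma$. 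Iterating (formally by induction on the block index) shows that the output sequence emitted from $\xv_0$ is $\sigma^\omega$, which is the claim.

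For part (ii), I assume $\sigma^\omega$ is realized from some $\xv_0 \neq \O$, so the iterates $\xv_k \coloneqq \Mm_\sigma^k \xv_0$ all lie in $\Qs_\sigma$ and are nonzero. Since $\Mm_\sigma$ is mixed it is diagonalizable, so I write $\xv_0 = \sum_i c_i \vv_i$ in an eigenbasis and $\xv_k = \sum_i c_i \lambda_i^k \vv_i$. Set $\rho \coloneqq \max\{|\lambda_i| : c_i \neq 0\}$, which is strictly positive because $\xv_k \neq \O$ for all $k$. The mixed hypothesis forces the eigenvalues attaining magnitude $\rho$ (with $c_i \neq 0$) to be either a single real eigenvalue or a single complex-conjugate pair; let $\As$ be the corresponding real invariant subspace of $\Mm_\sigma$ (the eigenline, resp.\ the real span of the real and imaginary parts of the complex eigenvector), which is $1$- or $2$-dimensional. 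The key step is the normalization $\xv_k/\rho^k$: all modes with $|\lambda_i| < \rho$ are damped out, so every accumulation point of $\{\xv_k/\rho^k\}$ lies in $\As$.

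I then use homogeneity to pass from these limit points to all of $\As$. By Proposition~\ref{prop:props}, $\tau$ is scale-invariant, hence membership in each $\Qs_{y_i}$ and in $\Qs_\sigma$ is scale-invariant, so $\Qs_\sigma$ and therefore $\cl(\Qs_\sigma)$ is homogeneous; moreover $\xv_k/\rho^k \in \Qs_\sigma$. In the real-eigenvalue case the normalized orbit accumulates on nonzero points of the eigenline, so that line lies in $\cl(\Qs_\sigma)$ and homogeneity upgrades this to all of $\As$. In the complex case $\xv_k/\rho^k = 2\Re(c\,\e^{\imag k\theta}\vv) + o(1)$ with $\theta/\uppi$ irrational; since $\theta/(2\uppi)$ is then irrational, $k\theta \bmod 2\uppi$ is equidistributed, so the normalized orbit is dense on a nondegenerate centered ellipse spanning $\As$ (the real and imaginary parts of a genuinely complex eigenvector being $\R$-linearly independent). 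Every point of that ellipse is thus a limit of points of $\Qs_\sigma$, hence in $\cl(\Qs_\sigma)$, and taking the cone of scalings gives $\As \setminus \{\O\} \subseteq \cl(\Qs_\sigma)$; the origin is a limit of scalings of any orbit point, so $\As \subseteq \cl(\Qs_\sigma)$. The main obstacle is precisely this complex subcase: one must invoke the irrational-rotation hypothesis to show the projected orbit is dense on the whole ellipse (so no direction of $\As$ is missed) and then combine density with homogeneity and closedness to cover the full $2$-dimensional subspace rather than the one-parameter curve the orbit traces. Equally delicate is using \emph{mixed} to guarantee the dominant subspace is exactly $1$- or $2$-dimensional; without ruling out two independent real modes of equal magnitude, the accumulation set need not be a linear subspace and $\As$ would fail to be an invariant of $\Mm_\sigma$.
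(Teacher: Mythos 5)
Your proof is correct. Note that the paper itself gives no proof of Theorem~\ref{thm:verifycycle} --- it is imported from \cite{gleizer2021computing} --- but your argument reconstructs exactly the approach that reference (and this paper's supporting Lemma~\ref{lem:conestable} in the appendix) relies on: part (i) is the direct orbit construction via invariance of $\As$ and homogeneity of $\Qs_\sigma$, and part (ii) normalizes the iterates $\Mm_\sigma^k\xv_0$ so that subdominant modes vanish, uses the mixed hypothesis to make the dominant excited subspace one- or two-dimensional, and uses the irrational-rotation hypothesis plus equidistribution and homogeneity to cover all of $\As$ in the complex case. No gaps.
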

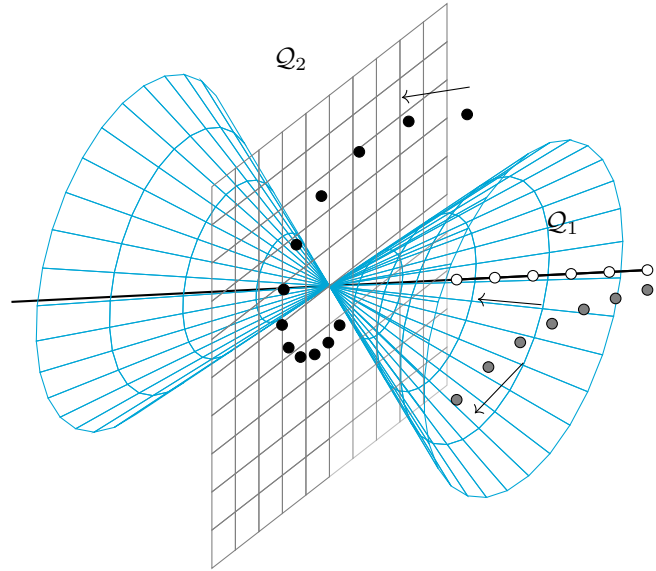
\begin{figure}
\centering
\begin{tikzpicture}
	\begin{axis}[
		hide axis,
		width=0.8\linewidth,
		domain=-1:1,
		y domain=0:-2*pi,
		xmin=-1.5, xmax=1.5,
		ymin=-1.5, ymax=1.5, zmin=-1.2,
		samples=10,
		samples y=30,
		]
		\addplot3[domain=-1.5:1.5, samples y=1, thick]
		({x}, {0.2*x}, {0.2*x});
		
		\addplot3[mesh, domain=0:1, tudcyan, samples=5]
		({x},{1.1*x*cos(deg(y))},{1.1*x*sin(deg(y))});
		\addplot3[mesh, tudcyan, domain=-1:0, y domain = pi/5:3*pi/2, samples y=20, samples=5,]
		({x},{1.1*x*cos(deg(y + pi))},{1.1*x*sin(deg(y + pi))});
		
		\node[draw=none] at (1.2,0,0.7) (q1) {$\Qs_1$};
		\node[draw=none] at (-0.2,0,1.5) (q2) {$\Qs_2$};
		
		\addplot3[domain=0.6:1.5, samples y=1, thick]
		({x}, {0.2*x}, {0.2*x});
		
		\addplot3[only marks, domain=0.6:1.5, samples y=1, samples=6, mark=*,mark options={fill=white}]
		({x}, {0.2*x}, {0.2*x});
		\addplot3[only marks, domain=0.6:1.5, samples y=1, samples=7, mark=*,mark options={fill=gray}]
		({x}, {0.2*x}, {0.2*x - exp(-2*(x-0.5))});
		
		\addplot3[mesh, domain=-1.3:1.3, y domain=0:1.3, samples=11, samples y=6, thin, gray]
		({0}, {x}, {y});
		\addplot3[mesh, domain=-1.3:0, y domain=-1.3:0, samples=6, samples y=6, thin, gray]
		({0}, {x}, {y});
		\addplot3[mesh, domain=0:1.3, y domain=-1.3:0, samples=6, samples y=6, thin, gray, opacity=0.5]
		({0}, {x}, {y});
		
		\addplot3[only marks, domain=pi/4:1.6*pi, samples y=1, samples=12, mark=*]
		({0.7*exp(-x)}, {1.5*cos(deg(x))*exp(-0.3*(x))}, {1.5*sin(deg(x))*exp(-0.3*(x))});
		
		\draw[->] (1,0.2,0) -- (0.7, 0.14, 0);
		\draw[->] (1,0,-0.3) -- (0.75, 0, -0.7);
		\draw[->] (0.35,0.8,1.05) -- (0, 0.8, 0.9);
		
	\end{axis}
\end{tikzpicture}
\caption{\label{fig:modesandcones} Illustration of Theorem \ref{thm:verifycycle} in $\R^3$. The blue cone splits $\R^3$ into $\Qs_1$ and $\Qs_2$ the line is an invariant of $\Mm(1)$ and the plane is an invariant of $\Mm(2).$ Points indicate distinct sample trajectories $\{\xv_i\}$, with the arrows indicating progress of time.}
\end{figure}

According to Theorem \ref{thm:verifycycle}, ETC exhibits a periodic sampling pattern whenever a linear invariant of the corresponding linear system is contained in the associated isosequential subset; in fact, the set $\As \setminus \{\O\}$ is a periodic set (with period $m$) of $f$. %
An illustration for a PETC system with $\nx=3$ and $\bar{k}=2$ is given in Fig.~\ref{fig:modesandcones}: because an invariant of $\Mm(1)$ is a subset of $\Qs_1$, we know that $1^\omega$ is a sampling pattern exhibited by the system; likewise with $\Mm(2)$. %
The corollary given below (see the proof in the Appendix) states that in general this invariant is an o-line (or \emph{o-plane}, a plane through the origin), and we have an if-and-only-if condition.
\begin{cor}\label{cor:lineandplane}
Given the premises of Theorem \ref{thm:verifycycle}, assume (i) $\Mm_\sigma$ is nonsingular, mixed, and of irrational rotations, and that (ii) for every linear invariant $\As$ of $\Mm_\sigma$, $\As \subseteq \cl(\Qs_\sigma) \implies \As \setminus \{\O\} \subseteq \Qs_\sigma$. Then $\sigma^\omega$ is a possible output sequence of system \eqref{eq:samplemap} if and only if there exists an o-line or o-plane $\As$ invariant of $\Mm_\sigma$ such that $\As \setminus \{\O\} \subseteq \Qs_\sigma.$
\end{cor}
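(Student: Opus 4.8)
The plan is to prove the two implications separately, leaning on Theorem~\ref{thm:verifycycle} for the analytic content and on the spectral structure of mixed matrices for the geometric refinement that upgrades an arbitrary linear invariant to an o-line or o-plane.

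The sufficiency direction ($\Leftarrow$) is immediate and uses none of the corollary's extra hypotheses. An o-line or o-plane is in particular a linear invariant $\As$ of $\Mm_\sigma$, so if $\As\setminus\{\O\}\subseteq\Qs_\sigma$, then Theorem~\ref{thm:verifycycle}(i)---which only requires $\Mm_\sigma$ to be nonsingular---directly yields that $\sigma^\omega$ is a possible output sequence of system~\eqref{eq:samplemap}.

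For the necessity direction ($\Rightarrow$), I would start from the second part of Theorem~\ref{thm:verifycycle}: since $\Mm_\sigma$ is mixed and of irrational rotations, the fact that $\sigma^\omega$ is an output sequence produces a nonzero linear invariant $\As$ of $\Mm_\sigma$ with $\As\subseteq\cl(\Qs_\sigma)$ (nonzero because the realizing trajectory starts from a nonzero state). This $\As$ may have any dimension, so the core of the argument is to extract from it a minimal invariant subspace that is an o-line or an o-plane. Here I would invoke the structure of mixed matrices: being diagonalizable, $\Mm_\sigma$ restricted to $\As$ is again diagonalizable, and since equal-magnitude eigenvalues occur only as complex-conjugate pairs (so there are no repeated real eigenvalues and no higher-dimensional real block tied to a single magnitude), every real invariant subspace decomposes as a direct sum of one-dimensional real eigenspaces---o-lines, from real eigenvalues---and two-dimensional real subspaces spanned by the real and imaginary parts of a complex eigenvector---o-planes, from conjugate pairs. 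Concretely, I would complexify $\As$, observe that the resulting complex invariant subspace is conjugation-invariant and a sum of eigenlines, and then read off this real decomposition.

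Picking any single summand $\As'$ gives an o-line or o-plane that is itself a linear invariant of $\Mm_\sigma$ and satisfies $\As'\subseteq\As\subseteq\cl(\Qs_\sigma)$. Finally, hypothesis~(ii) of the corollary, applied to the invariant $\As'$, upgrades $\As'\subseteq\cl(\Qs_\sigma)$ to $\As'\setminus\{\O\}\subseteq\Qs_\sigma$, which is exactly the desired conclusion. I expect the main obstacle to be the rigorous decomposition step: justifying that mixedness forces the minimal real invariant subspaces to be exactly one- or two-dimensional (ruling out Jordan blocks and coincidental equal-magnitude real eigenvalues) and carrying out the complexification/conjugation bookkeeping so that a genuinely \emph{real} o-line or o-plane is produced. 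The ``irrational rotations'' clause of hypothesis~(i) is inherited purely through Theorem~\ref{thm:verifycycle}(ii) and plays no further role in the geometric argument.
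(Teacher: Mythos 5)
Your proof is correct and follows essentially the same route as the paper's: sufficiency via Theorem~\ref{thm:verifycycle}(i), and necessity via Theorem~\ref{thm:verifycycle}(ii) followed by decomposing the resulting invariant of the mixed matrix $\Mm_\sigma$ into eigen-summands that are o-lines (real eigenvalues) or o-planes (conjugate pairs). The only cosmetic difference is in how membership in $\Qs_\sigma$ is transferred to the chosen summand $\As'$: the paper pushes the (possibly strict) quadratic inequalities down to $\As'$ via Proposition~\ref{prop:quadsubspace} and selection matrices, whereas you simply note $\As'\subseteq\cl(\Qs_\sigma)$ and re-apply hypothesis~(ii) to the invariant $\As'$ --- both are valid, and yours avoids the matrix bookkeeping.
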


Condition (ii) is satisfied in the illustrative example of Fig.~\ref{fig:modesandcones}, as the invariants lie in the interior of the corresponding isochronous sets. This result is particularly useful to verify whether a given periodic sequence is exhibited by the system \eqref{eq:samplemap}, and is instrumental in the symbolic methods used in §\ref{sec:symbolic}.

\begin{rem}\label{rem:cetcfixed} Using Corollary \ref{cor:lineandplane}, one can find fixed inter-sample patterns $(t)^\omega$ by searching over $t \in [\underline\tau, \bar\tau]$ for an $\Mm(t)$ with a linear subspace belonging to $\Qs_t$, which can be checked using Prop.~\ref{prop:quadsubspace}. 
This search is one-dimensional, in contrast to the search for invariants of system \eqref{eq:samplemap} over $\R^\nx$.
\end{rem} 

The following lemma is useful when dealing with fixed o-lines.
\begin{lem}\label{lem:fixedoline}
Let $\lv$ be a fixed o-line of $f$ in system \eqref{eq:samplemap}, i.e., $\xv in \lv \implies f(\xv) \in \lv$. Then, there exists a real $\lambda$ such that $f(\xv) = \lambda \xv$ for all $\xv \in \lv$.
\end{lem}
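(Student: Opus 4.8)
The plan is to exploit the positive homogeneity of $f$ recorded in Proposition~\ref{prop:props}, which reduces the behavior of $f$ on the entire o-line to its value at a single representative point. Since $\lv$ is an o-line, it is one-dimensional and passes through the origin, so fixing any nonzero $\xv_0 \in \lv$ we may write every point of $\lv$ as $\mu\xv_0$ for $\mu \in \R$. The hypothesis that $\lv$ is fixed, $f(\lv)\subseteq\lv$, then forces $f(\xv_0)$ to lie on the same line, and hence to be a scalar multiple of $\xv_0$; the scaling factor is the $\lambda$ we are looking for, and homogeneity will propagate it to all of $\lv$.

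Concretely, I would proceed as follows. First, pick any $\xv_0 \in \lv \setminus \{\O\}$; since $\lv$ is an o-line, $\lv = \{\mu\xv_0 \mid \mu \in \R\}$. Second, because $\xv_0 \in \lv$ and $\lv$ is fixed, $f(\xv_0) \in \lv$, so there is a (unique) real number $\lambda$ with $f(\xv_0) = \lambda\xv_0$; $\lambda$ is real because both $\xv_0$ and $f(\xv_0)=\Mm(\tau(\xv_0))\xv_0$ are real vectors and $\xv_0 \neq \O$. Third, take an arbitrary $\xv \in \lv$ and write $\xv = \mu\xv_0$. For $\mu \neq 0$, Proposition~\ref{prop:props} gives $f(\mu\xv_0) = \mu f(\xv_0) = \mu\lambda\xv_0 = \lambda(\mu\xv_0) = \lambda\xv$. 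For $\mu = 0$ the point is the origin, where $f(\O) = \Mm(\tau(\O))\O = \O = \lambda\O$ holds trivially. Hence $f(\xv) = \lambda\xv$ for every $\xv \in \lv$, which is the claim.

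There is no substantive obstacle here; the only point requiring a moment's care is to confirm that the scalar $\lambda$ is genuinely constant along $\lv$ rather than depending on the chosen point, and that the origin (where $\tau$ and homogeneity need not cooperate) is handled separately. Both are immediate: homogeneity shows that the $\lambda$ defined at $\xv_0$ already governs every nonzero point, so choosing a different base point $\xv_1 = \mu\xv_0$ would reproduce the same $\lambda$, and the origin is covered by $f(\O)=\O$ directly. The essential content of the lemma is therefore simply that homogeneity upgrades ``$f$ maps the line into itself'' to ``$f$ acts as a single scalar on the line.''
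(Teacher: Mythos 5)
Your proof is correct and rests on the same ingredient as the paper's, namely Proposition~\ref{prop:props}: the paper uses the constancy of $\tau$ along $\lv$ to write $f(\xv)=\Mm(\tau)\xv$ and identifies $\lambda$ as the (real) eigenvalue of $\Mm(\tau)$ for the eigenvector direction $\lv$, while you fix a base point and propagate the scalar via $f(\mu\xv_0)=\mu f(\xv_0)$ — two phrasings of the same one-line argument. Your explicit check that $\lambda$ does not depend on the base point and your handling of the origin are fine, if slightly more than the paper bothers to write.
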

\begin{proof}
By Prop.~\ref{prop:props}, every $\xv \in \lv$ shares the same inter-sample time $\tau$. Then, $f(\xv) = \Mm(\tau)\xv = a(\xv)\xv$ since $f(\xv) \in \lv$. Hence, by definition of eigenvalues, $\xv$ is an eigenvector of $\Mm(\tau)$ and $a(\xv) = \lambda$ is the corresponding eigenvalue.
\end{proof}

For some classic triggering conditions, we can get some interesting specialized results:
\begin{prop}\label{prop:fixedraytabuada}
Consider system \eqref{eq:plant}--\eqref{eq:quadtrig} with $c(s,\xv,\hat\xv) \equiv |\xv - \hat\xv| > \sigma|\xv|,$ $\Ts=\R_+$, and assume $0 < \sigma < 1$ is designed rendering the closed-loop system GES. A fixed o-line with inter-sample time $\tau$ exists iff $1/(1+\sigma) \in \lambda(\Mm(\tau)).$ 
\end{prop}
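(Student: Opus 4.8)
The plan is to collapse the existence of a fixed o-line into a scalar equation for the eigenvalue of $\Mm(\tau)$ along that line, and then to let the GES hypothesis select the correct root.

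For the forward implication, suppose a fixed o-line $\lv$ with inter-sample time $\tau$ exists. By Proposition~\ref{prop:props} every nonzero $\xv\in\lv$ has the same inter-sample time $\tau$, and Lemma~\ref{lem:fixedoline} gives a real $\lambda$ with $\Mm(\tau)\xv=\lambda\xv$, so $\lambda\in\lambda(\Mm(\tau))$. The decisive step is to evaluate the triggering rule at the firing instant. In CETC the sample occurs when $|\xiv(t)-\hat\xiv|=\sigma|\xiv(t)|$; here the held state is $\hat\xiv=\xv$ and the current state is $\xiv(t_i+\tau)=\Mm(\tau)\xv=\lambda\xv$, whence $|\lambda\xv-\xv|=\sigma|\lambda\xv|$, i.e.\ $|\lambda-1|=\sigma|\lambda|$.

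Solving $|\lambda-1|=\sigma|\lambda|$ over $\lambda\in\R$ is elementary: a nonpositive $\lambda$ would force $\lambda=1/(1-\sigma)>0$, a contradiction, while for $\lambda>0$ the two branches give exactly $\lambda=1/(1+\sigma)$ and $\lambda=1/(1-\sigma)$. Since $1/(1-\sigma)>1$, along $\lv$ the samples obey $\xv_i=\lambda^i\xv_0$ and diverge, contradicting $|\xiv(t_i)|\le M\e^{-bt_i}|\xv_0|$; hence GES rules out this root and leaves $1/(1+\sigma)\in\lambda(\Mm(\tau))$.

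For the converse I would reverse this computation. Given $1/(1+\sigma)\in\lambda(\Mm(\tau))$, realness of the eigenvalue yields a real eigenvector $\xv$; set $\lv=\vspan\{\xv\}$. Writing $\lambda=1/(1+\sigma)$, the identity $1-\lambda=\sigma\lambda$ shows $|\Mm(\tau)\xv-\xv|=\sigma|\Mm(\tau)\xv|$, so the triggering surface is attained at time $\tau$ and $\Mm(\tau)\xv=\lambda\xv\in\lv$. The one remaining---and main---obstacle is that attaining the surface at $\tau$ only gives $\tau(\xv)\le\tau$; I must exclude an earlier crossing so that $\xv\in\Qs_\tau$ and thus $f(\xv)=\Mm(\tau)\xv\in\lv$. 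I expect to close this gap with the first-crossing characterization of the isochronous set in Proposition~\ref{prop:isochronous}(i) (equivalently the single-crossing condition of Proposition~\ref{prop:cont} together with the subspace test of Proposition~\ref{prop:quadsubspace}, as exploited in Remark~\ref{rem:cetcfixed}), verifying that no $s<\tau$ triggers the eigenvector.
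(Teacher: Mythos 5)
Your necessity argument is correct and is essentially the paper's proof: invoke Lemma~\ref{lem:fixedoline} to get a real eigenvalue $\lambda$ on the fixed o-line, evaluate the triggering condition at the firing instant to obtain $|\lambda-1|=\sigma|\lambda|$, solve to get $\lambda=1/(1\pm\sigma)$, and use GES to discard the root $1/(1-\sigma)>1$. You are in fact slightly more careful than the paper, which compresses the last step into ``since $|a|<1$ for GES''; your explicit observation that $\xv_i=\lambda^i\xv_0$ would diverge along the line is the right justification.

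For the converse you have correctly identified the genuine obstacle --- that $1/(1+\sigma)\in\lambda(\Mm(\tau))$ only guarantees the eigenvector reaches the triggering surface \emph{at} time $\tau$, not that $\tau$ is the \emph{first} crossing, so without excluding an earlier trigger you cannot conclude $\tau(\xv)=\tau$ and hence cannot conclude the o-line is fixed with that inter-sample time --- but you do not close it; you only indicate which tools (Prop.~\ref{prop:isochronous}(i), Prop.~\ref{prop:cont}, Prop.~\ref{prop:quadsubspace}) you would use. Be aware that the paper's own proof does not address the sufficiency direction at all: it proves only that an existing fixed o-line forces the eigenvalue condition. So your proposal actually goes further than the published argument in flagging what a full ``iff'' requires, but as written it still does not supply the missing first-crossing verification, and absent an extra hypothesis such as the single-crossing condition of Proposition~\ref{prop:cont} the converse implication as literally stated remains unproved in both your write-up and the paper's.
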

\begin{proof}
By Lemma \ref{lem:fixedoline}, the points in the fixed o-line satisfy
$\hat\xv(t_{i+1}) = a\hat\xv(t_i)$, where $a$ is a real eigenvalue of $\Mm(\tau)$. 
From the triggering condition, it then holds that $|a\hat\xv - \hat\xv| = \sigma|a\hat\xv|$. Hence, $|a-1| = \sigma|a| \therefore a = 1/(1 \pm \sigma)$. Since $|a|<1$ for GES, $a = 1/(1 + \sigma) < 1.$ Because $\Mm(\tau)\hat\xv = a\hat\xv,$ $a\in\lambda(\Mm(\tau))$.
\end{proof}

\begin{prop}\label{prop:fixedraymazo}
Consider system \eqref{eq:plant}--\eqref{eq:quadtrig} with $c(s,\xv,\hat\xv) \equiv \xv\tran\Pm\xv > \e^{-2\rho s}{\hat\xv}\tran\Pm\hat\xv,$ $\Ts=\R_+$, with $0 < \rho < 1 $ and $\Pm \succ \O$.\footnote{This is the triggering condition initially used for STC in \cite{mazo2010iss}.} A fixed o-line with inter-sample time $\tau$ exists iff either $\e^{-\rho \tau}$ or $-\e^{-\rho \tau}$ is an eigenvalue of $\Mm(\tau).$ 
\end{prop}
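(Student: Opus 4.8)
The plan is to mirror the proof of Proposition~\ref{prop:fixedraytabuada}: use Lemma~\ref{lem:fixedoline} to turn the existence of a fixed o-line into a real-eigenvalue condition on $\Mm(\tau)$, and then read off the value of that eigenvalue by evaluating the continuous triggering condition at the sampling instant, where it holds with equality.

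For the forward implication, suppose a fixed o-line $\lv$ with inter-sample time $\tau < \bar\tau$ exists. By Lemma~\ref{lem:fixedoline} there is a real $a \in \lambda(\Mm(\tau))$ with $f(\xv) = a\xv$ for every $\xv \in \lv$, so from the sample $\hat\xv = \xv \in \lv$ the next sample is $\xiv(t_{i+1}) = a\xv$. For $\Ts = \R_+$ the triggering function is continuous, so by Proposition~\ref{prop:isochronous}(i) the event fires exactly where $\xv\tran\Nm(\tau)\xv = 0$; for this triggering condition $\Nm(s) = \Mm(s)\tran\Pm\Mm(s) - \e^{-2\rho s}\Pm$, and substituting $\Mm(\tau)\xv = a\xv$ gives $a^2\,\xv\tran\Pm\xv = \e^{-2\rho\tau}\,\xv\tran\Pm\xv$. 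Since $\Pm \succ \O$ and $\xv \neq \O$, dividing by $\xv\tran\Pm\xv > 0$ yields $a^2 = \e^{-2\rho\tau}$, hence $a = \pm\e^{-\rho\tau}$. Unlike Proposition~\ref{prop:fixedraytabuada}, the GES requirement discards neither sign, because both candidates have magnitude $\e^{-\rho\tau} < 1$; this is why both $\e^{-\rho\tau}$ and $-\e^{-\rho\tau}$ appear in the statement.

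For the converse, suppose $a \coloneqq \pm\e^{-\rho\tau} \in \lambda(\Mm(\tau))$. As $a$ is real, $\Mm(\tau)$ admits a real eigenvector $\vv$, and $\lv \coloneqq \R\vv$ is invariant under $\Mm(\tau)$. With $a^2 = \e^{-2\rho\tau}$ one computes directly $\vv\tran\Nm(\tau)\vv = a^2\,\vv\tran\Pm\vv - \e^{-2\rho\tau}\,\vv\tran\Pm\vv = 0$, which is the boundary equality demanded by Proposition~\ref{prop:isochronous}(i) for $\vv \in \Qs_\tau$. Once $\vv \in \Qs_\tau$ is established, Corollary~\ref{cor:lineandplane} (cf.\ Remark~\ref{rem:cetcfixed}) makes $\lv \setminus \{\O\}$ a period-one set of $f$, i.e.\ a fixed o-line with inter-sample time $\tau$. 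The sign of $a$ is harmless here: for $a = -\e^{-\rho\tau}$ the map sends $\vv \mapsto -\e^{-\rho\tau}\vv$, which still lies on $\lv$, and by the homogeneity of Proposition~\ref{prop:props} the inter-sample time along $\lv$ is unchanged, so $\lv$ remains fixed as an o-line.

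The main obstacle is completing this converse rigorously: $\vv\tran\Nm(\tau)\vv = 0$ only certifies that $\tau$ is \emph{a} zero of the along-ray triggering function $s \mapsto \vv\tran\Nm(s)\vv$, not that it is the \emph{first} positive zero. Indeed this function vanishes at $s = 0$ as well (since $\hat\xv = \xv$ there), so I must rule out an earlier crossing on $(0,\tau)$ to guarantee $\tau(\vv) = \tau$ and hence $\vv \in \Qs_\tau$. I expect to close this by imposing the single-crossing structure: under the transversality condition of Proposition~\ref{prop:cont} (equivalently the s-procedure LMI in the ensuing remark), the triggering function changes sign only once along each o-line, so the zero at $\tau$ is necessarily the inter-sample time. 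Verifying that this condition indeed holds for the present $\Pm$-weighted triggering, and that the remaining inequalities of Proposition~\ref{prop:isochronous}(i) are met, is the delicate step.
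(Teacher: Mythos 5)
Your forward direction is correct and is essentially the paper's proof: the paper also invokes Lemma~\ref{lem:fixedoline} to write $\hat\xv(t_{i+1}) = a\hat\xv(t_i)$ with $a$ a real eigenvalue of $\Mm(\tau)$, and then evaluates the triggering equality at the event time to get $a^2 = \e^{-2\rho\tau}$, hence $a = \pm\e^{-\rho\tau}$. (The paper does this via the change of variables $\zv = \sqrt{\Pm}\hat\xv$; you divide by $\xv\tran\Pm\xv > 0$ directly, which is the same computation. Your observation that GES discards neither sign, unlike in Proposition~\ref{prop:fixedraytabuada}, is also the right reading of why both signs appear in the statement.)

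On the converse, you have identified a real issue, but note that the paper's own proof does not address the sufficiency direction at all --- like the proof of Proposition~\ref{prop:fixedraytabuada}, it only establishes that a fixed o-line forces the eigenvalue condition. So your concern that $\vv\tran\Nm(\tau)\vv = 0$ certifies only that $\tau$ is \emph{a} zero of $s \mapsto \vv\tran\Nm(s)\vv$, and not the first positive one, is a legitimate gap in the ``if'' direction that the paper leaves implicit rather than closes. Your proposed repair via the single-crossing/transversality condition of Proposition~\ref{prop:cont} would indeed close it, but that condition is an additional hypothesis not present in the proposition's statement and does not hold automatically for this triggering rule; absent it, the eigenvector's actual inter-sample time could be some $\tau' < \tau$, in which case the o-line is not fixed with inter-sample time $\tau$ (and need not be fixed at all, since $\vv$ need not be an eigenvector of $\Mm(\tau')$). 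In short: your necessity argument matches the paper; your sufficiency argument is more honest than the paper's but, as you acknowledge, is not yet complete without an extra single-crossing assumption.
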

\begin{proof}
Using the same arguments as in Prop.~\ref{prop:fixedraytabuada}, we have that $\hat\xv(t_{i+1}) = a\hat\xv(t_i)$. Let $\zv \coloneqq \sqrt{\Pm}\hat\xv$. An invariant o-line then satisfies $a^2\zv\tran\zv = \e^{-2\rho \tau}{\zv}\tran\zv \therefore a = \pm \e^{-\rho \tau}$. Now, $\sqrt{\Pm}^{-1}\Mm(\tau)\sqrt{\Pm}\hat\xv = \sqrt{\Pm}^{-1}\Mm(\tau)\zv = a\sqrt{\Pm}^{-1}\zv  = a\hat\xv.$ Since $\Mm(\tau)$ is similar to $\sqrt{\Pm}^{-1}\Mm(\tau)\sqrt{\Pm}$, $a\in\lambda(\Mm(\tau)).$
\end{proof}

A more general result can be obtained by invoking a result from topology (see the proof in the Appendix) to conclude about which cases a fixed o-line certainly exists, only by knowing the state-space dimension $\nx$; 
\begin{thm}\label{thm:topofixed}
Consider the system \eqref{eq:samplemap} and assume $f$ is continuous and $f(\xv) \neq \O$ for all $\xv\neq\O$. If $\nx$ is odd, then $f$ has a fixed o-line.
\end{thm}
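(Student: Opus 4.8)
The plan is to exploit the homogeneity of $f$ to pass to the real projective space $\R\mathrm{P}^{\nx-1}$ (the space of o-lines) and there invoke a fixed-point theorem whose hypotheses are controlled purely by the parity of $\nx$. First I would observe that, by Prop.~\ref{prop:props}, $f(\lambda\xv)=\lambda f(\xv)$ for all $\lambda\neq0$, so $f$ sends each o-line into an o-line; combined with the standing hypothesis that $f(\xv)\neq\O$ for every $\xv\neq\O$, the image of a line through the origin is again a genuine line (it does not collapse to $\{\O\}$). Hence $f$ descends to a map $\bar f:\R\mathrm{P}^{\nx-1}\to\R\mathrm{P}^{\nx-1}$, and I would check $\bar f$ is continuous: $f$ is continuous and the quotient projection $\R^{\nx}\setminus\{\O\}\to\R\mathrm{P}^{\nx-1}$ is open and continuous. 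The key bookkeeping observation is that a fixed point of $\bar f$ is exactly an o-line $\lv$ with $f(\lv)\subseteq\lv$, i.e. a fixed o-line of $f$ in the sense of Lemma~\ref{lem:fixedoline}.

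Next I would apply the Lefschetz fixed-point theorem on the compact manifold $\R\mathrm{P}^{\nx-1}$. Since $\nx$ is odd, $\nx-1$ is even, and the rational homology of an even-dimensional real projective space is concentrated in degree zero: $H_0(\R\mathrm{P}^{\nx-1};\Q)=\Q$ and $H_k(\R\mathrm{P}^{\nx-1};\Q)=0$ for every $k>0$. Because $\R\mathrm{P}^{\nx-1}$ is connected, any continuous self-map acts as the identity on $H_0$, so the Lefschetz number collapses to $L(\bar f)=\sum_k(-1)^k\Tr\bigl(\bar f_*\colon H_k\to H_k\bigr)=1\neq0$. The Lefschetz theorem then guarantees a fixed point of $\bar f$, which unwinds to a fixed o-line of $f$, proving the claim.

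The main obstacle is not the topology itself but the reduction to it: establishing that $\bar f$ is genuinely well-defined and continuous. This is precisely where \emph{both} the homogeneity of Prop.~\ref{prop:props} and the nonvanishing hypothesis are indispensable, since without $f(\xv)\neq\O$ the projective map need not exist. The second delicate point is to recall correctly that it is the \emph{evenness} of $\nx-1$ that annihilates all higher rational homology; I would emphasize that the parity is essential, because for $\nx-1$ odd one has $H_{\nx-1}(\R\mathrm{P}^{\nx-1};\Q)=\Q$ and the top-degree trace can cancel the degree-zero contribution, allowing $L(\bar f)=0$. This is consistent with Case~1 of Example~\ref{ex:r2}, where $\nx=2$ and no fixed o-line is present. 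As a sanity check (and an alternative route avoiding homology), I would sketch the degree-theoretic argument on the sphere $S^{\nx-1}$: normalizing $g(\xv)\coloneqq f(\xv)/|f(\xv)|$, the absence of a fixed point forces $g$ to be homotopic to the antipodal map (degree $(-1)^{\nx}$) while the absence of an antipodal point forces $g$ to be homotopic to the identity (degree $1$); these can coexist only when $\nx$ is even, so for odd $\nx$ some $\xv$ satisfies $f(\xv)=\pm|f(\xv)|\,\xv$, again yielding a fixed o-line by homogeneity.
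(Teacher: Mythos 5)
Your proposal is correct and follows essentially the same route as the paper: descend to the real projective space $\P^{\nx-1}$ via homogeneity and the nonvanishing hypothesis (this is exactly the paper's Lemma~\ref{lem:welldef}), invoke the fixed-point property of even-dimensional real projective spaces, and lift the fixed point back to a fixed o-line. The only difference is that the paper cites this fixed-point property from the literature while you supply its standard Lefschetz-number proof (and a degree-theoretic variant on the sphere), both of which are correct.
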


Apart from o-lines, it is also interesting to know when can o-planes be fixed. A PETC example where this happens is illustrated in Fig.~\ref{fig:modesandcones}. %
The next result presents for which dimensions this can generally hold (the proof is also in the Appendix).

\begin{thm}\label{thm:dimplane}
System \eqref{eq:samplemap} can only exhibit a fixed o-plane $\Ps$ that is isochronous (i.e., $\forall \xv \in \Ps \setminus \{\O\}, \tau(\xv) = y$ for some $y$) if $\Nm(y)$ is singular or one of the following hold.
\begin{enumerate}[(i)]
\item $\nx=2$ and $\underline{\tau} = \bar{\tau}$ (periodic sampling, trivial);
\item $\nx = 3$ and $\Ts = h\N$ (PETC);
\item $\nx \geq 4$.
\end{enumerate}
\end{thm}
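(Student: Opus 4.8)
The plan is to reduce the existence of a fixed isochronous o-plane $\Ps$ with inter-sample time $y$ to a constraint on the \emph{inertia} (signature) of the symmetric matrix $\Nm(y)$, and then count dimensions. First I would note that isochronicity does the heavy lifting: since $\tau \equiv y$ on $\Ps$, the map $f$ restricted to $\Ps$ is just $\Mm(y)|_{\Ps}$, so ``fixed'' only means $\Ps$ is a two-dimensional $\Mm(y)$-invariant subspace (the realizability content tying the statement to Cor.~\ref{cor:lineandplane}), whereas for $y<\bar{\tau}$ the impossibility we must prove follows from the inclusion $\Ps \setminus \{\O\} \subseteq \Qs_y$ alone. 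Letting $\Vm \in \R^{\nx \times 2}$ collect a basis of $\Ps$ and invoking Prop.~\ref{prop:quadsubspace} with $\Qm = \Nm(y)$, the two regimes split sharply by Prop.~\ref{prop:isochronous}: CETC ($\Ts = \R_+$) forces $\xv\tran\Nm(y)\xv = 0$ on $\Ps$, hence $\Vm\tran\Nm(y)\Vm = \O$, so $\Ps$ is a \emph{totally isotropic} plane of $\Nm(y)$; PETC ($\Ts = h\N$) forces $\xv\tran\Nm(y)\xv > 0$ on $\Ps$, hence $\Vm\tran\Nm(y)\Vm \succ \O$, so $\Nm(y)$ is \emph{positive definite} on $\Ps$. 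The remaining inequalities in Prop.~\ref{prop:isochronous} only tighten these and are irrelevant to a necessity argument.

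The algebraic core is two standard inertia bounds for a symmetric $\Nm(y)$ with $n_+$ positive, $n_-$ negative, and $n_0 = \nx - n_+ - n_-$ zero eigenvalues: the largest subspace on which the form vanishes identically has dimension $\min(n_+,n_-) + n_0$ (Witt index plus radical), while the largest subspace on which it is positive definite has dimension $n_+$. The case $\nx = 2$ is immediate and independent of these bounds: the only o-plane is $\R^\nx$ itself, so an isochronous o-plane forces $\tau$ to be globally constant, i.e.\ $\underline{\tau} = \bar{\tau}$, which is case (i). So I would assume $\nx \geq 3$ and, toward the contrapositive, that $\Nm(y)$ is nonsingular ($n_0 = 0$). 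For $\nx \geq 4$ case (iii) already holds; for $\nx = 3$ under PETC the positive-definiteness requirement $n_+ \geq 2$ is met by, e.g., signature $(2,1)$, matching case (ii); in both subcases the disjunction is satisfied with nothing to exclude. The sole configuration left to rule out is $\nx = 3$ under CETC with $y < \bar{\tau}$: a two-dimensional totally isotropic plane would require $\min(n_+,n_-) \geq 2$, impossible since $n_+ + n_- = 3$ gives $\min(n_+,n_-) \leq 1$. Hence no such o-plane exists, establishing the disjunction for all $y < \bar{\tau}$.

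The step I expect to be the main obstacle is the boundary value $y = \bar{\tau}$, where Prop.~\ref{prop:isochronous}(iii) supplies only $\xv\tran\Nm(s)\xv \leq 0$ for every $s < \bar{\tau}$ on $\Ps$ --- an inequality rather than the equality that drives the isotropy count. My plan is to let $s \to \bar{\tau}^-$ to obtain $\Vm\tran\Nm(\bar{\tau})\Vm \preceq \O$, and then to exploit the structure of the maximum inter-event time: for the natural MIET $\bar{\tau} = \inf\{s > 0 \mid \Nm(s) \succ \O\}$ of Prop.~\ref{prop:cont}, continuity places $\Nm(\bar{\tau})$ on the boundary of the positive-definite cone, so $\Nm(\bar{\tau})$ is singular and $y = \bar{\tau}$ falls under the ``$\Nm(y)$ singular'' disjunct. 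The genuinely delicate subcase is a designer-imposed $\bar{\tau}$ strictly below this natural value, where $\Nm(\bar{\tau})$ may be nonsingular and the artificial cap could in principle create a fixed plane unrelated to the intrinsic triggering geometry; here the cleanest route is to reinstate the natural-MIET setting (recovering the singular $\Nm(\bar\tau)$), and I flag this as the most technical point of the argument. Finally, I would verify tightness: a signature-$(2,1)$ matrix realizes the PETC case (ii) in $\R^3$ and a signature-$(2,2)$ matrix realizes the isotropic CETC case (iii) in $\R^4$, so none of (i)--(iii) can be dropped.
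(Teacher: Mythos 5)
Your proposal is correct and follows essentially the same route as the paper: the paper's Lemma~\ref{lem:fixedplane} is exactly your two inertia bounds specialized to two-dimensional subspaces (a plane on which the form is positive definite needs $n_+\geq 2$; a totally isotropic plane of a nonsingular form needs $n_+\geq 2$ and $n_-\geq 2$, hence $\nx\geq 4$), and the split between the PETC strict inequality and the CETC equality in Prop.~\ref{prop:isochronous} drives the case analysis in the same way. The boundary case $y=\bar\tau$ that you flag is a genuine subtlety which the paper's own proof glosses over --- it tacitly assumes every CETC isochronous set contains the quadric $\{\xv \mid \xv\tran\Nm(y)\xv=0\}$, which fails for $\Qs_{\bar\tau}$ --- so your explicit treatment via the singularity of $\Nm(\bar\tau)$ at the natural maximum inter-event time actually supplies a step the paper omits.
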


After having determined the fixed (or periodic) o-lines and o-planes of system \eqref{eq:samplemap}, the next step is to characterize their (local) attractivity. We say that an o-line $\lv \subset \R^\nx$ is attractive if for any other o-line $\lv'$ close enough to $\lv$, $\lim_{n\to\infty}f^n(\lv') = \lv$. The following can be applied for fixed o-lines (see proof in the Appendix.)

\begin{prop}\label{prop:attractivity}
Let $\lv \coloneqq \{a\xv \mid a \in \R \setminus \{0\}\}$ be a fixed o-line of system \eqref{eq:samplemap}, and suppose $f$ is differentiable at $\xv,$ with $J_f(\xv)$ being the corresponding Jacobian matrix. Take $\lambda$ as the real s.t.~$f(\xv) = \lambda\xv$ (Lemma \ref{lem:fixedoline}), and let $\Om_{\xv}$ be an orthonormal basis for the orthogonal complement of $\xv$. Then, if $\frac{1}{\lambda}\Om_{\xv}\, \tran J_f (\xv) \Om_{\xv}$ is Schur, then $\lv$ is locally attractive.
\end{prop}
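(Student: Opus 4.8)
The plan is to exploit the homogeneity of $f$ (Proposition~\ref{prop:props}) to pass from $\R^\nx$ to the projective space of o-lines, on which $\lv$ is a genuine fixed point, and then to identify $\frac{1}{\lambda}\Om_{\xv}\tran J_f(\xv)\Om_{\xv}$ as \emph{exactly} the Jacobian of the induced map at that fixed point. Local attractivity then follows from the standard linearization (Lyapunov's indirect) theorem for discrete maps. First I would normalize $|\xv| = 1$ without loss of generality, since inter-sample times and the o-line itself are magnitude-insensitive, and note that $\lambda \neq 0$ because $f(\xv) = \lambda\xv \neq \O$. As $f$ is homogeneous, it maps o-lines to o-lines, hence descends to a map on the unit sphere $\S^{\nx-1}$ modulo antipodes, i.e.\ on the projective space.

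Next I would set up local coordinates around $\lv$. Using the orthonormal columns of $\Om_{\xv}$ as a basis for the tangent space $\xv^\perp$, I parametrize nearby directions by $\uv(\phi) \coloneqq N(\xv + \Om_{\xv}\phi)$, $\phi \in \R^{\nx-1}$, where $N(\yv) \coloneqq \yv/|\yv|$ is the normalization map and $\phi = \O$ corresponds to $\lv$. The induced map in these coordinates reads $\psi(\phi) \coloneqq \Om_{\xv}\tran \big(\sign(\lambda)\, N(f(\uv(\phi)))\big)$, where the factor $\sign(\lambda)$ selects the representative of the image line lying near $\xv$ rather than near $-\xv$. This is the only place the sign of $\lambda$ enters, and it makes $\lv$ a true fixed point of $\psi$ regardless of the orientation of $\Mm(\tau)$ on $\lv$.

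The core computation is the chain rule at $\phi = \O$. Using $DN(\yv) = \frac{1}{|\yv|}(\I - \yv\yv\tran/|\yv|^2)$, I get $D_\phi\uv|_{\O} = DN(\xv)\Om_{\xv} = (\I - \xv\xv\tran)\Om_{\xv} = \Om_{\xv}$ at the input, and $\sign(\lambda)\,DN(\lambda\xv) = \tfrac{\sign(\lambda)}{|\lambda|}(\I - \xv\xv\tran) = \tfrac{1}{\lambda}(\I - \xv\xv\tran)$ at the output. Composing and using $\Om_{\xv}\tran\xv = \O$, so that $\Om_{\xv}\tran(\I - \xv\xv\tran) = \Om_{\xv}\tran$, yields
\[
J_\psi(\O) = \Om_{\xv}\tran\,\tfrac{1}{\lambda}(\I - \xv\xv\tran)\,J_f(\xv)\,\Om_{\xv} = \tfrac{1}{\lambda}\Om_{\xv}\tran J_f(\xv)\Om_{\xv},
\]
which is precisely the matrix in the statement; its spectrum is moreover independent of the choice of $\Om_{\xv}$, since two such bases differ by an orthogonal transformation that conjugates the matrix. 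If this matrix is Schur, then $\phi = \O$ is a locally asymptotically stable fixed point of the $C^1$ map $\psi$, meaning nearby o-lines converge to $\lv$ under iteration, which is exactly the asserted attractivity.

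The main obstacle I expect is the careful bookkeeping of the projective quotient rather than the analytics: verifying that $\psi$ is genuinely $C^1$ near $\phi = \O$ (which needs $f$ differentiable at $\xv$ and $f(\xv) \neq \O$, both assumed), that the $\sign(\lambda)$ representative renders the induced map well defined and single-valued on a neighborhood of the fixed point, and that convergence $\psi^n(\phi) \to \O$ in the chart is equivalent to $f^n(\lv') \to \lv$ in the o-line (angular) topology of the attractivity definition. Once this is settled, the substantive content collapses to the one-line Jacobian identity above together with the textbook discrete-time stability result.
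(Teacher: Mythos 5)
Your proposal is correct and follows essentially the same route as the paper: both pass to the projective space via the homogeneity of $f$, identify the tangent space at the fixed line with the span of $\Om_{\xv}$, and show by a chain-rule/first-order expansion that the Jacobian of the induced map at the fixed point is exactly $\frac{1}{\lambda}\Om_{\xv}\tran J_f(\xv)\Om_{\xv}$, after which the discrete-time linearization theorem gives local attractivity. Your treatment of the normalization map $N$ and the $\sign(\lambda)$ representative is a slightly more explicit version of the projection step the paper performs, but the substance is identical.
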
	

The Jacobian matrix can be expressed as $J_f = \partial (\Mm(\tau(\xv))\xv) / \partial \xv = \partial (\Mm(\tau(\xv)) / \partial \xv)\xv + \Mm(\tau(\xv))) = $
\begin{equation}\label{eq:jacobian}
\frac{-2}{\xv\tran\dot\Nm(\tau(\xv))\xv}\dot\Mm(\tau(\xv))\xv\xv\tran\Nm(\tau(\xv)) + \Mm(\tau(\xv)).
\end{equation}
The matrix $\Om_{\xv}\, \tran J_f (\xv) \Om_{\xv}$ is the Jacobian of $f$ w.r.t.~the non-radial directions and projected onto those. It is easy to see that the eigenvalues of $ \Om_{\xv}\, \tran J_f (\xv) \Om_{\xv}$ are the same as those of $J_f$ except the one associated with the eigenvector $\xv$, while $\lambda$ is precisely the eigenvalue associated with $\xv$; hence Prop.~\ref{prop:attractivity} gives a condition on the ratio between the largest-in-magnitude eigenvalue of $J_f$ and that of the fixed o-line in consideration. For fixed planes, this analysis may require more sophisticated analyses of orbital stability, such as Poincaré return maps.

As we see next, the case of PETC is revealing thanks to the fact that $\Mm$ is constant by parts and, thus, $J_f = \Mm(\tau(\xv))$ almost everywhere. 
Because PETC exhibits a discrete set of outputs, a proper definition of stability of an infinite sequence is necessary.

\begin{defn}\label{def:stablepetc}
Consider system \eqref{eq:samplemap} with $\Ts = h\N$ (PETC). An infinite sequence of outputs $\{y_i\}$ is said to be stable if there exists $\xv \in \R^\nx$ with a neighborhood $\Us$ such that every $\xv' \in \Us$ satisfies $y_i(\xv') = y_i(\xv) = y_i, \forall i \in \N.$
\end{defn}

\begin{prop}\label{prop:petchurwitz}
Consider system \eqref{eq:samplemap} with $\Ts=h\N$ (PETC) and assume it is GES. Let $\{y_i\}$ be a $p$-periodic output trajectory associated with it, and let $\Mm \coloneqq \Mm(y_{p-1})\cdots\Mm(y_1)\Mm(y_0)$. If 
$\{y_i\}$ is stable, then $\Mm$ is Schur.
\end{prop}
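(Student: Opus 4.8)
The plan is to use the stability of the output sequence to \emph{linearize} the sampled dynamics over an open set, reducing the piecewise-linear map $f$ to the single linear map $\Mm$, and then to propagate the global exponential decay through that linear iteration.

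First I would apply Definition~\ref{def:stablepetc}: stability of $\{y_i\}$ provides a point $\xv$ with an open neighborhood $\Us$ on which $y_i(\xv') = y_i$ for all $\xv' \in \Us$ and all $i \in \No$. Because the recurrence \eqref{eq:samplemap} is $\xv_{i+1} = \Mm(\tau(\xv_i))\xv_i$ with $\tau(\xv_i) = y_i$, sharing the output sequence means sharing the applied matrix sequence $\Mm(y_0), \Mm(y_1), \dots$ along every trajectory started in $\Us$. Invoking the $p$-periodicity $y_{i+p} = y_i$, one period gives $\xv'_p = \Mm\xv'$ and, inductively, $\xv'_{kp} = \Mm^k\xv'$ for every $k \in \No$ and every $\xv' \in \Us$.

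Next I would bring in GES. Writing $T \coloneqq \sum_{j=0}^{p-1} y_j$ for the period length (with $T \geq ph > 0$ since $y_j \in h\N$), the sampling instants satisfy $t_{kp} = kT \to \infty$, so the exponential bound evaluated at these instants yields $|\Mm^k\xv'| = |\xiv_{\xv'}(t_{kp})| \leq M\e^{-bkT}|\xv'| \to 0$. Hence $\Mm^k\xv' \to \O$ for every $\xv' \in \Us$. Since a nonempty open set is contained in no proper subspace, $\Us$ contains a basis of $\R^\nx$; by linearity $\Mm^k \to \O$, which is equivalent to the spectral radius of $\Mm$ being strictly less than one, i.e.\ $\Mm$ is Schur.

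The decisive step is conceptual rather than computational: recognizing that Definition~\ref{def:stablepetc} collapses the branching, piecewise-constant iteration $\Mm(\tau(\cdot))$ into a genuine linear map on $\Us$. Once that is secured, the rest is routine, and the only points I would verify carefully are that the periodicity of $\{y_i\}$ is reproduced exactly by each neighboring trajectory (it is, by the definition of stability) and that GES of the PETC closed loop bounds the \emph{sampled} states --- immediate, as the sample instants form a subset of continuous time on which the exponential estimate already holds.
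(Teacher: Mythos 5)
Your proof is correct and follows essentially the same route as the paper's: stability of $\{y_i\}$ forces a whole neighborhood to evolve under the single linear map $\Mm$ over each period, and GES then forces $\Mm^k\to\O$. The only difference is cosmetic --- you argue directly (open set contains a basis, so pointwise decay gives $\Mm^k\to\O$ and hence spectral radius below one), whereas the paper argues by contradiction from non-Schurness; your version actually fills in a detail the paper leaves implicit.
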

\begin{proof}
Every trajectory $\{\xv_i\}$ of \eqref{eq:samplemap} that generates $\{y_i\}$ satisfies $\xv_{i+p} = \Mm\xv_i$. If $\Mm$ is not Schur, then from almost every $\xv_0,$ (and hence for any point's neighborhood) there are no $M > 0, 0 < a < 1$ such that $|\xv_{mp}| \leq Ma^m|\xv_0|$ which implies that the PETC system is not GES. This is a contradiction.
\end{proof}

Proposition \ref{prop:petchurwitz} implies that stable fixed or periodic sampling patterns generated by a PETC system can be used in a multi-rate periodically sampled system, which will also render the origin GES. Note that the existence of such a stable periodic sampling pattern does not imply that the PETC generates that pattern everywhere; as a matter of fact, it may generate sequences that converge to this stable sequence. In these cases, the PETC has a rival periodic sampling schedule which also achieves GES. %
\footnote{While both approaches stabilize the system with equal limit average sampling performances, their transients should be different. It remains to be investigated if their asymptotic performance properties, i.e., GES decay rates, are the same.}
This is not necessarily true if no stable periodic pattern is exhibited, i.e., when PETC exhibits chaotic or aperiodic traffic.

\begin{rem}\label{rem:CETCnotSchur}
Proposition \ref{prop:petchurwitz} and its associated conclusion are not true for CETC. For example, consider the case 2 from Ex.~\ref{ex:r2}: its stable fixed point occurs for the inter-event time $y \approx 0.3903$; the eigenvalues of $\Mm(y)$ are $0.757$ (which is $1/(1+\sigma)$ as expected from Prop.~\ref{prop:fixedraytabuada}) and $-1.33$,  hence $\Mm(y)$ is not Schur. Given Prop.~\ref{prop:petchurwitz}, it is now not surprising that case 2's PETC implementation (Fig.~\ref{fig:ex1c5}) does not exhibit an asymptotically stable inter-event time trajectory. More interestingly, this stays true regardless of how small $h$ is.
\end{rem}

This Section has presented many properties of fixed and periodic subsets of ETC, such as dimensional conditions for fixed o-lines and o-planes to exist, how to find them, and how to characterize their attractivity. However, it has not yet provided a means to compute the limit metrics or their robust versions. Looking again at Example \ref{ex:r2}, it is clear that several challenges remain: 
\begin{enumerate}
\item If a stable fixed or periodic pattern is found, can we ensure that it is almost globally attractive? (Here, almost is used to exclude the finitely many unstable fixed or periodic patterns, in case these exist.)
\item If $f$ has fixed or periodic patterns, how can we obtain some information about the limit metrics?
\item If multiple fixed or periodic patterns are found, but inside a chaotic invariant set, how to compute robust limit metrics?
\end{enumerate}
The next Section provides (partial) answers to these questions for PETC using a symbolic approach. 

\section{Quantitative analysis: a symbolic approach}\label{sec:symbolic}

In this section, we shift from the nonlinear analysis tools used in Sec.~\ref{sec:behaviors} to symbolic tools in the spirit of \cite{tabuada2009verification}. We focus on PETC, whose discrete-output nature facilitates the construction of finite-state models \cite{gleizer2020scalable}. For this part, it is necessary to introduce some formalism and previous results.

\subsection{Transition systems, simulations, and quantitative automata}

In \cite{tabuada2009verification}, Tabuada gives a generalized notion of transition system:
\begin{defn}[Transition System \cite{tabuada2009verification}]\label{def:system} 
A system $\Ss$ is a tuple $(\Xs,\Xs_0,\Es,\Ys,H)$ where:
\begin{itemize}
\item $\Xs$ is the set of states,
\item $\Xs_0 \subseteq \Xs$ is the set of initial states,
\item $\Es \subseteq \Xs \times \Xs$ is the set of edges (or transitions),
\item $\Ys$ is the set of outputs, and
\item $H: \Xs \to \Ys$ is the output map.
\end{itemize}
\end{defn}
Here we have omitted the action set $\Us$ from the original definition because we focus on autonomous systems. A system is said to be finite (infinite) state when the cardinality of $\Xs$ is finite (infinite). A transition in $\Es$ is denoted by a pair $(x, x')$. We define $\Post_\Ss(x) \coloneqq \{x'\mid (x,x') \in \Es\}$ as the set of states that can be reached from $x$ in one step. System $\Ss$ is said to be \emph{non-blocking} if $\forall x \in \Xs, \Post_\Ss(x) \neq \emptyset.$ 
We call $x_0x_1x_2...$ an \emph{infinite internal behavior}, or \emph{run} of $\Ss$ if $x_0 \in \Xs_0$ and $(x_i,x_{i+1}) \in \Es$ for all $i \in \N$, and $y_0y_1...$ its corresponding \emph{infinite external behavior}, or \emph{trace}, if $H(x_i) = y_i$ for all $i \in \N$. We denote by $B_{\Ss}(r)$ the external behavior from a run $r = x_0x_1...$ (in the case above, $B_{\Ss}(r) = y_0y_1...$), by $\Bs^l_x(\Ss)$ (resp.~$\Bs^+_x(\Ss)$ and $\Bs^\omega_x(\Ss)$) the set of all $l$-long (resp.~finite and infinite) external behaviors of $\Ss$ starting from state $x$, and by $\Bs^l(\Ss) \coloneqq \bigcup_{x\in\Xs_0}\Bs^l_x(\Ss)$  (resp.~$\Bs^+(\Ss) \coloneqq \bigcup_{x\in\Xs_0}\Bs^+_x(\Ss)$ and $\Bs^\omega(\Ss) \coloneqq \bigcup_{x\in\Xs_0}\Bs^\omega_x(\Ss)$) the set of all $l$-long (resp.~finite and infinite) external behaviors of $\Ss$. 

The concepts of simulation and bisimulation are fundamental to establish formal relations between two transition systems.

\begin{defn}[Simulation Relation \cite{tabuada2009verification}]\label{def:sim}
Consider two systems $\Ss_a$ and $\Ss_b$ with $\Ys_a$ = $\Ys_b$. A relation $\Rs \subseteq \Xs_a \times \Xs_b$ is a simulation relation from $\Ss_a$ to $\Ss_b$ if the following conditions are satisfied:
\begin{enumerate}
\item[i)] for every $x_{a0} \in \Xs_{a0}$, there exists $x_{b0} \in \Xs_{b0}$ with $(x_{a0}, x_{b0}) \in \Rs;$
\item[ii)] for every $(x_a, x_b) \in \Rs, H_a(x_a) = H_b(x_b);$
\item[iii)] for every $(x_a, x_b) \in \Rs,$ we have that $(x_a, x_a') \in \Es_a$ implies the existence of $(x_b, x_b') \in \Es_b$ s.t.~$(x_a', x_b') \in \Rs.$
\end{enumerate}
\end{defn}
We say $\Ss_a \preceq \Ss_b$ when $\Ss_b$ simulates $\Ss_a$, which is true if there exists a simulation relation from $\Ss_a$ to $\Ss_b$. When $\Rs$ is a simulation relation from $\Ss_a$ to $\Ss_b$ and also $\Rs^{-1}$ is from $\Ss_b$ to $\Ss_a$, we say that $\Ss_a$ and $\Ss_b$ are \emph{bisimilar}, and denote by $\Ss_a \bisim \Ss_b$. 
Weaker but relevant relations associated with simulation and bisimulation are, respectively, \emph{behavioral inclusion} and \emph{behavioral equivalence}:
\begin{defn}[Behavioral inclusion and equivalence \cite{tabuada2009verification}]
Consider two systems $\Ss_a$ and $\Ss_b$ with $\Ys_a$ = $\Ys_b$. We say that $\Ss_a$ is \emph{behaviorally included} in $\Ss_b$, denoted by $\Ss_a \preceq_\Bs \Ss_b$, if $\Bs^\omega(\Ss_a) \subseteq \Bs^\omega(\Ss_b).$ In case $\Bs^\omega(\Ss_a) = \Bs^\omega(\Ss_b),$ we say that $\Ss_a$ and $\Ss_b$ are \emph{behaviorally equivalent}, which is denoted by $\Ss_a \bisim_\Bs \Ss_b$.
\end{defn}
(Bi)simulations lead to behavioral inclusion (equivalence):
\begin{thm}[\!\!\cite{tabuada2009verification}]\label{thm:simimpliesbeh}
Given two systems $\Ss_a$ and $\Ss_b,$ 
$\Ss_a \preceq \Ss_b \implies \Ss_a \preceq_\Bs \Ss_b$ and $\Ss_a \bisim \Ss_b \implies \Ss_a \bisim_\Bs \Ss_b$.
\end{thm}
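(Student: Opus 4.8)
The plan is to prove both implications by establishing a single core fact: a simulation relation lets one lift any run of the simulated system to a run of the simulating system that produces the identical trace. The second implication then follows from the first applied in both directions, so the real work is entirely in the first implication, $\Ss_a \preceq \Ss_b \implies \Ss_a \preceq_\Bs \Ss_b$.

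For the first implication, I would take an arbitrary infinite external behavior $y_0y_1y_2\dots \in \Bs^\omega(\Ss_a)$ and fix a witnessing run $x_{a0}x_{a1}x_{a2}\dots$ of $\Ss_a$, so that $x_{a0}\in\Xs_{a0}$, $(x_{ai},x_{a,i+1})\in\Es_a$, and $H_a(x_{ai})=y_i$ for all $i$. Using the simulation relation $\Rs$ from $\Ss_a$ to $\Ss_b$, I would construct a matching run of $\Ss_b$ by induction on $i$, maintaining the invariant $(x_{ai},x_{bi})\in\Rs$. The base case uses condition (i) of Definition~\ref{def:sim} to find $x_{b0}\in\Xs_{b0}$ with $(x_{a0},x_{b0})\in\Rs$; the inductive step uses condition (iii): given $(x_{ai},x_{bi})\in\Rs$ and the edge $(x_{ai},x_{a,i+1})\in\Es_a$, there exists $x_{b,i+1}$ with $(x_{bi},x_{b,i+1})\in\Es_b$ and $(x_{a,i+1},x_{b,i+1})\in\Rs$, re-establishing the invariant. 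Finally, condition (ii) gives $H_b(x_{bi})=H_a(x_{ai})=y_i$ for every $i$, so the run $x_{b0}x_{b1}x_{b2}\dots$ is a genuine run of $\Ss_b$ whose trace is exactly $y_0y_1y_2\dots$. Hence $y_0y_1y_2\dots\in\Bs^\omega(\Ss_b)$, and since the behavior was arbitrary, $\Bs^\omega(\Ss_a)\subseteq\Bs^\omega(\Ss_b)$, i.e., $\Ss_a \preceq_\Bs \Ss_b$.

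For the second implication, I would unpack $\Ss_a \bisim \Ss_b$: by definition there is a relation $\Rs$ that is a simulation from $\Ss_a$ to $\Ss_b$ and whose inverse $\Rs^{-1}$ is a simulation from $\Ss_b$ to $\Ss_a$; equivalently $\Ss_a\preceq\Ss_b$ and $\Ss_b\preceq\Ss_a$. Applying the first implication twice yields $\Bs^\omega(\Ss_a)\subseteq\Bs^\omega(\Ss_b)$ and $\Bs^\omega(\Ss_b)\subseteq\Bs^\omega(\Ss_a)$, whence $\Bs^\omega(\Ss_a)=\Bs^\omega(\Ss_b)$, which is exactly $\Ss_a \bisim_\Bs \Ss_b$.

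The only delicate point, and where I expect the subtlety to lie, is the inductive construction of the infinite run of $\Ss_b$: the step-by-step application of condition (iii) produces, for each $i$, a \emph{nonempty set} of admissible successors, and selecting one consistent sequence from these nested choices is an appeal to the axiom of dependent choice. I would note this explicitly rather than gloss over it, since the runs are infinite; everything else is a routine verification of the three simulation conditions. No continuity, measurability, or finiteness assumptions on the systems are needed, so the argument is purely combinatorial and applies verbatim to the infinite-state transition systems used later in the paper.
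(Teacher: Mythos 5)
Your proof is correct and follows the standard argument from the cited reference (the paper itself only cites \cite{tabuada2009verification} and gives no proof): inductively lift a run of $\Ss_a$ through the simulation relation using conditions (i) and (iii), match outputs via (ii), and derive behavioral equivalence from the two behavioral inclusions. One small imprecision worth noting: bisimilarity is not \emph{equivalent} to mutual similarity ($\Ss_a\preceq\Ss_b$ and $\Ss_b\preceq\Ss_a$) in general, only stronger; but you use only the implication in the direction you need, so the argument stands.
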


If $\Ss$ is finite-state, we can associate a digraph $G$ with it, where states are nodes and an edge $x \to x'$ exists if $(x,x') \in \Es$. A digraph has an associated $(0,1)$-matrix, the incidence matrix $\Tm$, obtained by attributing an index $i$ to each node; then $T_{ij} = 1$ if $x_i \to x_j$, $T_{ij} = 0$ otherwise. We say that $\Tm$ is the incidence matrix of $\Ss$.

For quantitative analysis of system properties, we resort to the framework of \cite{chatterjee2010quantitative}, with the adaptations made in \cite{gleizer2021hscc} to include output maps.
\begin{defn}[Weighted transition system \cite{gleizer2021hscc}]
A weighted transition system (WTS) $\Ss$ is the tuple $(\Xs,\Xs_0,\Es,\Ys,H, \gamma)$, where
\begin{itemize}
\item $(\Xs,\Xs_0,\Es,\Ys,H)$ is a \emph{non-blocking} transition system;
\item $\gamma: \Es \to \Q$ is the \emph{weight function}.
\end{itemize}
\end{defn}
For a given run $r = x_0x_1...$ of $\Ss$, abusing notation, $\gamma(r) = v_0v_1...$ is the sequence of weights defined by $v_i = \gamma(x_i,x_{i+1})$. We use $\gamma_i(r)$ for the $i$-th element of $\gamma(r)$. A WTS is called \emph{simple} if for all $(x,x') \in \Es, \gamma(x,x') = H(x)$ \cite{gleizer2021computing}; in this case $\gamma(r) = B_{\Ss}(r)$, i.e., the set of weight sequences of $\Ss$ is equal to its behavior. All WTSs we consider in this work are simple, so hereafter we focus on this case.  In this case, we define the following \emph{values} of a behavior set $\Bs \subseteq 2^{\N\to\Q}$, %
in the spirit of the metrics presented in §\ref{ssec:prob}:

\begin{align*}
\InfLimInf(\Bs) &\coloneqq \inf\left\{\liminf_{i\to\infty}y_i \;\middle|\; \{y_i\} \in \Bs \right\}, \\ 
\InfLimAvg(\Bs) &\coloneqq \inf\;\biggl\{\liminf_{n\to\infty}\frac{1}{n+1}\sum_{i=0}^{n}y_i \;\bigg|\; \{y_i\} \in \Bs\biggr\}.
\end{align*}
For a value $V \in \{\InfLimInf, \InfLimAvg\}$, we often use the shorthand notation $V(\Ss) \coloneqq V(\Bs^\omega(\Ss))$. 
The following result is extracted from \cite[Theorem 3]{chatterjee2010quantitative} and its proof:
\begin{thm}\label{thm:limavg}
Given a finite-state WTS $\Ss$,
\begin{enumerate}
\item $\InfLimInf(\Ss)$ can be computed in $\bigO(|\Xs|+|\Es|)$; moreover, there exists $x \in \Xs$ such that $H(x) = \InfLimInf(\Ss)$ and $x$ belongs to a strongly connected component (SSC) of the graph defined by $\Ss$.
\item $\InfLimAvg(\Ss)$ can be computed in $\bigO(|\Xs||\Es|)$.  Moreover, system $\Ss$ admits a cycle $x_0x_1...x_k$ satisfying $x_i \to x_{i+1}, i < k,$ and $x_k \to x_0$, s.t. the run $r = (x_0x_1...x_k)^\omega$ satisfies $\liminf_{n\to\infty}\frac{1}{n+1}\sum_{i=0}^{n}\gamma_i(r)) = \InfLimAvg(\Ss)$. 
\end{enumerate}
\end{thm}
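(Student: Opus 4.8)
The plan is to specialize \cite[Theorem 3]{chatterjee2010quantitative} to the simple, non-blocking WTS setting and to verify the two witness claims directly. Since $\Ss$ is simple, every infinite run $r = x_0x_1\ldots$ satisfies $\gamma(r) = B_\Ss(r) = H(x_0)H(x_1)\ldots$, so both $\InfLimInf(\Ss)$ and $\InfLimAvg(\Ss)$ reduce to graph-theoretic quantities over the finite digraph $G$ associated with $\Ss$, with the weight of state $x$ equal to $H(x)$; non-blockingness guarantees that infinite runs always exist to infimize over.

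For part (i), I would first characterize the liminf of a single run. Because $G$ is finite, the set $S(r)$ of states visited infinitely often along $r$ is nonempty and strongly connected. Then $\liminf_{i} H(x_i) = \min_{x \in S(r)} H(x)$, since exactly the outputs of states in $S(r)$ occur infinitely often and smaller transient values are eventually never seen. Conversely, for any SCC $C$ reachable from $\Xs_0$ that contains a cycle, a run that reaches $C$ and then traverses all of $C$ repeatedly has $S(r)=C$, hence liminf $\min_{x\in C}H(x)$, and no recurrent subset of $C$ yields a smaller value. Therefore $\InfLimInf(\Ss) = \min\{H(x) \mid x \text{ lies in a reachable SCC containing a cycle}\}$, which simultaneously exhibits a witness state $x$ in an SCC with $H(x)=\InfLimInf(\Ss)$. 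Tarjan's SCC decomposition and a reachability search from $\Xs_0$ each run in $\bigO(|\Xs|+|\Es|)$, after which the minimizing state is found in $\bigO(|\Xs|)$.

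For part (ii), the claim is the standard mean-payoff characterization: $\InfLimAvg(\Ss)$ equals the minimum mean weight over all cycles reachable from $\Xs_0$. Achievability uses Corollary \ref{cor:limavgperiodic}: the ultimately periodic run that reaches a minimum-mean cycle $c = x_0\ldots x_k$ and then repeats $(x_0\ldots x_k)^\omega$ has running average converging to the mean of $c$, which is exactly the witness cycle required by the statement. The lower bound is the core step: one decomposes an arbitrary finite prefix of a run into simple cycles plus a bounded-length residual path, expresses the partial average as a convex combination of cycle means plus a term that vanishes as $n\to\infty$, and concludes---again invoking the transient-insensitivity of Proposition \ref{prop:limmetrics}---that the liminf of the running average is at least the minimum cycle mean. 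The minimum mean cycle itself is computed by Karp's algorithm in $\bigO(|\Xs||\Es|)$.

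The main obstacle is precisely this lower-bound estimate in part (ii): making the cycle-decomposition bound rigorous enough that the $\liminf$ of the running average provably cannot undercut the minimum mean cycle value, uniformly over all (possibly non-periodic) infinite runs. Everything else follows from finiteness of $G$, the simple and non-blocking structure of $\Ss$, and the transient results established in §\ref{sec:behaviors}.
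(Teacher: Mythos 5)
The paper does not actually prove this theorem: it is explicitly ``extracted from \cite[Theorem 3]{chatterjee2010quantitative} and its proof,'' and the accompanying Remark only names the algorithms (B\"uchi-style SCC reachability for $\InfLimInf$, Karp's minimum-average-cycle algorithm for $\InfLimAvg$). Your reconstruction is correct and is essentially the standard argument from that reference: the liminf of a run equals the minimum output over its infinitely-visited set, which is strongly connected, giving the SCC characterization for part (i); and part (ii) is the mean-payoff/minimum-mean-cycle characterization. The one step you flag as an obstacle --- the lower bound in part (ii) --- is in fact routine: any length-$n$ walk in a finite digraph decomposes into simple cycles plus a residual path of at most $|\Xs|-1$ edges, so the partial sum is bounded below by (minimum cycle mean)$\times$(edges covered by cycles) plus a uniformly bounded remainder, and dividing by $n$ and letting $n\to\infty$ yields $\liminf \geq$ minimum reachable cycle mean for every run, periodic or not. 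With that observation filled in, your proof is complete and coincides with the cited one.
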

The importance of Theorem \ref{thm:limavg} for this work is that global values of limit metrics are computable for finite-state systems. This is fundamentally different from the infinite-state case, where a qualitative analysis is possible, but it is extremely challenging to determine regions of attraction of fixed lines, 
or computing tight bounds on the metrics when no fixed (periodic) solutions are found.
\begin{rem}\label{rem:algo} The algorithm for computing $\InfLimInf(\Ss)$ and $\SupLimSup(\Ss)$ is the same as the one to determine B\"uchi acceptance, and consists of computing SSCs and performing reachability to those \cite{chatterjee2010quantitative}.
The cycle mentioned in Theorem \ref{thm:limavg} is a \emph{minimum average cycle} (MAC) of the weighted digraph defined by $\Ss$. The algorithm to compute the value is due to \cite{karp1978characterization}, which also detects reachable SCCs and employs dynamic programming on those. The cycle can be recovered in $\bigO(|\Xs|)$ using the algorithm in \cite{chaturvedi2017note}.
\end{rem}
\begin{rem}\label{rem:algohscc}
In case $\Ss$ is infinite, a method to compute $\InfLimAvg(\Ss)$ using abstractions was proposed in \cite{gleizer2021hscc, gleizer2021computing}, and the same results can be extended to $\InfLimInf$. The main idea is to compute the metric on the abstraction and retrieve a cycle $\sigma$ that attains the minimum value (a MAC when computing ILA or any cycle in the SCC that attains the ILI). The value of the abstraction is a lower bound to the value of the concrete system \cite{gleizer2021computing}. Then, one verifies if $\sigma^\omega \in \Bs^\omega(\Ss)$ (in the PETC case, by using Theorem \ref{thm:verifycycle} with Prop.~\ref{prop:quadsubspace}): if true, then the value of the abstraction is in fact \emph{equal} to the value of the concrete system \cite{gleizer2021hscc}; if not, one can refine the abstraction and reiterate. The next subsection presents how to abstract a PETC traffic model and refine it.
\end{rem}

\subsection{\ensuremath{\mathit{l}}-Complete PETC traffic models}

Here we recover results of our previous work \cite{gleizer2020towards, gleizer2021hscc}, which determines how to build a \emph{finite-state system} that captures sequences of $l$ inter-sample times from system \eqref{eq:samplemap} with $\Ts = h\N$ (PETC) and their associated state-space partition. First, let us describe the system \eqref{eq:samplemap} as a transition system:
\begin{equation}\label{eq:S}
\begin{aligned}
\Ss = (\R^\nx&, \R^\nx, \Es, \Ys, H, \gamma) , \text{ where} \\ 
\Es & = \{(\xv,\xv') \in \R^n \times \R^n \mid \xv' = \Mm(\tau(\xv))\xv\} \\
\Ys &= \{h,2h,...,\bar{\tau}\} \\ 
H(\xv) &= \gamma(\xv,\xv') = \tau(\xv).
\end{aligned}
\end{equation}
Denote by $\Ks \coloneqq \Ys/h$, the set of possible inter-event times normalized by $h$.

\begin{defn}[\!\!\cite{gleizer2021hscc}]\label{def:lsim} Given an integer $l \geq 1$, the \emph{$l$-complete PETC traffic model} of system $\Ss$ from \eqref{eq:S} is the system $\Ss_l \coloneqq \left(\Xs_l, \Xs_l, \Es_l, \Ys, H_l, \gamma_l \right)$, with 
\begin{itemize}
\item $\Xs_l \coloneqq \Bs^l(\Ss)$,
\item $\Es_l = \{(k\sigma, \sigma k') \mid k,k' \in \Ks, \sigma \in \Ks^{l-1}, k\sigma, \sigma k' \in \Xs_l\},$
\item $H_l(k_1k_2...k_m) = \gamma_l(k_1k_2...k_m, \cdot) = hk_1.$
\end{itemize}
\end{defn}

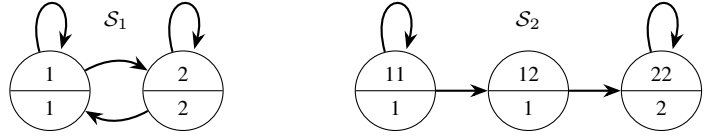
\begin{figure}
\centering
\begin{tikzpicture} [node distance = 1.75cm, on grid, auto, font=\footnotesize]
	\node (1) [state with output, minimum size=3em] {1 \nodepart{lower} 1};
	\node (2) [state with output, right = of 1, minimum size=3em] {
		2 \nodepart{lower} 2    };
	\path [-Stealth, thick]
	(1) edge[bend left]  (2) (2) edge[bend left] (1)
	(1) edge [loop above, >=Stealth] (1)
	(2) edge [loop above, >=Stealth] (2);
	
	\path[draw=none] (1) -- coordinate[midway](m) (2);
	\node (s1) [above =0.7cm of m] {$\Ss_1$};
\end{tikzpicture}~~~~~~~~~~~~
\begin{tikzpicture} [node distance = 1.75cm, on grid, auto, font=\footnotesize]
	\node (11) [state with output, minimum size=3em] {11 \nodepart{lower} 1};
	\node (12) [state with output, right = of 11, minimum size=3em] {
		12 \nodepart{lower} 1    };
	\node (22) [state with output, right = of 12, minimum size=3em] {
		22 \nodepart{lower} 2    };
	\path [-Stealth, thick]
	(11) edge  (12) (12) edge (22)
	(11) edge [loop above, >=Stealth] (11)
	(22) edge [loop above, >=Stealth] (22);
	
	\node (s1) [above =0.7cm of 12.center] {$\Ss_2$};
\end{tikzpicture}
\caption{\label{fig:lcomplete} $l$-complete models of the illustrative PETC system of Fig.~\ref{fig:modesandcones}, for $l=1$ (left) and $l=2$ (right). Each node represents a state, with the top label being the state label and the bottom being its output.}
\end{figure}

The state space of the model above is the set of $l$-long outputs that the PETC system $\Ss$ can generate, which can be computed by using the techniques described in \cite{gleizer2021hscc}. 
The output of a state $x$ is its next inter-sample time (divided by $h$), which is also the weight of any transition leaving $x$. The transition relation is what is called in \cite{schmuck2015comparing} the domino rule: a state associated with a sequence $k_1k_2...k_l$ must lead to a state whose next first $l-1$ samples are $k_2k_3...k_l$, because the system is deterministic, autonomous, and time-invariant. Hence, any state in $\Xs_l$ that starts with $k_2k_3...k_l$ is a possible successor of $k_1k_2...k_l$. Note that both $\Ss$ and $\Ss_l$ are simple WTSs. %
The following result gives the desired simulation refinement properties:
\begin{prop}[\!\!\cite{gleizer2021hscc}]\label{prop:refinements}
Consider the system $\Ss$ from Eq.~\eqref{eq:S} and $\Ss_l$ from Def.~\ref{def:lsim}, for some $l \geq 1$. Then, $\Ss \preceq \Ss_{l+1} \preceq \Ss_l,$ which implies that $\Ss \preceq_\Bs \Ss_{l+1} \preceq_\Bs \Ss_l.$
\end{prop}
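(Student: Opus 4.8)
The plan is to establish the two simulation relations $\Ss \preceq \Ss_{l+1}$ and $\Ss_{l+1} \preceq \Ss_l$ directly from Definition~\ref{def:sim}, and then invoke Theorem~\ref{thm:simimpliesbeh} to obtain the behavioral inclusions $\Ss \preceq_\Bs \Ss_{l+1} \preceq_\Bs \Ss_l$ for free. Since simulation is transitive (the composition of simulation relations is a simulation relation), it also suffices to prove the two links separately. The key observation driving both constructions is that the state of $\Ss_l$ encodes the next $l$ inter-sample times that the concrete system will produce, so the natural candidate relations map a concrete state (or a finer symbolic state) to the symbolic state recording its future output window.

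For the refinement $\Ss_{l+1} \preceq \Ss_l$, I would define the relation $\Rs \coloneqq \{(k_1\cdots k_{l+1},\, k_1\cdots k_l) \mid k_1\cdots k_{l+1} \in \Xs_{l+1}\}$, i.e.\ each $(l+1)$-long state is related to its length-$l$ prefix. First I would check that this prefix indeed lies in $\Xs_l = \Bs^l(\Ss)$: any behavior of length $l+1$ starting from some concrete state has a length-$l$ prefix that is also a behavior of $\Ss$, so the map is well defined, which also handles condition (i) since every state is initial in both systems. Condition (ii) is immediate because $H_{l+1}(k_1\cdots k_{l+1}) = hk_1 = H_l(k_1\cdots k_l)$. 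For condition (iii), given a transition $(k_1\cdots k_{l+1},\, k_2\cdots k_{l+2})$ in $\Es_{l+1}$ by the domino rule, I would exhibit the matching transition $(k_1\cdots k_l,\, k_2\cdots k_{l+1})$ in $\Es_l$, observing that both the source and target prefixes are genuine states of $\Ss_l$ and that the related pair $(k_2\cdots k_{l+2},\, k_2\cdots k_{l+1})$ again belongs to $\Rs$. The only subtlety is ensuring that $k_2\cdots k_{l+1} \in \Xs_l$, which follows because it is the length-$l$ prefix of the $(l+1)$-long behavior $k_2\cdots k_{l+2} \in \Xs_{l+1} = \Bs^{l+1}(\Ss)$.

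For the relation $\Ss \preceq \Ss_{l+1}$, I would define $\Rs \coloneqq \{(\xv,\, k_1\cdots k_{l+1}) \mid \xv \in \R^\nx,\ k_1\cdots k_{l+1} \in \Bs^{l+1}_{\xv}(\Ss)\}$, relating each concrete state to the symbolic states recording an $(l+1)$-window of outputs it can actually generate. Because $\Ss$ is deterministic, autonomous, and time-invariant, each $\xv$ generates a unique output window $y_0(\xv)\cdots y_l(\xv)$, so this relation is in effect a function; condition (i) holds because that window is by construction a member of $\Xs_{l+1} = \Bs^{l+1}(\Ss)$, and condition (ii) holds since $H(\xv) = \tau(\xv) = hy_0(\xv)/h \cdot h = hk_1 = H_{l+1}(k_1\cdots k_{l+1})$, using $k_1 = \tau(\xv)/h$. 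For condition (iii), the unique concrete successor is $\xv' = \Mm(\tau(\xv))\xv = f(\xv)$; its output window is $k_2\cdots k_{l+2}$ where $k_2\cdots k_{l+1}$ is the shifted continuation of $\xv$'s window and $k_{l+2}$ is the newly revealed output, and the domino rule guarantees $(k_1\cdots k_{l+1},\, k_2\cdots k_{l+2}) \in \Es_{l+1}$, with $(\xv', k_2\cdots k_{l+2}) \in \Rs$ by definition.

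I expect the main obstacle to be the bookkeeping around membership in the symbolic state spaces $\Xs_l = \Bs^l(\Ss)$ rather than any deep idea: one must argue carefully that every prefix or shifted suffix encountered is genuinely realizable by some concrete trajectory, so that the claimed target states actually exist in the abstraction and the domino-rule transitions are present in $\Es_l$ and $\Es_{l+1}$. This hinges on the fact that behaviors of $\Ss$ are closed under taking prefixes and under the one-step shift induced by $f$, which is exactly what the determinism and time-invariance of the sample system~\eqref{eq:samplemap} provide. Once the two simulations are in place, the behavioral-inclusion chain is a direct application of Theorem~\ref{thm:simimpliesbeh}.
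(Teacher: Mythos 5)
The paper does not prove this proposition itself — it is imported verbatim from \cite{gleizer2021hscc} — but your argument is correct and is the canonical one for $l$-complete abstractions: the prefix map witnesses $\Ss_{l+1}\preceq\Ss_l$, the (deterministic) map from a concrete state to its unique $(l+1)$-long output window witnesses $\Ss\preceq\Ss_{l+1}$, and all the membership checks in $\Xs_l=\Bs^l(\Ss)$ go through because behaviors of $\Ss$ are closed under prefixes and under the one-step shift induced by $f$. The behavioral inclusions then follow from Theorem~\ref{thm:simimpliesbeh} exactly as you say.
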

Fig.~\ref{fig:lcomplete} shows $l$-complete models $\Ss_1$ and $\Ss_2$ for the illustrative PETC example of Fig.~\ref{fig:modesandcones}: note that $(1,2) \in \Xs_2$ means that there are points in $\R^3$ that belong to $\Qs_1$, but the next sample would belong to $\Qs_2$; at the same time, $(2,1) \notin \Xs_2$, which implies that no points leave $\Qs_2$ after sampling, i.e., $\Qs_2$ is forward-invariant. This kind of observation is central when using abstractions to differentiate robust from fragile behaviors.

\subsection{Robust limit metrics}\label{ssec:robmet}

Limit metrics of PETC traffic can be computed using abstractions as described in Remark \ref{rem:algohscc}. However, as discussed in §\ref{ssec:prob}, these metrics can be rare in the sense that they only occur from a zero-measure initial set. E.g., revisiting cases 2 and 3 of Example \ref{ex:r2}, we have 1 and 2 unstable fixed points, respectively. In both cases, the unstable fixed point near $\theta=-1.3$ gives the value of InfLimInf and InfLimAvg; but for all other initial conditions $\theta_0$, trajectories $\{\theta_i\}$ are attracted to the stable fixed point in case 2 and the stable period-4 orbit in case 3. Thus, \emph{robust limit metrics should be oblivious to unstable orbits.} Let us properly define what stable and unstable behaviors are for systems with a finite output set:%
\begin{defn}[Stable behaviors]\label{def:stablebehaviors}
Consider a deterministic WTS $\Ss$ where $\Xs$ is a metric space and $\Ys$ is finite. A periodic behavior $\sigma^\omega \in \Bs^\omega(\Ss)$ is said to be \emph{stable} if there exists $x \in \Xs$ with a neighborhood $\Us$ such that every $x' \in \Us$ satisfies $\Bs^\omega_x = \Bs^\omega_{x'} = \sigma^\omega$ (as in Def.~\ref{def:stablepetc}). 
\end{defn}
\begin{defn}[Robust limit metrics]\label{def:robustmetrics}
Let $\Ss$ be a simple WTS, $\Bs_{\mathrm{u}}^\omega(\Ss)$ be the set of its unstable behaviors, and $V$ be a system limit metric ($\InfLimAvg$ or $\InfLimInf$). Then the robust version of the metric is
$ \Rob V(\Ss) \coloneqq V(\Bs^\omega(\Ss) \setminus \Bs_{\mathrm{u}}^\omega(\Ss)). $
\end{defn}
Removing unstable behaviors as the ones discussed above is safe in that small perturbations in the initial state lead to distant behaviors. However, consider case 4 of Example \ref{ex:r2} and its chaotic invariant set: it has infinitely many unstable orbits, and almost every orbit comes arbitrarily close to those orbits. In fact, due to transitivity, \emph{every initial solution starting on the chaotic invariant set will come arbitrarily close to any unstable orbit within it.} Thus, the infimum of a set of metrics on behaviors on a chaotic set, even when excluding the unstable ones, can be equal to one of its unstable behaviors. This deserves a further distinction between unstable behaviors.
\begin{defn}[Absolutely unstable behaviors]\label{def:aubehavior}
Consider a deterministic WTS $\Ss$ where $\Xs$ is a metric space and $\Ys$ is finite. A periodic behavior $\sigma^\omega \in \Bs^\omega(\Ss)$ is said to be \emph{absolutely unstable} (a.u.) if it is unstable and for almost all $x$ there exists $L \in \N$ such that $\forall l > L, $ $\sigma^l$ is not a subsequence of $\Bs_x^\omega(\Ss)$. The set of a.u.~behaviors of $\Ss$ is denoted by $\Bs_{\mathrm{au}}^\omega(\Ss).$
\end{defn}
A.u. behaviors are fragile in the sense that small perturbations to initial states lead to substantially different behaviors. 

Periodic behaviors of a PETC system $\Ss$ that occur in an abstraction $\Ss_l$ can be verified to be (absolutely) unstable (see proof in the Appendix).
\begin{prop}\label{prop:unstable}
Consider system $\Ss$ from Eq.~\eqref{eq:S} and let $\sigma^\omega \in \Bs^\omega(\Ss)$. Assume $\Mm_k$ is nonsingular for all $k \in \{1,...,\bar{k}\}$. Further, assume $\Mm_\sigma$ is mixed, and let $\vv_1, \vv_2, ..., \vv_n$ be the unitary eigenvectors of $\Mm$ ordered from largest-in-magnitude corresponding eigenvalue to smallest. Denote by $\As$ any linear invariant of $\Mm_\sigma$ containing $\vv_1$. (I) If $\As \nsubseteq \cl(\Qs_\sigma)$, then $\sigma^\omega$ is an unstable behavior. (II) If additionally the cycle $x_1x_2...x_c$ in $\Ss_l$ that generates $\sigma^\omega$ (i.e., $B_{\Ss_l}(\{x_1x_2...x_c\}^\omega) = \sigma^\omega$) is the only cycle of its SCC, then $\sigma^\omega$ is absolutely unstable in $\Ss$.
\end{prop}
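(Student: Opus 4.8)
The plan is to translate everything into the projective dynamics of $\Mm_\sigma$ on directions. By homogeneity (Prop.~\ref{prop:props}) the output sequence $\Bs_{\xv}^\omega(\Ss)$ depends only on the o-line of $\xv$, and on the invariant set that generates $\sigma^\omega$ the map $f^{|\sigma|}$ coincides with the linear map $\Mm_\sigma$. Hence deciding whether an open set of states keeps producing $\sigma$ amounts to studying forward orbits of the induced map of $\Mm_\sigma$ on the unit sphere, whose asymptotics are governed by the dominant eigenstructure of $\Mm_\sigma$. Since $\Mm_\sigma$ is mixed it is diagonalizable and its largest-magnitude eigenvalue is either a single real $\lambda_1$ (with eigenline $\vspan\{\vv_1\}$) or a complex-conjugate pair of equal modulus (with a two-dimensional real eigenplane); in either case, by the standard power-iteration argument, for every direction with a nonzero component along this dominant eigenspace the normalized iterates $\Mm_\sigma^k\xv/|\Mm_\sigma^k\xv|$ accumulate inside it. I take $\As$ to be this minimal dominant invariant; every linear invariant of $\Mm_\sigma$ containing $\vv_1$ contains it.

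For part (I) I argue by contraposition. Assume $\sigma^\omega$ is stable, so by Def.~\ref{def:stablebehaviors} there is an open set $\Us$ of states---equivalently, by homogeneity, an open set of directions---each producing $\sigma^\omega$ and therefore remaining in $\Qs_\sigma$ under every iterate of $\Mm_\sigma$. The set of directions with zero dominant component is a proper subspace, so $\Us$ contains a direction $\xv_\ast$ with nonzero $\vv_1$-component. All of $\{\Mm_\sigma^k\xv_\ast\}_{k\ge0}$ lie in $\Qs_\sigma$, while their normalized limit set lies in the dominant eigenspace $\As$; hence every accumulation point lies in $\As\cap\cl(\Qs_\sigma)$. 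In the real-dominant case this point is the direction of $\vv_1$, and homogeneity of $\Qs_\sigma$ promotes $\vv_1\in\cl(\Qs_\sigma)$ to $\As=\vspan\{\vv_1\}\subseteq\cl(\Qs_\sigma)$; in the complex case the closure of the orbit fills the dominant eigenplane, again giving $\As\subseteq\cl(\Qs_\sigma)$. This contradicts $\As\nsubseteq\cl(\Qs_\sigma)$, so $\sigma^\omega$ must be unstable.

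For part (II) I move to the abstraction $\Ss_l$, which satisfies $\Ss\preceq_\Bs\Ss_l$ (Prop.~\ref{prop:refinements}), so every concrete trace is a trace of $\Ss_l$. The combinatorial ingredient is that one cannot re-enter a strongly connected component after leaving it; since the cycle $x_1\cdots x_c$ is the \emph{only} cycle in its SCC, any run of $\Ss_l$ traverses it during at most one contiguous episode, and---once $l$ is large enough that an all-$\sigma$ output window pins the abstract state down---outputting $\sigma$ repeatedly forces the run onto exactly these cycle states. Consequently, for every concrete $\xv$ the factor $\sigma$ can be produced only inside a single maximal window of the trajectory $\{f^i(\xv)\}$. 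It remains to bound that window for almost every $\xv$: producing $\sigma$ over the whole infinite window means staying in $\Qs_\sigma$ under all powers of $\Mm_\sigma$, which by the mechanism of part (I) happens only for directions with zero dominant component lying in the generating invariant---a measure-zero set, together with its countably many $f$-preimages. Thus for almost every $\xv$ the $\sigma$-window is finite, of some length $L(\xv)$, so $\sigma^{l}$ is not a factor of $\Bs_{\xv}^\omega(\Ss)$ for any $l>L(\xv)$; combined with instability from part (I), $\sigma^\omega$ is absolutely unstable.

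\textbf{Main obstacle.} The delicate point is the uniform-per-trajectory bound in part (II): instability alone only says that nearby directions eventually leave $\Qs_\sigma$, not that a single trajectory cannot keep returning ever closer to the generating invariant and thereby realize unboundedly long $\sigma$-runs. The single-cycle-SCC hypothesis is precisely what rules this out, by forcing all $\sigma$-production into one non-recurring episode; making this rigorous requires care in relating long all-$\sigma$ concrete outputs to forced traversals of the abstract cycle, and in controlling the measure-zero exceptional set. A secondary subtlety is the complex-dominant case of part (I), where pinning the whole eigenplane inside $\cl(\Qs_\sigma)$ cleanly relies on density of the projective orbit, i.e.\ on the irrational-rotation genericity of mixed matrices.
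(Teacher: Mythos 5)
Your proof is correct and follows essentially the same route as the paper's: part (I) is the same dominant-eigenspace/power-iteration argument (the paper phrases it as a direct contradiction via its Lemma~\ref{lem:conestable} and then adds the measure-zero induction over finite prefixes, using nonsingularity of the $\Mm_k$ exactly where you invoke ``countably many $f$-preimages''), and part (II) is the same once-you-leave-an-SCC-you-cannot-return argument on $\Ss_l$, with your per-trajectory bound $L(\xv)$ playing the role of the paper's uniform $L = c/|\sigma|$ after the escape time. The complex-dominant caveat you flag is shared by the paper's own proof, which likewise leans on irrational-rotation genericity when the dominant invariant $\As$ is two-dimensional.
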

Hereafter we shall denote a linear invariant $\As$ containing $\vv_1$ as in Prop.~\ref{prop:unstable} a \emph{dominant} linear invariant, after the concept of dominant modes in linear systems. %
Referring again to Fig.~\ref{fig:modesandcones} and the corresponding 2-complete model (Fig.~\ref{fig:lcomplete}), we see two periodic behaviors, $1^\omega$ and $2^\omega$. The illustrated o-line is an invariant of $\Mm(1)$ that is not dominant (as can be inferred by the trajectory of gray points that diverge from the line); moreover, the cycle of $\Ss_2$ that generates $1^\omega$ is a simple cycle, the node 11 with a self loop. This implies that $1^\omega$ is absolutely unstable. Note that this conclusion could not be obtained by inspecting $\Ss_1$, which is a complete graph without simple cycles. The behavior $2^\omega$, on the other hand, is stable.

Clearly, removing only a.u.~behaviors is safe to give a lower bound estimate to $\Rob V(\Ss)$, i.e., $V(\Bs^\omega(\Ss) \setminus \Bs_{\mathrm{au}}^\omega(\Ss)) \leq V(\Bs^\omega(\Ss) \setminus \Bs_{\mathrm{u}}^\omega(\Ss))$. An equality holds when $\Ss$ is not chaotic, since all unstable behaviors are also absolutely unstable. Therefore, determining when $\Ss$ is or is not chaotic is critical to compute the exact value of $\Rob V(\Ss)$. As we see next, chaos on $\Ss$ can be estimated from the abstraction $\Ss_l$.

\subsection{Estimating chaos in abstractions}

In this section we show how to detect (and quantify) chaos on a PETC traffic model $\Ss$, and when one can conclude that $\Ss$ is not chaotic. A commonly used measure of chaos is the topological entropy $h(\Ss)$ \cite{robinson1999dynamical}, satisfying $h(\Ss) \geq 0$, with $h(\Ss) = 0$ implying there is no chaos. However, instead of a topological measure, we are interested in a measure of chaos of the output of the system: if the state is behaving chaotically but this is not reflected in the output, it does not interfere in the metrics we are interested. Therefore, we shall introduce here a notion called \emph{behavioral entropy}, which is a natural extension of the original concept.%

\begin{defn}[Behavioral entropy]\label{def:entropy} Consider a system $\Ss$ and equip $\Ys$ with a metric $d$. A set $\Ws \subset \Bs^\omega(\Ss)$ is called $(n,\epsilon)$-separated if for all behaviors $\yv, \yv'\in\Ws$, where $\yv = y_0y_1...y_i...$ and $\yv' = y'_0y'_1...y'_i...,$ we have $d(y_i,y_i') > \epsilon$ for all $i \leq n$. Let $s(n,\epsilon,\Ss)$ be the maximum cardinality of any $(n,\epsilon)$-separated set. The behavioral entropy is the quantity
\begin{equation}\label{eq:entropy}
h(\Ss) \coloneqq \lim_{\epsilon\to 0}\limsup_{n\to\infty}\frac{\log(s(n,\epsilon,\Ss))}{n}.
\end{equation}
In particular, if $|\Ys| < \infty$ and the distance metric is $d(y,y') = 0$ if $y = y'$ and $d(y,y') = 1$ otherwise, we can ignore the $\epsilon$ component, and it turns out that
\begin{equation}\label{eq:wordentropy}
h(\Ss) = \limsup_{n\to\infty}\frac{\log(N(n,\Ss))}{n},
\end{equation}
where $N(n,\Ss)$ is the number of different words of length $n$ over the alphabet $\Ys$ that are possible trace segments of $\Ss$.

A system is called \emph{behaviorally chaotic} whenever its behavioral entropy is positive. 
\end{defn}
\begin{rem}\label{rem:entropy_subshift}
The topological entropy also takes the form in Eq.~\eqref{eq:wordentropy} for \emph{subshifts of finite type}, an abstraction used for autonomous dynamical systems to study their topological properties (see \cite{robinson1999dynamical}). 
\end{rem}

Definition \ref{def:entropy} takes a behavioral approach \cite{willems1991paradigms} to extend the original definition \cite{robinson1999dynamical} for systems that are possibly nondeterministic and have output maps. If $H = \Id$ and $\Post(x) = \{f(x)\}$ for some continuous map $f : \Xs \to \Xs$, we recover the original notion. It may seem unproductive to extend a measure of chaos to non-deterministic systems, as these should all be chaotic in some sense; however, this is not always the case. %
For example, consider $\Ss_2$ of Fig.~\ref{fig:lcomplete}: %
it is easy to see that $N(n,\Ss_2) = n+1:$ $11...11, 11...12, ..., 122...22, 22...22$. Hence, $h(\Ss_2) = \lim_{n\to\infty}(\log(n+1)/n) = 0,$ and this system is not (behaviorally) chaotic.

\begin{prop}\label{prop:entropybounds}
Consider two transition systems $\Ss_a$ and $\Ss_b$ with $\Ys_a = \Ys_b = \Ys$ s.t.~$\Ss_a \preceq_{\Bs} \Ss_b$. If $|\Ys| < \infty$, then $h(\Ss_a) \leq h(\Ss_b)$.
\end{prop}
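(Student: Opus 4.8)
The plan is to reduce everything to the word-counting characterization of behavioral entropy and then establish a pointwise inequality between the counting functions. Since the hypothesis assumes $|\Ys| < \infty$, I may equip $\Ys$ with the discrete metric and invoke the form of the entropy given in Eq.~\eqref{eq:wordentropy}, namely
\begin{equation*}
h(\Ss_a) = \limsup_{n\to\infty}\frac{\log N(n,\Ss_a)}{n}, \qquad h(\Ss_b) = \limsup_{n\to\infty}\frac{\log N(n,\Ss_b)}{n},
\end{equation*}
where $N(n,\Ss)$ counts the distinct length-$n$ trace segments of $\Ss$. The whole argument then hinges on proving that $N(n,\Ss_a) \leq N(n,\Ss_b)$ for every $n \in \N$, after which the conclusion follows from elementary monotonicity.

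First I would make the notion of \emph{trace segment} precise: a word $w \in \Ys^n$ is a length-$n$ segment of $\Ss$ if there exists an infinite behavior $\yv = y_0 y_1 y_2 \cdots \in \Bs^\omega(\Ss)$ and an index $j$ such that $w = y_j y_{j+1} \cdots y_{j+n-1}$. Thus $N(n,\Ss)$ is the cardinality of the set $F_n(\Ss)$ of all such words. The key step is then immediate from the definition of behavioral inclusion: take any $w \in F_n(\Ss_a)$, witnessed by some $\yv \in \Bs^\omega(\Ss_a)$ and position $j$. Because $\Ss_a \preceq_\Bs \Ss_b$ means $\Bs^\omega(\Ss_a) \subseteq \Bs^\omega(\Ss_b)$, the \emph{same} infinite behavior satisfies $\yv \in \Bs^\omega(\Ss_b)$, so $w$ is also a length-$n$ segment of $\Ss_b$, i.e.~$w \in F_n(\Ss_b)$. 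Hence $F_n(\Ss_a) \subseteq F_n(\Ss_b)$ and therefore $N(n,\Ss_a) = |F_n(\Ss_a)| \leq |F_n(\Ss_b)| = N(n,\Ss_b)$ for all $n$.

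To finish, I would apply monotonicity of $\log$ to the pointwise bound, giving $\log N(n,\Ss_a) \leq \log N(n,\Ss_b)$, divide by $n > 0$, and use monotonicity of $\limsup$ to conclude $h(\Ss_a) \leq h(\Ss_b)$. The argument is essentially routine, so there is no deep obstacle; the only point demanding care is fixing the precise reading of ``trace segment.'' Adopting the factor (contiguous-substring) interpretation above is the cleanest choice, since the witnessing behavior transfers directly and no non-blocking or determinism hypothesis on $\Ss_a, \Ss_b$ is needed. I would briefly remark that the same inequality holds verbatim under the prefix interpretation $N(n,\Ss) = |\Bs^n(\Ss)|$, provided the systems are non-blocking so that finite behaviors coincide with prefixes of infinite ones, thereby covering both natural formalizations of Def.~\ref{def:entropy}.
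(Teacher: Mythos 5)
Your proposal is correct and follows essentially the same route as the paper's proof: the paper likewise derives $N(n,\Ss_a) \leq N(n,\Ss_b)$ for all $n$ directly from behavioral inclusion and then concludes by monotonicity of $\log$ (and of $\limsup$). Your additional care in pinning down the meaning of ``trace segment'' and noting that both the factor and prefix readings work is a reasonable elaboration of what the paper dismisses as trivial, but it does not change the argument.
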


\begin{proof}
Trivially, from behavioral inclusion \cite{tabuada2009verification}, $\forall n \in \N, N(n,\Ss_a) \leq N(n,\Ss_b)$. The result follows from monotonicity of the $\log$ function.
\end{proof}

The question now is how to compute the behavioral entropy of a finite-state system. This result is known for topological entropy of subshifts of finite type, which is the same as a finite-state transition system with $H = \Id$:

\begin{thm}[\!\!{{\cite[Theorem IX.1.9]{robinson1999dynamical}}}]\label{thm:subshiftentropy}\!\!%
\footnote{In \cite{robinson1999dynamical}, the internal behavior from an initial state is called \emph{itinerary}. The original Theorem states that this quantity is also the topological entropy of the subshift $\Ss$, but here we only need the formula relating the limit to the spectral radius of $\Tm$.}
Let $\Ss$ be a finite system and $N'(n,\Ss)$ be the number of different $n$-length words over the alphabet $\Xs$ generated by $\Ss$ (note that this reflects the \emph{internal} behavior of $\Ss$). Then,
\begin{equation*}
\limsup_{n\to\infty}\frac{\log(N'(n,\Ss))}{n} = \log\lambda_1(\Tm),
\end{equation*}
where $\Tm$ is the incidence matrix of $\Ss$.
\end{thm}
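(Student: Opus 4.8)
The plan is to reduce the counting of admissible words to counting walks in the digraph $G$ associated with $\Ss$, and then to read off the growth rate from the incidence matrix via the spectral radius formula. First I would observe that an $n$-length word $x_1x_2\cdots x_n$ generated by $\Ss$ is precisely a walk with $n$ vertices, hence $n-1$ edges, in $G$. By the standard fact that powers of an incidence matrix count walks, the number of such words that start at $x_i$ and end at $x_j$ is $(\Tm^{n-1})_{ij}$. Taking all states to be initial (as holds for subshifts of finite type and for the $l$-complete models, where $\Xs_0 = \Xs$) yields the closed form
\[
N'(n,\Ss) \;=\; \sum_{i,j} (\Tm^{n-1})_{ij} \;=\; \mathbf{1}\tran\,\Tm^{n-1}\,\mathbf{1},
\]
where $\mathbf{1}$ is the all-ones vector.

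Second, I would pin down $\lambda_1(\Tm)$. Since $\Tm$ is a nonnegative $(0,1)$-matrix, the Perron--Frobenius theorem guarantees that its spectral radius is itself an eigenvalue; as $\lambda_1(\Tm)$ is by definition the largest-in-magnitude eigenvalue, it follows that $\lambda_1(\Tm)$ is real and nonnegative and equals the spectral radius of $\Tm$, so that $\log\lambda_1(\Tm)$ is well defined. (For a non-blocking finite system $\lambda_1(\Tm)\geq 1$, since such a system always contains a cycle.)

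Third --- the crux --- I would squeeze $N'(n,\Ss)$ between two quantities of identical exponential growth rate. Because every entry of $\Tm^{n-1}$ is nonnegative, $\mathbf{1}\tran\Tm^{n-1}\mathbf{1}$ is simply the sum of its entries, so
\[
\|\Tm^{n-1}\|_{\max} \;\leq\; N'(n,\Ss) \;\leq\; |\Xs|^2\,\|\Tm^{n-1}\|_{\max},
\]
with $\|\cdot\|_{\max}$ the largest-magnitude entry. Taking logarithms, dividing by $n$, and letting $n\to\infty$, the factor $|\Xs|^2$ contributes $\tfrac{2}{n}\log|\Xs|\to 0$ and drops out, so $N'(n,\Ss)$ and $\|\Tm^{n-1}\|_{\max}$ share the same rate. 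Finally I would invoke the spectral radius (Gelfand) formula $\lim_{n\to\infty}\|\Tm^{n}\|^{1/n} = \lambda_1(\Tm)$: although it is classically stated for a submultiplicative norm, equivalence of all norms on the finite-dimensional matrix space shows that $\tfrac{1}{n}\log\|\Tm^{n}\| \to \log\lambda_1(\Tm)$ for any norm, including the entrywise maximum. Combining the pieces gives $\limsup_{n\to\infty}\tfrac{1}{n}\log N'(n,\Ss) = \log\lambda_1(\Tm)$, with the limit in fact existing.

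I expect the main obstacle to be the lower bound, i.e.\ ruling out that $N'(n,\Ss)$ grows strictly slower than $\lambda_1(\Tm)^n$. The nonnegativity of $\Tm$ is exactly what makes this work: there is no cancellation among the walk counts, so the entry-sum cannot be asymptotically smaller than the largest entry, and the squeeze above is valid. A secondary point to handle carefully is the passage from the submultiplicative-norm form of Gelfand's formula to the entrywise-max functional; this is settled cleanly by norm equivalence and requires no Perron--Frobenius beyond the identification of $\lambda_1(\Tm)$ with the spectral radius. If one instead restricted runs to a proper $\Xs_0 \subsetneq \Xs$, the same conclusion would hold provided $\Xs_0$ can reach a maximal strongly connected component realizing $\lambda_1(\Tm)$; in our applications $\Xs_0 = \Xs$, so this caveat is vacuous.
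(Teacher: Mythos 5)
Your proof is correct. Note that the paper itself gives no proof of this statement --- it is imported verbatim from Robinson (Theorem IX.1.9) with only a footnote clarifying which part of the original result is used --- so there is nothing to diverge from; your walk-counting identity $N'(n,\Ss)=\mathbf{1}\tran\Tm^{n-1}\mathbf{1}$, the entrywise squeeze, and the norm-independent Gelfand formula constitute precisely the standard argument for subshifts of finite type, and your side remarks (Perron--Frobenius identifying $\lambda_1(\Tm)$ with the spectral radius, the $\Xs_0=\Xs$ caveat, and non-blockingness ensuring $\lambda_1(\Tm)\geq 1$) are exactly the right points to flag.
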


Under a detectability condition of $\Ss$, the same result holds for behavioral entropy:

\begin{defn}[Detectability]
A transition system $\Ss$ is said to be $l$-detectable if there exists a finite $l \in \N$ such that, for each word $w \in \Bs^+(\Ss), |w| \geq l$, there exists a unique $x \in \Xs$ such that $w \in \Bs^+_x(\Ss)$.
\end{defn}

\begin{thm}\label{thm:valueofentropy}
Let $\Ss$ be an $l$-detectable finite-state system for some finite $l \in \N$, and let $\Tm$ be its incidence matrix. Then,
\begin{equation}\label{eq:behavioralentropy}
h(\Ss) = \log\lambda_1(\Tm).
\end{equation}
\end{thm}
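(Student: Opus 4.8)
The plan is to relate the behavioral entropy, which counts distinct \emph{traces} (output words) $N(n,\Ss)$, to the internal entropy of Theorem~\ref{thm:subshiftentropy}, which counts distinct \emph{runs} (state words) $N'(n,\Ss)$. Since Theorem~\ref{thm:subshiftentropy} already identifies the internal growth rate as $\limsup_{n\to\infty}\frac{\log N'(n,\Ss)}{n} = \log\lambda_1(\Tm)$, and since $\Ss$ is finite-state so that Eq.~\eqref{eq:wordentropy} applies with $h(\Ss) = \limsup_{n\to\infty}\frac{\log N(n,\Ss)}{n}$, it suffices to prove that $N(n,\Ss)$ and $N'(n,\Ss)$ grow at the same exponential rate. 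Concretely, I will establish the two-sided estimate
\begin{equation*}
N(n,\Ss) \;\le\; N'(n,\Ss) \;\le\; |\Xs|^{\,l-1}\, N(n,\Ss), \qquad n \ge l,
\end{equation*}
and then squeeze the growth rates.

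The left inequality is immediate: the pointwise application of $H$ maps each run $x_0x_1\cdots x_{n-1}$ to its trace $H(x_0)\cdots H(x_{n-1})$, and this map is surjective onto the set of length-$n$ traces, so the number of traces cannot exceed the number of runs. The right inequality is where detectability enters, and it is the crux of the argument. Fix a trace $\yv = y_0y_1\cdots y_{n-1}$ with $n \ge l$, and consider any run $x_0\cdots x_{n-1}$ producing it. For each position $i$ with $0 \le i \le n-l$, the length-$l$ window $y_iy_{i+1}\cdots y_{i+l-1}$ is an external behavior emitted starting from the state $x_i$ (it is the trace of the sub-run $x_ix_{i+1}\cdots x_{i+l-1}$), so it belongs to $\Bs^+_{x_i}(\Ss)$. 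By $l$-detectability, this window of length $l$ has a \emph{unique} emitting state, whence $x_i$ is uniquely determined by $\yv$. Thus the states $x_0,x_1,\dots,x_{n-l}$ are forced by the trace, and only the final $l-1$ states $x_{n-l+1},\dots,x_{n-1}$ remain free; each lies in $\Xs$, giving at most $|\Xs|^{\,l-1}$ preimage runs for the fixed trace $\yv$. Summing over all traces yields $N'(n,\Ss) \le |\Xs|^{\,l-1}\,N(n,\Ss)$.

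Taking logarithms, dividing by $n$, and passing to the $\limsup$, the displayed estimate gives
\begin{equation*}
\limsup_{n\to\infty}\frac{\log N(n,\Ss)}{n} \;\le\; \limsup_{n\to\infty}\frac{\log N'(n,\Ss)}{n} \;\le\; \lim_{n\to\infty}\frac{(l-1)\log|\Xs|}{n} + \limsup_{n\to\infty}\frac{\log N(n,\Ss)}{n},
\end{equation*}
and since the polynomial correction term vanishes, all three expressions coincide. Combining with Theorem~\ref{thm:subshiftentropy} delivers $h(\Ss) = \log\lambda_1(\Tm)$. The main obstacle is precisely the second inequality: one must convert the $l$-detectability hypothesis, which a priori only fixes the \emph{starting} state of a sufficiently long output word, into a uniform-over-$n$ bound on the fiber size of the run-to-trace map. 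The sliding-window observation---that detectability pins down every state except a fixed number ($l-1$) at the tail, independently of $n$---is the key step that makes the polynomial slack harmless in the exponential limit. I would also note in passing that only $|\Xs| < \infty$ (finiteness of the state set) and finiteness of $H(\Xs)$ are used, so the hypothesis $|\Ys| < \infty$ implicit in Eq.~\eqref{eq:wordentropy} is automatically met for a finite-state system.
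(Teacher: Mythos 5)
Your proof is correct and follows essentially the same route as the paper's: both squeeze $N'(n,\Ss)$ between $N(n,\Ss)$ and a constant multiple of it, using $l$-detectability to pin down states from length-$l$ output windows, and then observe that the constant washes out in the exponential growth rate before invoking Theorem~\ref{thm:subshiftentropy}. The only (immaterial) difference is in how the constant is obtained: the paper bounds $N'(n) \le N(n+l) \le |\Ys|^l N(n)$ by extending traces, whereas you bound the fiber of the run-to-trace map at length $n$ directly by $|\Xs|^{l-1}$ via the sliding-window argument.
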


\begin{proof}
Let $s \coloneqq |\Ys|$. Because of $l$-detectability, every $(n+l)$-long external behavior of $\Ss$ gives a unique $n$ internal behavior, hence $N(n+l,\Ss) \geq N'(n,\Ss)$. From every  external behavior of length $n$, there can be at most $s^l$ external behaviors of length $n+l$ (simply concatenate every possible word in $\Ys^l$ to complete the length). Thus, $s^lN(n,\Ss) \geq N(n+l,\Ss)$.
Finally, since the output map $H$ is single-valued, the number of external behaviors can never be bigger than the number of different internal behaviors: $N(n,\Ss) \leq N'(n,\Ss)$. Combining these inequalities, the following holds for all $n > l$:
$$ N(n,\Ss) \leq N'(n,\Ss) \leq s^lN(n,\Ss). $$
Now, 
\begin{equation*} 
\limsup_{n\to\infty}\frac{\log(s^lN(n,\Ss))}{n}  = \limsup_{n\to\infty}\left(\frac{\log(s^l)}{n} + \frac{\log(N(n,\Ss))}{n}\right) 
= \limsup_{n\to\infty}\frac{\log(N(n,\Ss))}{n}.
\end{equation*}
The sandwich rule and Theorem \ref{thm:subshiftentropy} conclude the proof.
\end{proof}

The following results help us apply Theorem \ref{thm:valueofentropy} to the PETC traffic model.

\begin{prop}\label{prop:simplecycles}
A non-blocking finite-state $l$-detectable autonomous transition system $\Ss$ has zero behavioral entropy if and only if all the strongly connected components (SCCs) of its associated graph are isolated nodes or simple cycles.
\end{prop}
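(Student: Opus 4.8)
The plan is to characterize zero behavioral entropy via Theorem~\ref{thm:valueofentropy} as the condition $\lambda_1(\Tm) \leq 1$, and then translate the spectral condition into a graph-theoretic one about the structure of the strongly connected components (SCCs). Since $\Ss$ is $l$-detectable and finite-state, Theorem~\ref{thm:valueofentropy} gives $h(\Ss) = \log\lambda_1(\Tm)$, so $h(\Ss) = 0$ if and only if $\lambda_1(\Tm) \leq 1$ (and since $\Tm$ is a nonnegative $(0,1)$-matrix with at least one edge leaving every node by non-blocking, $\lambda_1(\Tm) \geq 1$, hence $h(\Ss) = 0 \iff \lambda_1(\Tm) = 1$). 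The entire problem thus reduces to: \emph{for a $(0,1)$ incidence matrix of a non-blocking digraph, when is the spectral radius exactly $1$?}

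First I would reduce the spectral-radius question to the level of SCCs. The spectral radius of $\Tm$ equals the maximum of the spectral radii of the diagonal blocks obtained from the condensation into SCCs, because edges between distinct SCCs form a strictly upper-triangular block structure (after topological ordering of the condensation) and contribute only zeros to the block-diagonal that governs the eigenvalues. Hence $\lambda_1(\Tm) = \max_{\Cs} \lambda_1(\Tm_{\Cs})$, where $\Tm_{\Cs}$ ranges over the incidence submatrices of each SCC $\Cs$. So $h(\Ss) = 0$ if and only if every SCC $\Cs$ satisfies $\lambda_1(\Tm_{\Cs}) \leq 1$.

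Next I would analyze a single SCC. An isolated node (no self-loop) has $\Tm_{\Cs} = [0]$, giving $\lambda_1 = 0 \leq 1$; a simple cycle of length $c$ has incidence matrix a cyclic permutation matrix, whose spectral radius is exactly $1$. Conversely, I would argue that any strongly connected digraph that is \emph{not} a simple cycle (and not an isolated node) must have spectral radius strictly greater than $1$: such a component has more edges than nodes, so some node has out-degree at least $2$, and strong connectivity lets one build two distinct closed walks of a common length through that node, forcing the number of closed walks of length $n$ (which is $\operatorname{Tr}(\Tm_{\Cs}^n)$, growing like $\lambda_1(\Tm_{\Cs})^n$) to grow strictly exponentially. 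Equivalently, by Perron--Frobenius applied to the irreducible block $\Tm_{\Cs}$, the spectral radius is $1$ only when each row and column sum is forced to $1$, i.e.\ every node has out-degree and in-degree exactly one, which for a strongly connected graph means precisely a single simple cycle.

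\textbf{The main obstacle} is the converse direction: rigorously showing that a strongly connected component with any node of out-degree $\geq 2$ has $\lambda_1 > 1$. The clean route is Perron--Frobenius: $\Tm_{\Cs}$ irreducible gives a strictly positive eigenvector $\vv$ for $\lambda_1 = \lambda_1(\Tm_{\Cs})$, and from $\Tm_{\Cs}\vv = \lambda_1 \vv$ one reads that $\lambda_1 v_i = \sum_j T_{ij} v_j \geq (\text{out-degree of } i)\cdot \min_j v_j$ at the node $i$ minimizing $v_i$; combined with the companion inequality at the maximizing node, an out-degree exceeding one forces $\lambda_1 > 1$ strictly. I would make sure to invoke irreducibility (guaranteed for a nontrivial SCC) so that Perron--Frobenius applies, and to handle the degenerate single-node cases (self-loop gives a simple cycle of length one with $\lambda_1 = 1$; no self-loop gives an isolated node with $\lambda_1 = 0$) separately so that the classification is exhaustive.
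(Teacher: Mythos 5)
Your proposal is correct and follows essentially the same route as the paper: reduce $h(\Ss)=0$ to $\lambda_1(\Tm)=1$ via Theorem~\ref{thm:valueofentropy}, decompose the spectrum over the SCCs of the condensation, and classify each SCC by its outdegrees. The only difference is that where the paper cites \cite[Theorem 2.1]{brualdi2010spectra} for the fact that an irreducible block has spectral radius $1$ iff it has constant outdegree $1$, you prove that fact directly from the Perron--Frobenius row-sum bounds, which is a valid (and self-contained) substitute.
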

\begin{proof}
The spectrum of a digraph is the union of the spectra of its SCCs \cite{brualdi2010spectra}. Because $\Ss$ is non-blocking, it must have at least one cycle. The adjacency matrix of an isolated node is $[\,0\,]$, thus its spectrum is $\{0\}$. Further, all vertices of a simple cycle have only one outgoing edge, hence the corresponding SCC has a constant outdegree of 1. From \cite[Theorem 2.1]{brualdi2010spectra}, the spectral radius of an SCC is 1 iff it has constant outdegree 1. Hence, the spectral radius of the whole graph is $\max(1,0) = 1$, whose log is 0.
\end{proof}

\begin{rem}\label{rem:lcompleteisldetectable}
The $l$-complete PETC traffic model of Def.~\ref{def:lsim} is $l$-detectable because, by definition, each $k_1k_2...k_l \in \Xs_l$ is the unique state that generates the finite behavior $hk_1, hk_2, ... hk_l$.
\end{rem}

\begin{thm}\label{thm:petcentropy}
Consider the PETC system \eqref{eq:plant}--\eqref{eq:quadtrig} ($\Ts=h\N$), its traffic model $\Ss$ from Eq.~\eqref{eq:S} and its $l$-complete traffic model (Def.~\ref{def:lsim}) $\Ss_l,$ with $l \in \N$. The following assertions are true:
\begin{enumerate}[i)]
\item $h(\Ss) \leq h(\Ss_l)$;
\item If all SCCs of $\Ss_l$ are simple cycles, then $h(\Ss) = h(\Ss_l) = 0$, i.e., $\Ss$ is not chaotic. 
\end{enumerate}
\end{thm}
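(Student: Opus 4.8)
The plan is to observe that both claims follow by assembling the abstraction machinery already developed, so that essentially no new calculation is required. For part (i), I would invoke the refinement chain of Proposition~\ref{prop:refinements}, which in particular gives $\Ss \preceq_\Bs \Ss_l$, and then apply the entropy-monotonicity result of Proposition~\ref{prop:entropybounds}. The only hypothesis to check there is that the output alphabet is finite, which holds because $\Ys = \{h, 2h, \ldots, \bar\tau\}$ is a finite set of multiples of $h$. This immediately yields $h(\Ss) \leq h(\Ss_l)$, settling (i).

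For part (ii), I would first confirm that $\Ss_l$ satisfies the premises of Proposition~\ref{prop:simplecycles}: it is finite-state, non-blocking (being a WTS by Def.~\ref{def:lsim}), autonomous, and $l$-detectable by Remark~\ref{rem:lcompleteisldetectable}. Under the hypothesis that all SCCs of $\Ss_l$ are simple cycles, they are in particular ``isolated nodes or simple cycles'', so Proposition~\ref{prop:simplecycles} gives $h(\Ss_l) = 0$ (equivalently, via Theorem~\ref{thm:valueofentropy}, the spectral radius satisfies $\lambda_1(\Tm) = 1$, whose logarithm is $0$). Combining this with part (i) gives $h(\Ss) \leq h(\Ss_l) = 0$.

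To close the argument I would pin this inequality to an equality by noting that behavioral entropy is nonnegative: since $\Ss$ is non-blocking there is at least one infinite behavior, so $N(n,\Ss) \geq 1$ and $\log N(n,\Ss) \geq 0$ for every $n$, whence $h(\Ss) \geq 0$. Therefore $h(\Ss) = 0$, meaning $\Ss$ is not behaviorally chaotic, as desired.

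The main obstacle is not a deep one; it is merely verifying that the hypotheses of the auxiliary results line up. In particular, I would be careful about the mild mismatch between the statement (all SCCs are simple cycles) and Proposition~\ref{prop:simplecycles} (isolated nodes \emph{or} simple cycles) --- the former is a special case of the latter, so no gap arises --- and about explicitly invoking the nonnegativity of $h(\Ss)$ to upgrade the one-sided bound from part (i) into the claimed equality. I would also double-check that $\Ss_l$ inherits non-blockingness from the $l$-complete construction so that Proposition~\ref{prop:simplecycles} genuinely applies.
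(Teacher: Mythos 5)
Your proposal is correct and follows essentially the same route as the paper's own proof: Proposition~\ref{prop:refinements} plus Proposition~\ref{prop:entropybounds} for part (i), and Proposition~\ref{prop:simplecycles} together with nonnegativity of $h(\Ss)$ for part (ii). Your extra care in checking the hypotheses (finiteness of $\Ys$, non-blockingness, $l$-detectability) is welcome but does not change the argument.
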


\begin{proof}
Assertion (i): Prop.~\ref{prop:refinements} gives that $\Ss \preceq \Ss_l$; then, from Theorem \ref{thm:simimpliesbeh}, $\Ss \preceq_\Bs \Ss_l$; finally, Prop.~\ref{prop:entropybounds} concludes the proof.
Assertion (ii): $\Ss_l$ satisfies the premises of Prop.~\ref{prop:simplecycles}. Hence, $h(\Ss_l) = 0$. Using assertion (i) and the fact that $h(\Ss) \geq 0$, we conclude that $h(\Ss) = 0$.
\end{proof}

Revisiting Fig.~\ref{fig:lcomplete}, it is easy to see that $h(\Ss_1) = \log(2) = 1$ bit (base 2), while $h(\Ss_2) = 0$, which implies that the example of Fig.~\ref{fig:modesandcones} is not chaotic.

\subsection{Estimating and computing robust metrics}

Now we are equipped with the necessary tools to estimate robust limit metrics using an abstraction and determine when they are equal to the concrete system's or simply a lower bound. Based on the discussion in §\ref{ssec:robmet}, we define the following robust limit metric for the abstraction:
\begin{defn}[Robust metric for $\Ss_l$] \label{def:robustsl}
Consider system $\Ss$ from Eq.~\eqref{eq:S} and an $l$-complete model for it, $\Ss_l$ (Def.~\ref{def:lsim}). Let $\tilde\Bs_{\mathrm{au}}^\omega(\Ss_l)$ be the set of behaviors of $\Ss_l$ that are are simple cycles in $\Ss_l$ and are absolutely unstable in $\Ss$. We define $\Rob V(\Ss_l)$ as $V(\Bs^\omega(\Ss_l) \setminus \tilde\Bs_{\mathrm{au}}^\omega(\Ss_l))$.
\end{defn}

\begin{thm}\label{thm:robustvalue}
Consider system $\Ss$ from Eq.~\eqref{eq:S} and its $l$-complete model $\Ss_l$. Consider $V \in \{\InfLimInf, \InfLimAvg\}$; then $\Rob V(\Ss'_l) \leq \Rob V(\Ss)$. Moreover, if all SCCs of $\Ss_l$ are simple cycles, and the minimizing cycle $\sigma$ satisfies $\sigma^\omega \in \Bs^\omega(\Ss)$, 
then $\Rob V(\Ss_l) = \Rob V(\Ss).$ 
\end{thm}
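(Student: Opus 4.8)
The plan is to prove the two assertions separately, each reducing to two facts: the monotonicity of the infimum-type value $V$ under set inclusion (if $\Bs \subseteq \Bs'$ then $V(\Bs') \leq V(\Bs)$, since the infimum over a larger set cannot be larger), and the transient-insensitivity of $V$ from Prop.~\ref{prop:limmetrics}. Throughout I read $\Ss'_l$ as $\Ss_l$.

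For the unconditional inequality $\Rob V(\Ss_l) \leq \Rob V(\Ss)$, I would establish the set inclusion
$$\Bs^\omega(\Ss) \setminus \Bs_{\mathrm{u}}^\omega(\Ss) \ \subseteq\ \Bs^\omega(\Ss_l) \setminus \tilde\Bs_{\mathrm{au}}^\omega(\Ss_l).$$
By Prop.~\ref{prop:refinements} and Theorem~\ref{thm:simimpliesbeh} we have $\Ss \preceq_\Bs \Ss_l$, so $\Bs^\omega(\Ss) \subseteq \Bs^\omega(\Ss_l)$ and any $\beta$ on the left is a behavior of $\Ss_l$. It remains to check $\beta \notin \tilde\Bs_{\mathrm{au}}^\omega(\Ss_l)$: by Def.~\ref{def:robustsl} every element of $\tilde\Bs_{\mathrm{au}}^\omega(\Ss_l)$ is absolutely unstable in $\Ss$, hence unstable in $\Ss$ (a.u.\ implies unstable, Def.~\ref{def:aubehavior}), hence lies in $\Bs_{\mathrm{u}}^\omega(\Ss)$, which $\beta$ avoids by assumption. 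Applying monotonicity of $V$ to this inclusion yields $\Rob V(\Ss_l) = V(\Bs^\omega(\Ss_l) \setminus \tilde\Bs_{\mathrm{au}}^\omega(\Ss_l)) \leq V(\Bs^\omega(\Ss) \setminus \Bs_{\mathrm{u}}^\omega(\Ss)) = \Rob V(\Ss)$.

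For the equality I would invoke the two hypotheses. First, since all SCCs of $\Ss_l$ are simple cycles, Theorem~\ref{thm:petcentropy}(ii) gives $h(\Ss)=0$, so $\Ss$ is not chaotic; as observed after Def.~\ref{def:robustmetrics}, this makes every unstable behavior absolutely unstable, so $\Rob V(\Ss) = V(\Bs^\omega(\Ss)\setminus\Bs_{\mathrm{au}}^\omega(\Ss))$. Second, let $\sigma$ be the minimizing cycle attaining $\Rob V(\Ss_l)$: by Theorem~\ref{thm:limavg} the robust value is attained on a cycle, and since every SCC of $\Ss_l$ is a simple cycle, $\sigma^\omega$ is a surviving simple-cycle behavior with $V(\sigma^\omega)=\Rob V(\Ss_l)$. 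As $\sigma^\omega$ belongs to $\Bs^\omega(\Ss_l)\setminus\tilde\Bs_{\mathrm{au}}^\omega(\Ss_l)$ and is a simple cycle, the only way it could have been excluded is by being a.u.\ in $\Ss$; since it was not excluded, and since $\sigma^\omega \in \Bs^\omega(\Ss)$ by hypothesis (so that a.u.-ness in $\Ss$ is defined for it), I conclude $\sigma^\omega \notin \Bs_{\mathrm{au}}^\omega(\Ss)$. Hence $\sigma^\omega \in \Bs^\omega(\Ss)\setminus\Bs_{\mathrm{au}}^\omega(\Ss)$, so $\Rob V(\Ss) \leq V(\sigma^\omega) = \Rob V(\Ss_l)$; together with the first part this gives $\Rob V(\Ss) = \Rob V(\Ss_l)$.

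The main obstacle is the second part, precisely the step identifying $\Rob V(\Ss_l)$ with $V(\sigma^\omega)$ for a \emph{surviving} simple cycle, rather than with a value attained only in the limit by a spurious transient run of $\Ss_l$ that funnels into an excluded (absolutely unstable) cycle. Here the premise that all SCCs are simple cycles is essential: it forces the recurrent tail of every infinite behavior of $\Ss_l$ to be exactly one of the simple cycles, which, combined with transient-insensitivity (Prop.~\ref{prop:limmetrics}) and the cycle-attainment guarantee of Theorem~\ref{thm:limavg}, pins the robust value to a simple cycle. I would be careful to reconcile the set-theoretic removal of Def.~\ref{def:robustsl} with the graph-level computation of Remark~\ref{rem:algohscc} (deleting the excluded simple-cycle SCCs from the digraph before running the minimum-average-cycle and SCC-reachability algorithms), so that the named minimizing cycle $\sigma$ is genuinely a surviving simple cycle; the remaining steps are then routine applications of the monotonicity of $V$.
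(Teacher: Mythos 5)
Your proposal is correct and follows essentially the same route as the paper's proof: the inequality comes from the set inclusion $\Bs^\omega(\Ss)\setminus\Bs_{\mathrm{u}}^\omega(\Ss)\subseteq\Bs^\omega(\Ss_l)\setminus\tilde\Bs_{\mathrm{au}}^\omega(\Ss_l)$ (via $\Ss\preceq_\Bs\Ss_l$ and $\tilde\Bs_{\mathrm{au}}^\omega(\Ss_l)\subseteq\Bs_{\mathrm{u}}^\omega(\Ss)$) plus monotonicity of the infimum, and the equality from Theorem~\ref{thm:petcentropy} forcing $\Bs_{\mathrm{u}}^\omega(\Ss)=\Bs_{\mathrm{au}}^\omega(\Ss)$ so that the surviving minimizing cycle is stable and its value upper-bounds $\Rob V(\Ss)$. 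Your extra care in justifying that $\Rob V(\Ss_l)$ is genuinely attained on a surviving simple cycle (via Theorem~\ref{thm:limavg} and the simple-cycle SCC structure) makes explicit a step the paper leaves implicit, but does not change the argument.
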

\begin{proof}
Because all behaviors in $\tilde{\Bs}_{\mathrm{au}}$ are absolutely unstable in $\Ss$, we have $\tilde\Bs_{\mathrm{au}}^\omega(\Ss_l) \subseteq \Bs_{\mathrm{au}}^\omega(\Ss)$, and thus $\tilde\Bs_{\mathrm{au}}^\omega(\Ss_l) \subseteq \Bs_{\mathrm{u}}^\omega(\Ss)$. From Prop.~\ref{prop:refinements}, $\Bs^\omega(\Ss_l) \supseteq \Bs^\omega(\Ss)$; hence $\Bs^\omega(\Ss_l) \setminus \tilde{\Bs}_{\mathrm{au}}^\omega(\Ss_l) \supseteq  \Bs^\omega(\Ss) \setminus \Bs_{\mathrm{u}}^\omega(\Ss)$. 
Now, for any behavior set $\Bs,$ $V(\Bs) = \inf\{f(y_i) \mid \{y_i\} \in \Bs\} = \inf\{F(\{y_i\}) \mid \{y_i\} \in \Bs\},$ where $F(\{y_i\})$ is either $\liminf_{i\to\infty}y_i$ (ILI) or $\liminf_{n\to\infty}\frac{1}{n+1}\sum_{i=0}^{n}y_i$ (ILA). Hence, $\Bs_a \subseteq \Bs_b$ implies $V(\Bs_a) \geq V(\Bs_b)$, and the inequality $\Rob V(\Ss) \geq \Rob V(\Ss_l)$ follows.

For the equality: if $\Ss_l$ contains only simple cycles, then $\Ss$ is not behaviorally chaotic (Theorem \ref{thm:petcentropy}), and thus $\Bs_{\mathrm{u}}^\omega(\Ss) = \Bs_{\mathrm{au}}^\omega(\Ss)$ (all unstable cycles are absolutely unstable). Then, the minimizing cycle $\sigma$ of $\Ss_l$ is by exclusion a stable cycle of $\Ss$. 
Since $\sigma^\omega \in \Bs^\omega(\Ss) \setminus \Bs_{\mathrm{u}}^\omega(\Ss),$ we have that $\Rob V(\Ss_l) = F(\sigma^\omega) \geq \inf\{F(\{y_i\}) \mid \{y_i\} \in \Bs^\omega(\Ss) \setminus \Bs_{\mathrm{u}}^\omega(\Ss)\}\ = \Rob V(\Ss)$.
Hence, %
$\Rob V(\Ss_l) = \Rob V(\Ss)$.
\end{proof}

Revisiting Figs.~\ref{fig:modesandcones} and \ref{fig:lcomplete} one last time, we have trivially that $\InfLimInf(\Ss) = \InfLimAvg(\Ss) = 1$, but using Theorem \ref{thm:robustvalue} on $\Ss_2$ we conclude that $\Rob\InfLimInf(\Ss) = \Rob\InfLimAvg(\Ss) = 2$. Nevertheless, by Prop.~\ref{prop:petchurwitz}, $\Mm(2)$ must be Schur, and hence a periodic sampling of $2h$ would also stabilize the system with the same traffic performance.

\begin{rem}\label{rem:ilichaos}
In the case of $\Rob\InfLimInf$, if the invariant associated to the minimizing cycle $\sigma$ can be verified to belong to a chaotic invariant set, under mild assumptions it holds that $\Rob\InfLimInf(\Ss_l) = \Rob\InfLimInf(\Ss)$. To see this, first note that $\Rob\InfLimInf = \min(\sigma) \eqqcolon y$; denoting by $\Xs_{\mathrm{c}}$ the chaotic invariant set, if $\Qs_y \cap \Xs_{\mathrm{c}}$ has non-empty interior, by the Birkhoff Transitivity Theorem (see Def.~\ref{def:trans}) almost every solution starting in $\Xs_{\mathrm{c}}$ visits $\Qs_y$ infinitely often.
\end{rem}

\begin{rem}\label{rem:ergocalc}
In case a chaotic invariant set is ergodic, the infinimal limit average is the same almost everywhere (when restricted to the set), i.e., it is independent of the initial condition (as a consequence of Birkhoff Ergodic Theorem). As a matter of fact, almost everywhere means everywhere except the union of periodic orbits. Thus, $\Rob\InfLimAvg(\Ss_l)$ can then be a conservative estimate. Nevertheless, the associated $\Rob\InfLimAvg$ can be estimated through simulations. Ergodicity can be statistically tested using the approach of \cite{domowitz1993consistent}, where one tests whether the two initially different distributions on $\Xs$ converge to an equal one upon the repeated application of the map $f$ by using a non-parametric hypothesis test such as the Kolmogorov--Smirnov (KS) test. Alternatively, the test can be performed on the distributions of outputs; because $\Ys$ is discrete, an hypothesis test appropriate for discrete supports, such as the Cram\'er--von Mises (CvM) test \cite{arnold2011nonparametric}. For this approach to succeed, it is important that the initial distribution contains only points that are in or lead to the chaotic invariant. The abstraction $\Ss_l$ can be used as an approximate selector of points on the chaotic invariant set, as its SCCs that are not simple cycles are related to over-approximations of potential chaotic invariants on the concrete system $\Ss$.
\end{rem}


\section{Numerical examples}\label{sec:numerical}

We have implemented a program in Python to compute $\Rob\InfLimAvg(\Ss)$ using Theorem \ref{thm:robustvalue}, as well as the non-robust version $\InfLimAvg(\Ss)$ from \cite{gleizer2021hscc}. The program relies in \texttt{SciPy} for linear algebra computations, \texttt{Z3} \cite{demoura2008z3} for computing the state set of the $l$-complete models $\Ss_l$, and \texttt{graph-tool}\cite{peixoto2014graphtool} for efficient graph manipulation. 

\begin{example}\label{ex:num}
Consider system \eqref{eq:plant}--\eqref{eq:quadtrig} with
\begin{equation}
\begin{gathered}
	\Am = \begin{bmatrix}0 & 1 \\ -2 & 3\end{bmatrix}, \ \Bm = \begin{bmatrix}0 \\ 1\end{bmatrix}, \\
	c(s,\xv,\hat\xv) = |\xv - \hat\xv| > \sigma|\xv|,
\end{gathered} \tag{\ref{eq:example2d} revisited}
\end{equation}
as in Example \ref{ex:r2}. Now we use PETC with $h=0.05$ and check the following cases:

\begin{enumerate}
\item $\Km = \begin{bmatrix}0 & -5\end{bmatrix}, \sigma=0.2$, as in Ex.~\ref{ex:r2} case 1
\item $\Km = \begin{bmatrix}0 & -6\end{bmatrix}, \sigma=0.2$.
\item $\Km = \begin{bmatrix}0 & -6\end{bmatrix}, \sigma=0.32$, as in Ex.~\ref{ex:r2} case 2, and Fig.~\ref{fig:ex1c5}.
\end{enumerate}
\end{example}

\begin{table}\caption{\label{tab} ILA values for Example \ref{ex:num}}
	\centering
\begin{tabular}{c|ccc}
\hline
Case & 1 & 2 & 3 \\
\hline
$l$ (robust) & 15 (15) & 10 (10) & 1 (10*) \\
ILA (RobILA) & 0.137 (0.137) & 0.1 (0.25) &  0.1 (0.4) \\
CPU time (robust) [s] & 50 (49) & 23 (19) & 0.81 (5655) \\
\hline
\end{tabular}

\begin{flushleft}
{\footnotesize * Algorithm interrupted before finding a verified cycle.}
\end{flushleft}
\vspace{-1em}
\end{table}

Table \ref{tab} shows the values of ILA and RobILA for each case, as well as the $l$ value at which the algorithms were terminated (or interrupted) and CPU times. Case 1 shows a periodic sequence $\sigma^\omega$ with $|\sigma| = 27$ that is stable and attains both the ILA and the RobILA, as well as $\InfLimInf=\Rob\InfLimInf=0.1$. In fact, case 1 exhibits only this cycle, and a bisimulation is found with $l=27$. Case 2 is different in that an a.u.~cycle is attained at $y=0.1$, but a stable cycle has $y=0.25$ (stationary). Upon inspection of $\Ss_l$, there is another stable cycle at $y=0.3$. 
Unsurprisingly, we also obtain $\InfLimInf(\Ss) = 0.1$ and $\Rob\InfLimInf(\Ss) = 0.25$, which happen at the same cycles. %
Finally, Case 3 is a chaotic example; the ILA is found at $y=\underline{\tau}=0.1$ in the first iteration, but RobILA is never confirmed, although a lower bound of 0.4 is obtained, related to two unstable cycles, $(0.4)^\omega$ and $(0.35, 0.45)^\omega$. However, note the CPU time for obtaining the $\Ss_{10}$ abstraction of approximately 1.5 hour (compare with the others of less than a minute): this is the effect of chaos on the refinements: as indicated by the entropy formula, Eq.~\eqref{eq:entropy}, the number of $l$-sized sequences grows exponentially with $l$. In fact, $\Ss_{10}$ has 9271 states, and an entropy of 1.14 bits. The SCC at which the two cycles belong has 7767 states, a strong indicative of a chaotic invariant set. Figure \ref{fig:entropy} shows the evolution of $h(\Ss_l)$ as a function of $l$ for the three cases, where it is clear that the entropy seems to stabilize at a high value in Case 3, whereas it descends to zero in the other cases. By applying Remark \ref{rem:ergocalc}, two different initial distributions on states related to the large SCC of $\Ss_{10}$ where generated with 1000 points each, and after 9 iterations they converged to the same distribution (CvM test, $p = 0.998$), a good indicative that the chaotic invariant set is ergodic. The average of the obtained ensemble, which by Birkhoff Ergodic Theorem is approximately equal to the limit average of any run starting in the invariant, is 0.417, slightly higher than the 0.4 using Theorem \ref{thm:robustvalue}. %
It is interesting to see that $0.4$ is a slightly higher limit average than what was obtained in the CETC implementation (Ex.~\ref{ex:r2} Case 3, and Remark \ref{rem:CETCnotSchur}), of 0.39; more interestingly, $\Mm(0.4)$ is not Schur, which highlights that the PETC has a larger average sampling period than any stabilizing periodic sampling, at the cost of seemingly unpredictable traffic. Finally, while $\InfLimInf(\Ss) = \InfLimAvg(\Ss) = 0.1$ (at the same unstable cycle $0.1^\omega$) the best lower bound for RobILI is found to be 0.3, which is witnessed by the unstable cycle $(0.3, 0.45, 0.4, 0.5)^\omega$. By inspection, the associated o-line belongs to the chaotic invariant, thus by Remark \ref{rem:ilichaos} this is the correct value of $\Rob\InfLimInf(\Ss)$.

\begin{figure}
\begin{center}
\begin{tikzpicture}

\begin{axis}[
legend cell align={left},
legend style={fill opacity=0.8, draw opacity=1, text opacity=1, draw=white!80!black},
height=4cm,
width=0.6\linewidth,
tick pos=left,
x grid style={white!69.0196078431373!black},
xlabel={\(\displaystyle l\)},
xmajorgrids,
xmin=0.7, xmax=15.3,
xtick style={color=black},
y grid style={white!69.0196078431373!black},
ylabel={\(\displaystyle h(\mathcal{S})\)},
ymajorgrids,
ymin=-0.158496250072112, ymax=3.32842125151443,
ytick style={color=black}
]
\addplot [semithick, red]
table {%
1 2.32192809488736
2 1.44998431347649
3 0.618083653501641
4 0.458879117721529
5 0.411290522974063
6 0.373907348251198
7 0.316144244569658
8 0.293957212000875
9 0.223492743503568
10 0.215663447994219
11 0.208421566032427
12 0.172797770404825
13 0.16755957405296
14 0.162667006825136
15 0.0754903285351724
};
\addlegendentry{Case 1}
\addplot [semithick, blue]
table {%
1 2.8073549220576
2 1.47973405170625
3 0.694241913630619
4 7.72241124409256e-06
5 1.71054532302968e-06
6 6.21868830872191e-06
7 3.36688509673648e-07
8 3.29058460764645e-06
9 2.3277279925263e-06
10 3.8441118045779e-15
};
\addlegendentry{Case 2}
\addplot [semithick, green!50!black]
table {%
1 3.16992500144231
2 1.93329421050452
3 1.30842992114861
4 1.22933473962433
5 1.17498628674993
6 1.16316389183323
7 1.15443894915665
8 1.14942315698291
9 1.14811450243739
10 1.1465629609468
};
\addlegendentry{Case 3}
\end{axis}

\end{tikzpicture}
\caption{\label{fig:entropy} Entropy $h(\Ss_l)$ as a function of $l$ for Example \ref{ex:num}.}
\end{center}
\end{figure}
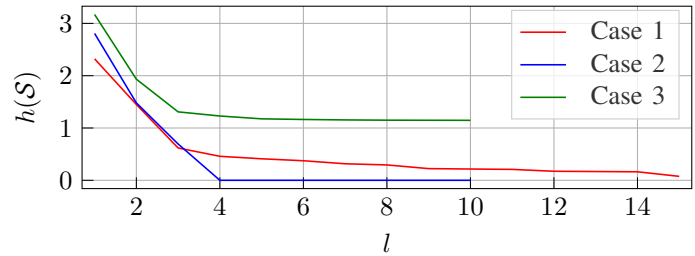

\section{DISCUSSION AND CONCLUSIONS}\label{sec:conclusions}

Event-triggered control can exhibit very complex traffic patterns, and this seems to be more true the more ``aggressive'' the triggering mechanism w.r.t.~sampling reduction. Simple traffic is observed on the opposite case. This is in line with the findings on \cite{postoyan2019interevent} for $\R^2$, in which for small enough triggering parameters the states behave essentially like linear systems: two asymptotes, one stable and one unstable, or a spiral towards the origin when eigenvalues are complex conjugate. This seems to be the case whenever $f$ of the sample system \eqref{eq:samplemap}, projected onto the projective space, is invertible, which is subject of current investigation. 
This would imply that all periodic o-lines or o-planes of the CETC have period one and can be obtained using Remark \ref{rem:cetcfixed}, enabling one to obtain (robust) limit metrics in the continuous case. 

A symbolic method for computing these metrics for PETC was presented in §\ref{sec:symbolic}, and while it is an important first step, it suffers from the curse of dimensionality, particularly when chaotic behaviors are present; future work is aimed at addressing these issues, either by using alternative solvers to Z3, different triggering conditions, or using different abstractions that help pinpoint the existence of chaotic invariant sets. For the latter, an approach such as in \cite{day2008algorithms} may be interesting, which can be seen in the framework of \cite{tabuada2007event} as finding an abstraction that is \emph{backwards} simulated by the concrete system.

We have also seen an example of CETC whose robust infimal limit average is higher than any stable periodic sampling strategy, whereas the same cannot happen with PETC under some generic assumptions. The first case \emph{is a concrete example where ETC is more sampling-efficient than any periodic implementation,} but at the same time any practical implementation of it must rely on periodic checking of triggering times, thus becoming a PETC, and as such any stable inter-event time sequence it exhibits is  stabilizing as a periodic sampling strategy; the only option for PETC to beat the most sampling-efficient periodic implementation would involve chaotic traffic. This is not a problem per se, and one could speculate that chaotic traffic could help in cyber-security aspects, but can make scheduling of multiple ETC loops in a network even more challenging. 

Finally, it is worth noting two important practical observations, one positive and one negative, about this work. The positive one is that it is not actually limited to linear systems: if the closed-loop linear system renders the origin asymptotically stable, and it is linearizable around the origin, then the limit behaviors of the system are those of the linear approximation; hence, our tools can be used to estimate limit metrics of those systems, as was done in \cite{gleizer2021computing}. The negative aspect is that we have considered a very simple case, of state-feedback without disturbances. It is known that doing output-feedback or having disturbances can severely alter the inter-sample behavior of the closed-loop system, in some cases leading to Zeno behavior \cite{borgers2014event}, and practical modifications to the triggering condition are often necessary. It is an open question whether adding these imperfections change our conclusions drastically, or if there are simple adjustments for these cases. Still, it is not difficult to extend the symbolic approach to perturbed systems, following the steps in \cite{delimpaltadakis2020traffic}. Nonetheless, all these conclusions we have obtained for the nominal system cast new light to the long-standing question of how relevant ETC is.

\bibliographystyle{ieeetr} 
\bibliography{mybib}

\begin{thebibliography}{10}

\bibitem{tabuada2007event}
P.~Tabuada, ``Event-triggered real-time scheduling of stabilizing control
  tasks,'' {\em IEEE Transactions on Automatic Control}, vol.~52, no.~9,
  pp.~1680--1685, 2007.

\bibitem{hufnagel1958aperiodic}
R.~E. {Hufnagel}, ``Analysis of cyclic-rate sampled-data feedback-control
  systems,'' {\em Transactions of the American Institute of Electrical
  Engineers, Part II: Applications and Industry}, vol.~77, no.~5, pp.~421--425,
  1958.

\bibitem{aastrom2002comparison}
K.~J. {\AA}str{\"o}m and B.~Bernhardsson, ``Comparison of riemann and lebesgue
  sampling for first order stochastic systems,'' in {\em Proceedings of the
  41st IEEE Conference on Decision and Control, 2002}, vol.~2, pp.~2011--2016,
  IEEE, 2002.

\bibitem{wang2008event}
X.~Wang and M.~D. Lemmon, ``Event design in event-triggered feedback control
  systems,'' in {\em Decision and Control, 2008. CDC 2008. 47th IEEE Conference
  on}, pp.~2105--2110, IEEE, 2008.

\bibitem{girard2015dynamic}
A.~Girard, ``Dynamic triggering mechanisms for event-triggered control,'' {\em
  IEEE Transactions on Automatic Control}, vol.~60, no.~7, pp.~1992--1997,
  2015.

\bibitem{heemels2012introduction}
W.~Heemels, K.~H. Johansson, and P.~Tabuada, ``An introduction to
  event-triggered and self-triggered control,'' in {\em Decision and Control
  (CDC), 2012 IEEE 51st Annual Conference on}, pp.~3270--3285, IEEE, 2012.

\bibitem{heemels2013periodic}
W.~P. M.~H. Heemels, M.~C.~F. Donkers, and A.~R. Teel, ``Periodic
  event-triggered control for linear systems,'' {\em IEEE Transactions on
  Automatic Control}, vol.~58, no.~4, pp.~847--861, 2013.

\bibitem{goebel2012hybrid}
R.~Goebel, R.~G. Sanfelice, and A.~R. Teel, {\em Hybrid Dynamical Systems:
  modeling, stability, and robustness}.
\newblock Princeton University Press, 2012.

\bibitem{linsenmayer2018performance}
S.~{Linsenmayer} and F.~{Allgöwer}, ``Performance oriented triggering
  mechanisms with guaranteed traffic characterization for linear discrete-time
  systems,'' in {\em 2018 European Control Conference (ECC)}, pp.~1474--1479,
  2018.

\bibitem{postoyan2019interevent}
R.~Postoyan, R.~G. Sanfelice, and W.~P. M.~H. Heemels, ``Inter-event times
  analysis for planar linear event-triggered controlled systems,'' in {\em
  Decision and Control, 2019. CDC 2019. 58th IEEE Conference on},
  pp.~3601--3606, IEEE, 2019.

\bibitem{rajan2020analysis}
A.~Rajan and P.~Tallapragada, ``Analysis of inter-event times for planar linear
  systems under a general class of event triggering rules,'' in {\em 2020 59th
  IEEE Conference on Decision and Control (CDC)}, pp.~5206--5211, IEEE, 2020.

\bibitem{postoyan2022explaining}
R.~Postoyan, R.~G. Sanfelice, and W.~Heemels, ``Explaining the ‘`mystery’'
  of periodicity in inter-transmission times in two-dimensional event-triggered
  controlled system,'' {\em IEEE Transactions on Automatic Control}, 2022.
\newblock Early access. {DOI:} 10.1109/TAC.2022.3147009.

\bibitem{tabuada2009verification}
P.~Tabuada, {\em Verification and control of hybrid systems: a symbolic
  approach}.
\newblock Springer Science \& Business Media, 2009.

\bibitem{kolarijani2016formal}
A.~S. Kolarijani and M.~Mazo~Jr, ``A formal traffic characterization of {LTI}
  event-triggered control systems,'' {\em IEEE Transactions on Control of
  Network Systems}, vol.~5, no.~1, pp.~274--283, 2016.

\bibitem{mazo2018abstracted}
M.~Mazo~Jr, A.~S. Kolarijani, D.~Adzkiya, and C.~Hop, ``Abstracted models for
  scheduling of event-triggered control data traffic,'' in {\em Control Subject
  to Computational and Communication Constraints}, pp.~197--217, Springer,
  2018.

\bibitem{fu2018traffic}
A.~Fu and M.~Mazo~Jr., ``Traffic models of periodic event-triggered control
  systems,'' {\em IEEE Transactions on Automatic Control}, vol.~64, no.~8,
  pp.~3453--3460, 2018.

\bibitem{gleizer2020scalable}
G.~A. Gleizer and M.~Mazo~Jr., ``Scalable traffic models for scheduling of
  linear periodic event-triggered controllers,'' {\em IFAC-PapersOnLine},
  vol.~53, no.~2, pp.~2726--2732, 2020.

\bibitem{delimpaltadakis2020homogeneous}
G.~{Delimpaltadakis} and M.~{Mazo Jr}, ``Traffic abstractions of nonlinear
  homogeneous event-triggered control systems,'' in {\em 2020 59th IEEE
  Conference on Decision and Control (CDC)}, pp.~4991--4998, 2020.

\bibitem{delimpaltadakis2020traffic}
G.~Delimpaltadakis and M.~{Mazo Jr.}, ``Abstracting the traffic of nonlinear
  event-triggered control systems,'' 2020.
\newblock \url{https://arxiv.org/abs/2109.14391}.

\bibitem{gleizer2020towards}
G.~A. Gleizer and M.~Mazo~Jr., ``Towards traffic bisimulation of linear
  periodic event-triggered controllers,'' {\em IEEE Control Systems Letters},
  vol.~5, no.~1, pp.~25--30, 2021.

\bibitem{gleizer2021hscc}
G.~A. Gleizer and M.~Mazo~Jr., ``Computing the sampling performance of
  event-triggered control,'' in {\em Proc.~of the 24th Int'l Conf. on Hybrid
  Systems: Computation and Control}, HSCC '21, ACM, 2021.

\bibitem{gleizer2021computing}
G.~A. Gleizer and M.~Mazo~Jr., ``Computing the average inter-sample time of
  event-triggered control using quantitative automata,'' 2021.
\newblock \url{https://arxiv.org/abs/2109.14391}. Submitted to a special issue
  of Nonlinear Analysis: Hybrid Systems.

\bibitem{chatterjee2010quantitative}
K.~Chatterjee, L.~Doyen, and T.~A. Henzinger, ``Quantitative languages,'' {\em
  ACM Transactions on Computational Logic (TOCL)}, vol.~11, no.~4, pp.~1--38,
  2010.

\bibitem{robinson1999dynamical}
C.~Robinson, {\em Dynamical Systems: Stability, Symbolic Dynamics, and Chaos}.
\newblock Studies in Advanced Mathematics, CRC-Press, 1999.

\bibitem{gleizer2018selftriggered}
G.~A. Gleizer and M.~Mazo~Jr., ``Self-triggered output feedback control for
  perturbed linear systems,'' {\em IFAC-PapersOnLine}, vol.~51, no.~23,
  pp.~248--253, 2018.

\bibitem{mazo2010iss}
M.~Mazo~Jr., A.~Anta, and P.~Tabuada, ``An {ISS} self-triggered implementation
  of linear controllers,'' {\em Automatica}, vol.~46, no.~8, pp.~1310--1314,
  2010.

\bibitem{anta2011exploiting}
A.~Anta and P.~Tabuada, ``Exploiting isochrony in self-triggered control,''
  {\em IEEE Transactions on Automatic Control}, vol.~57, no.~4, pp.~950--962,
  2011.

\bibitem{demelo2012one}
W.~de~Melo and S.~van Strien, {\em One-Dimensional Dynamics}.
\newblock Ergebnisse der Mathematik und ihrer Grenzgebiete. 3. Folge / A Series
  of Modern Surveys in Mathematics, Springer Berlin Heidelberg, 2012.

\bibitem{karp1978characterization}
R.~M. Karp, ``A characterization of the minimum cycle mean in a digraph,'' {\em
  Discrete mathematics}, vol.~23, no.~3, pp.~309--311, 1978.

\bibitem{chaturvedi2017note}
M.~Chaturvedi and R.~M. McConnell, ``A note on finding minimum mean cycle,''
  {\em Information Processing Letters}, vol.~127, pp.~21--22, 2017.

\bibitem{schmuck2015comparing}
A.-K. Schmuck, P.~Tabuada, and J.~Raisch, ``Comparing asynchronous l-complete
  approximations and quotient based abstractions,'' in {\em 2015 54th IEEE
  Conference on Decision and Control (CDC)}, pp.~6823--6829, IEEE, 2015.

\bibitem{willems1991paradigms}
J.~C. Willems, ``Paradigms and puzzles in the theory of dynamical systems,''
  {\em IEEE Transactions on automatic control}, vol.~36, no.~3, pp.~259--294,
  1991.

\bibitem{brualdi2010spectra}
``Spectra of digraphs,'' {\em Linear Algebra and its Applications}, vol.~432,
  no.~9, pp.~2181 -- 2213, 2010.

\bibitem{domowitz1993consistent}
I.~Domowitz and M.~A. El-Gamal, ``A consistent test of stationary-ergodicity,''
  {\em Econometric Theory}, vol.~9, no.~4, pp.~589--601, 1993.

\bibitem{arnold2011nonparametric}
T.~B. Arnold and J.~W. Emerson, ``Nonparametric goodness-of-fit tests for
  discrete null distributions.,'' {\em R Journal}, vol.~3, no.~2, 2011.

\bibitem{demoura2008z3}
L.~De~Moura and N.~Bj{\o}rner, ``{Z3}: An efficient {SMT} solver,'' in {\em
  International conference on Tools and Algorithms for the Construction and
  Analysis of Systems}, pp.~337--340, Springer, 2008.

\bibitem{peixoto2014graphtool}
T.~P.~Peixoto, ``The graph-tool python library,'' Sep 2014.
\newblock DOI: 10.6084/m9.figshare.1164194.v14.

\bibitem{day2008algorithms}
S.~Day, R.~Frongillo, and R.~Trevino, ``Algorithms for rigorous entropy bounds
  and symbolic dynamics,'' {\em SIAM Journal on Applied Dynamical Systems},
  vol.~7, no.~4, pp.~1477--1506, 2008.

\bibitem{borgers2014event}
D.~Borgers and W.~Heemels, ``Event-separation properties of event-triggered
  control systems,'' {\em IEEE Transactions on Automatic Control}, vol.~59,
  no.~10, pp.~2644--2656, 2014.

\bibitem{granas2003fixed}
A.~Granas and J.~Dugundji, {\em Fixed Point Theory}.
\newblock Monographs in Mathematics, Springer, 2003.

\bibitem{lang2005algebra}
S.~Lang, {\em Algebra}.
\newblock Graduate Texts in Mathematics, Springer New York, 2005.

\end{thebibliography}

\appendix

\subsection{Proof of Corollary \ref{cor:lineandplane}}

\begin{proof}
The if and only if statement is a straightforward combination of items (i) and (ii) of Theorem \ref{thm:verifycycle} and assumption (ii) of this corollary.

To see that $\As$ is either an o-line (1-dimensional) or an o-plane (2-dimensional), assume that it is higher dimensional. By assumption, $\Mm_\sigma$ is mixed, therefore $\As$ is spanned by o-lines (associated with real eigenvalues) and o-planes (associated with complex conjugate pairs). Let $\Vm \in \R^{\nx \times m}$ be a basis for $\As$ with $m > 2,$ where the $i$-th column of $\Vm$ is a real eigenvector of $\Mm_\sigma$ or, in case of a complex eigenvector pair $\vv, \vv^*$, the $i$-th and $(i+1)$-th columns are $\vv + \vv^*$ and $\imag\vv - \imag\vv^*$ respectively; these two columns correspond to an invariant plane of $\Mm_\sigma$. In the former case, we have
\begin{equation}
\Vm\Em_i = \vv,
\end{equation}
and in the latter
\begin{equation}
\Vm\Em_{i,i+1} = \begin{bmatrix}\vv + \vv^* & \imag\vv - \imag\vv^*\end{bmatrix},
\end{equation}
where $\Em_i$ is a row matrix with the $i$-th element being 1 and the rest zero, and $\Em_{i,i+1} \in \R^{2\times m}$ has the entries $(1,i)$ and $(2,i+1)$ equal to 1, the rest being zero. These are nothing but selection matrices.		

Since $\Qs_\sigma$ is composed of an intersection of sets of the form $\{\xv \in \R^\nx \mid \xv\tran\Qm_i\xv \sim 0\}$, where $\sim \in \{=,\geq,>\}$, by Prop.~\ref{prop:quadsubspace}, $\Vm\tran\Qm_i\Vm \approx \O$ for every such $\Qm_i$ determining $\Qs_\sigma$, where $\approx \in \{=, \succeq, \succ\}$, respectively. Since $\Am \approx \O \implies \Bm\tran\Am\Bm \approx \O$ for any non-singular $\Bm$, we can conclude that $\Vm\tran\Qm_i\Vm \approx \O$ implies $(\Vm\Em_i)\tran\Qm_i(\Vm\Em_i) \approx \O$  and $(\Vm\Em_{i,i+1})\tran\Qm_i(\Vm\Em_{i,i+1}) \approx \O$, which imply that the corresponding o-line or o-plane is also a subset of $\Qs_\sigma$.
\end{proof}

\subsection{Proofs of Theorem \ref{thm:topofixed} and Proposition \ref{prop:attractivity}}

In these proofs, if $f$ is not invertible in the pointwise sense, we treat its inverse in a set-based manner: $f^{-1} : \Ys \rightrightarrows \Xs$, $f^{-1}(y) = \{x \in \Xs \mid f(x) = y\}$. In addition, here we work on the real projective space $\P^{\nx-1},$ the space of all o-lines in $\R^\nx$. The real projective space is the quotient of $\R^n \setminus \{\O\}$ by the relation $\xv \sim \lambda\xv, \lambda \in \R \setminus \{0\}.$ Therefore, $\xv$ and $\lambda\xv$ are \emph{the same point} $\pv \in \P^{\nx-1}$. We denote the natural projection of a point in $\R^{\nx}$ onto $\P^{\nx-1}$ by $h : \R^\nx \setminus \{\O\} \to \P^{\nx-1}$.

\begin{lem}\label{lem:welldef}
Consider system \eqref{eq:samplemap} and assume $f(\xv) \neq \O$ for all $\xv \neq \O$. Then $g \coloneqq h \circ f \circ h^{-1}$ is a well-defined function. Moreover, if $f$ is continuous, then $g$ is also continuous.
\end{lem}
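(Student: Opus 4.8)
The plan is to verify the two claims separately, first handling well-definedness via the homogeneity of $f$, and then continuity via the universal property of the quotient topology on $\P^{\nx-1}$.

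For well-definedness, the key point is that $g$ involves the set-valued inverse $h^{-1}$, so I must check that $h \circ f$ takes a single value on each fiber of $h$. Recall that a fiber is $h^{-1}(\pv) = \{\lambda\xv \mid \lambda \in \R\setminus\{0\}\}$ for any representative $\xv$ with $h(\xv) = \pv$. First I would note that the hypothesis $f(\xv) \neq \O$ for $\xv \neq \O$ guarantees that $h(f(\xv))$ is defined for every $\xv \neq \O$, since $h$ is only defined away from the origin. Then, by the homogeneity established in Proposition \ref{prop:props}, $f(\lambda\xv) = \lambda f(\xv)$, and because $\lambda \neq 0$ the points $f(\lambda\xv)$ and $f(\xv)$ lie on the same o-line; hence $h(f(\lambda\xv)) = h(\lambda f(\xv)) = h(f(\xv))$. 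This shows $h\circ f$ is constant on fibers, so $g(\pv) \coloneqq h(f(\xv))$ does not depend on the chosen representative $\xv$, establishing that $g$ is a well-defined function satisfying the identity $g \circ h = h \circ f$.

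For continuity, I would invoke that $h : \R^\nx \setminus \{\O\} \to \P^{\nx-1}$ is a quotient map, since $\P^{\nx-1}$ carries the quotient topology by construction. The composition $h \circ f$ is continuous as a composition of continuous maps, using the hypothesis that $f$ is continuous together with $f(\xv) \neq \O$, which keeps the image inside the domain of $h$. Since $g$ is precisely the map induced on the quotient by $h\circ f$ — i.e.\ $g \circ h = h \circ f$ with $h\circ f$ constant on fibers — the universal property of the quotient topology immediately yields continuity of $g$.

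The only genuine care needed is the bookkeeping around the origin: because $h$ is undefined at $\O$, the assumption $f(\xv)\neq\O$ is exactly what makes the composition $h\circ f$ meaningful and the induced map both well-defined and continuous. Beyond this point the argument is routine, and I do not anticipate a substantial obstacle; the lemma's role is mainly to legitimize passing to the projective picture used in the proofs of Theorem~\ref{thm:topofixed} and Proposition~\ref{prop:attractivity}.
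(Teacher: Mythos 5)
Your proof is correct and follows essentially the same route as the paper: well-definedness comes from the homogeneity of Prop.~\ref{prop:props} ensuring that $h\circ f$ is constant on the fibers of $h$, and continuity descends to the quotient. Your continuity step via the universal property of the quotient topology is a cleaner and more rigorous packaging of the paper's informal assertion that $f$ is ``continuous for o-lines,'' so no gap remains.
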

\begin{proof}
For any $\pv \in \P^{\nx-1},$ $h^{-1}(\pv)$ gives a whole o-line $\lv \subset \R^\nx$. From Prop.~\ref{prop:props}, it holds that $f(\lv) = \lv'$, where $\lv'$ is also an o-line. Hence $h(\lv') = \pv' \in \P^{\nx-1}$, so $g$ is well defined. %
Continuity is then given by the fact that $f$ is also continuous for o-lines, i.e., if $\lv$ is an o-line, $\lim_{\lv'\to\lv}f(\lv') = \lv$; hence $\lim_{\pv'\to\pv}g(\pv') = \pv$.
\end{proof}

\begin{customproof}[Proof of Theorem \ref{thm:topofixed}]
Every continuous map from the real projective space to itself has a fixed point if its dimension is even \cite[p.~109]{granas2003fixed}. From Lemma \ref{lem:welldef}, $g : \P^{\nx-1} \to \P^{\nx-1}$ is a well-defined continuous function; thus, $g$ has a fixed point if $\nx$ is \emph{odd}.

Now we need to show that, if $g$ has a fixed point, then $f$ has a fixed o-line. If $\pv$ is a fixed point of $g$, then take a point $\xv \in h^{-1}(\pv)$. Then, $h(f(\xv)) = g(h(\xv)) = \pv \therefore f(\xv) \in h^{-1}(\pv)$ Hence, there exists $\xv' \in h^{-1}(\pv)$ satisfying $\xv' = f(\xv)$, where $\xv' = \lambda\xv$, for some $\lambda$. Since $f$ is homogeneous as per Prop.~\ref{prop:props}, $\xv' = f(\xv)$ is true for any $\xv$ in the o-line containing it. Hence, this line is fixed by $f$, and the proof is complete.
\end{customproof}

\begin{customproof}[Proof of Prop.~\ref{prop:attractivity}]
From Lemma \ref{lem:welldef}, $g : \P^{\nx-1} \to \P^{\nx-1}$ is well defined. Let $\pv = h(\xv)$ for any $\xv \in \lv$. We want to show that there is a coordinate system for the tangent space of $g$ at $\pv$ such that the Jacobian of $g$ at $\pv$ is equal to $\frac{1}{\lambda}\Om_{\xv}\, \tran J_f (\xv) \Om_{\xv}$.

First, note that the real projective space is locally equal to the unit sphere, hence we can use the orthogonal subspace to a unitary $\xv$ within $\lv$ as the tangent subspace of $\pv$ embedded in $\R^\nx$. Denote it as $\Ts(\xv)$. Let $\dv$ be a unitary vector orthogonal to $\xv$. Any point in $\Ts(\xv)$ can be described as $\xv + a\dv$. To get the Jacobian of $g$, we apply $f$ to $\xv+h\dv$ and project the result back to $\Ts(\xv)$:
$f(\xv + h\dv) = f(\xv) + hJ_f(\xv)\dv + \Os(h^2) = \lambda\xv + hJ_f(\xv)\dv + \Os(h^2),$
whose projection back to $\Ts(\xv)$ is simply $\xv + h/\lambda \cdot J_f(\xv)\dv + \Os(h^2)$. Thus, the vector of variation of $g$ w.r.t.~$\dv$ embedded in $\R^{\nx}$ is
$$ \lim_{h\to 0} \frac{\xv + h/\lambda \cdot J_f(\xv)\dv + \Os(h^2) - \xv}{h} = \frac{1}{\lambda}J_f(\xv)\dv. $$
Now let $\dv_i$ be the $i$-th column of $\Om_{\xv}$. Every $\dv_i$ is unitary and orthogonal to $\xv$. Setting $\dv_1,\dv_2,...\dv_{\nx-1}$ as a coordinate system for the tangent space of $g$ at $\pv$, the component of the derivative of $g$ on $\dv_j$ from a variation in $\dv_i$ is $\dv_j\tran\frac{1}{\lambda}J_f(\xv)\dv_i$; putting in matrix form, we arrive at
$$ J_g(\pv) = \frac{1}{\lambda}\Om_{\xv}\, \tran J_f (\xv) \Om_{\xv}, $$
which implies local attractivity if Schur.
\end{customproof}

\subsection{Proof of Theorem \ref{thm:dimplane}}

We start by introducting the following lemma.

\begin{lem}\label{lem:fixedplane}
Let $\Nm \in \S^n$ be a nonsingular symmetric matrix. The following holds:
\begin{enumerate}
\item There is a plane through the origin $\Ps$ such that $\xv\in\Ps\setminus\{0\} \implies \xv\tran\Nm\xv > 0$ if and only if $\Nm$ has at least two positive eigenvalues.
\item There is a plane through the origin $\Ps$ such that $\xv\in\Ps \implies \xv\tran\Nm\xv = 0$ if and only if $n\geq4$, and $\Nm$ has at least two positive and two negative eigenvalues.
\end{enumerate}
\end{lem}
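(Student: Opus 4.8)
The plan is to reduce both items to the \emph{inertia} of $\Nm$. By the spectral theorem I would write $\Nm = \Qm\Lambda\Qm\tran$ with $\Qm$ orthogonal and $\Lambda$ diagonal; since $\Nm$ is nonsingular no diagonal entry vanishes, so the eigenvalues split into $p$ positive and $q = n-p$ negative ones. Let $W_+$ (resp.\ $W_-$) be the span of the eigenvectors with positive (resp.\ negative) eigenvalues, so that $\xv\mapsto\xv\tran\Nm\xv$ is positive definite on $W_+$, negative definite on $W_-$, and $\R^n = W_+\oplus W_-$ is an \emph{orthogonal} decomposition with $\dim W_+ = p$ and $\dim W_- = q$. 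Everything then comes down to controlling $p$ and $q$.

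For item 1 the sufficiency is immediate: if $p\geq 2$, take $\Ps$ to be the span of two eigenvectors with positive eigenvalues, on which the restricted form is diagonal with positive entries, hence positive definite. For necessity I would argue by dimension counting: if $\Ps$ is a $2$-plane on which the form is positive definite then $\Ps\cap W_- = \{\O\}$, since a common nonzero vector would make $\xv\tran\Nm\xv$ simultaneously positive and negative. Therefore $\dim\Ps + \dim W_- \leq n$, i.e.\ $2 + (n-p)\leq n$, giving $p\geq 2$.

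For item 2 the necessity is the cleaner direction, so I would do it first. Let $\Ps$ be a $2$-dimensional totally isotropic plane and let $\Pi_+$ denote the orthogonal projection onto $W_+$. If $\xv\in\Ps$ satisfies $\Pi_+\xv = \O$, then $\xv\in W_+^\perp = W_-$, so $\xv\tran\Nm\xv < 0$ unless $\xv=\O$; but isotropy forces $\xv\tran\Nm\xv = 0$, hence $\xv=\O$. Thus $\Pi_+|_\Ps$ is injective and $p = \dim W_+ \geq \dim\Ps = 2$; the symmetric argument with the projection onto $W_-$ gives $q\geq 2$, whence $n = p+q\geq 4$. For sufficiency I would produce the plane explicitly: pick eigenvectors $\ev_1,\ev_2$ with positive eigenvalues, rescaled so that $\ev_i\tran\Nm\ev_i = 1$, and $\fv_1,\fv_2$ with negative eigenvalues, rescaled so that $\fv_j\tran\Nm\fv_j = -1$; all four are mutually $\Nm$-orthogonal. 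Setting $\uv \coloneqq \ev_1+\fv_1$ and $\vv \coloneqq \ev_2+\fv_2$, a direct computation gives $\uv\tran\Nm\uv = 1-1 = 0$, $\vv\tran\Nm\vv = 0$, and $\uv\tran\Nm\vv = 0$ because every cross term vanishes by orthogonality; since $\uv,\vv$ are linearly independent, $\Ps = \vspan\{\uv,\vv\}$ is the desired plane.

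The inertia bookkeeping and the two sufficiency constructions are routine. The step I expect to require the most care is the \emph{sufficiency} of item 2: isotropy of a \emph{plane} is strictly stronger than containing isotropic lines, since it demands that the whole associated bilinear form vanish on $\Ps$, not merely that each spanning vector be null. The crux is therefore to pair positive with negative eigendirections so that both the diagonal terms cancel ($1-1$) \emph{and} the off-diagonal bilinear terms vanish by $\Nm$-orthogonality; this is exactly the $\mathrm{diag}(1,1,-1,-1)$ Witt-index-two configuration, and it is what makes the requirements $p\geq2$ and $q\geq2$ (hence $n\geq4$) simultaneously necessary and sufficient. I would also take care to confirm the linear independence of $\uv$ and $\vv$ so that $\Ps$ is genuinely two-dimensional.
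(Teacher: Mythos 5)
Your proof is correct, and the two sufficiency constructions coincide in substance with the paper's (item 1: span two positive eigendirections; item 2: pair normalized positive and negative eigendirections into null vectors $\ev_i+\fv_j$, the standard Witt-index-two configuration — the paper builds the same pairing, just routed through the eigendecomposition of the permutation matrix $\begin{bsmallmatrix}\O&\I\\\I&\O\end{bsmallmatrix}$ acting on the Sylvester normal form). Where you genuinely diverge is the necessity direction of item 2. The paper works with a full-rank $\Vm\in\R^{n\times2}$ satisfying $\Vm\tran\Nm\Vm=\O$, passes to $\Wm_1=\Tm\Vm$ and $\Wm_2=\Sm\Wm_1$, observes the two column spaces are mutually orthogonal so that $[\Wm_1\ \Wm_2]$ has rank $4$ (giving $n\geq4$ directly), and then extracts the inertia from the similarity $\Sm\Wm=\Wm\Pm$ with the permutation matrix $\Pm$. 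You instead show that the orthogonal projections onto the positive and negative eigenspaces are each injective on $\Ps$, which immediately yields $p\geq2$ and $q\geq2$ and hence $n=p+q\geq4$ as a corollary rather than a separate rank computation. Your route is shorter and avoids the congruence-transformation bookkeeping; the paper's route is more computational but produces the explicit matrices that its sufficiency construction then reuses. Your dimension-count for item 1's necessity ($\Ps\cap W_-=\{\O\}$ forces $p\geq2$) also fills in what the paper dismisses as a "trivial consequence of Sylvester's law of inertia." Both arguments are sound; no gaps.
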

\begin{proof}
(1) This is a trivial consequence of Sylvester's law of inertia (see \cite[Chap.~XV.4]{lang2005algebra}).

(2) Based on Prop.~\ref{prop:quadsubspace}, this is equivalent to $\Vm\tran\Nm\Vm = \O,$ for some $\Vm \in \R^{n \times 2}.$ 

Proof of necessity: We assume that some full-rank $\Vm \in \R^{n \times 2}$ satisfies $\Vm\tran\Nm\Vm = \O$ and prove that $\Nm$ has at least two positive and two negative eigenvalues (thus $n\geq4$). Using Sylvester's law of inertia, we can write $\Nm = \Tm\tran\Sm\Tm$, where $\Sm$ is diagonal containing only 1 and $-1$ entries in the diagonal, and $\Tm$ is invertible. Thus, $\Vm\tran\Nm\Vm = \Wm_1\tran\Sm\Wm_1$, where $\Wm_1 = \Tm\Vm$ also has rank 2. Let $\Wm_2 \coloneqq \Sm\Wm_1.$ Since $\Sm$ is invertible, $\Wm_2$ has rank 2 as well. Because $\Wm_1\tran\Wm_2 = \O,$ the columns of $\Wm_1$ are orthogonal to the columns of $\Wm_2,$ hence $\Wm \coloneqq \begin{bmatrix}\Wm_1 & \Wm_2\end{bmatrix}$ has rank 4, which implies that $n \geq 4$. Pre-multiplying $\Wm_2 = \Sm\Wm_1$ by $\Sm$, we get $\Sm\Wm_2 = \Sm^2\Wm_1 = \Wm_1$ (note that $\Sm^2 = \I$). Thus, we can write
$$ \Sm\begin{bmatrix}\Wm_1 & \Wm_2\end{bmatrix} = \Sm\Wm = \begin{bmatrix}\Wm_2 & \Wm_1\end{bmatrix} = \Wm\Pm, $$
where $\Pm \coloneqq \begin{bsmallmatrix}\O & \I \\ \I & \O\end{bsmallmatrix}$ is a permutation matrix. This matrix has two eigenvalues in 1 and two eigenvalues in -1. 
Now, take one pair $(\lambda, \xv)$ such that $\Pm\xv = \lambda\xv$. Then, $\Sm\Wm\xv = \Wm\Pm\xv = \lambda\Wm\xv,$ so $\lambda$ is also an eigenvalue of $\Sm$. Thus, $\Sm$ has at least two eigenvalues equal to 1 and two equal to $-1$.

Proof of sufficiency: now we start with a nonsingular matrix $\Nm$ with two positive and two negative eigenvalues, and then construct $\Vm \in \R^{n \times 2}$ such that $\Vm\tran\Nm\Vm = \O$. Take the Sylvester matrix $\Sm$ of $\Nm$ and select 4 rows and columns such that the corresponding submatrix has exactly two values of 1 and two of $-1$. Denote by this submatrix $\Sm_4$. We have that that there exists $\Wm_4 \in \R^{4 \times 4}$ such that $\Wm_4^{-1}\Sm_4\Wm_4 = \Pm,$ where $\Pm$ the same permutation matrix as in the proof of necessity, so $\Wm_4$ can be determined by the eigendecomposition of $\Pm$. Complete $\Wm \in \R^{n \times 4}$ from $\Wm_4$ by padding the remaining rows with zeros, and denote the first two columns of $\Wm$ by $\Wm_1$. Using the same arguments as in the proof of necessity, the matrix $\Vm = \Tm^{-1}\Wm_1$ satisfies $\Vm\tran\Nm\Vm = \O$. 
\end{proof}	

\begin{customproof}[Proof of Theorem \ref{thm:dimplane}]
Case (i) is trivial, since $\underline\tau = \bar\tau$ implies periodic sampling, so the whole $\R^{\nx}$ is fixed and isochronous. 

For case (ii), suppose $\nx < 3$; then every isochronous set is composed by quadratic sets of the form $\cap_i\{\xv \in \R^\nx \mid \xv\tran\Qm_i\xv\mathrel{>\!\!(\geq)}0\}$, but by Lemma \ref{lem:fixedplane} (1) every such $\Qm_i$ must have two positive eigenvalues; thus, every $\Qm_i \succ \O$ which implies that $\Qs_\tau = \R^\nx$ for all $k$, hence every state samples at every possible inter-sample time. Because a state can only sample at one inter-sample time, this implies $\tau$ is unique, hence $\underline\tau = \bar\tau$, which is a contradiction.

For case (iii), in the CETC case every set $\Qs_\tau$ is an interesecion of sets including the set $\{\xv \in \R^\nx \mid \xv\tran\Qm_\tau\xv=0\}$. Hence, by Lemma \ref{lem:fixedplane} (2), if $\nx < 4$ then no plane can belong to $\Qs_\tau$; thus $\nx \geq 4$.
\end{customproof}

\subsection{Proof of Proposition \ref{prop:unstable}}

First we introduce the following Lemma.

\begin{lem}
\label{lem:conestable} Let $\xiv(k+1) = \Mm\xiv(k)$ be a linear autonomous system, $\Mm$ mixed, and let $\vv_1, \vv_2, ..., \vv_n$ be the unitary eigenvectors of $\Mm$ ordered from largest-in-magnitude corresponding eigenvalue to smallest. Denote by $\As$ any linear invariant of $\Mm$ containing $\vv_1$. Then, for every initial state $\xiv(0) = a_1\vv_1 + \cdots = a_n\vv_n$ where $a_1 \neq 0$, it holds that
$$ \lim_{k\to\infty}\frac{\xiv(k)}{|\xiv(k)|} \in \As. $$
\end{lem}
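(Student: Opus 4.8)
The plan is to decompose the state in the eigenbasis of $\Mm$ and track the relative growth of each modal component. Write $\xiv(0) = \sum_{i=1}^n a_i \vv_i$ with $a_1 \neq 0$, so that $\xiv(k) = \Mm^k\xiv(0) = \sum_{i=1}^n a_i \lambda_i^k \vv_i$, where $\lambda_i$ is the eigenvalue associated with $\vv_i$. Since $\Mm$ is mixed, it is diagonalizable (over $\C$), so this expansion is valid, and the eigenvalues are ordered by $|\lambda_1| \geq |\lambda_2| \geq \cdots \geq |\lambda_n|$. The key idea is to factor out the dominant magnitude $|\lambda_1|^k$ and show that, after normalization, all components whose eigenvalue is strictly smaller in magnitude than $\lambda_1$ vanish, leaving only a combination of the eigenvectors sharing the top magnitude --- and these are precisely the ones spanning (together with $\vv_1$) any dominant linear invariant $\As$.

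First I would handle the real, strictly-dominant case to fix intuition: if $|\lambda_1| > |\lambda_2|$ and $\lambda_1$ is real, then $\xiv(k)/\lambda_1^k = a_1\vv_1 + \sum_{i\geq 2} a_i (\lambda_i/\lambda_1)^k \vv_i \to a_1 \vv_1$ since each $|\lambda_i/\lambda_1| < 1$. Dividing by norm then gives $\xiv(k)/|\xiv(k)| \to \pm\vv_1 \in \As$. The general case requires treating the set $J \coloneqq \{i : |\lambda_i| = |\lambda_1|\}$ of indices sharing the maximal magnitude. Because $\Mm$ is mixed, the eigenvalues in this top-magnitude group are either a single real $\lambda_1$ or come in complex-conjugate pairs $\lambda = |\lambda_1|\e^{\pm\imag\phi}$; in all cases, writing $\lambda_i = |\lambda_1|\e^{\imag\phi_i}$, I would factor $\xiv(k) = |\lambda_1|^k\big(\sum_{i\in J} a_i \e^{\imag k\phi_i}\vv_i + \sum_{i\notin J} a_i(\lambda_i/\lambda_1)^k\vv_i\big)$. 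The second sum tends to zero, so asymptotically $\xiv(k)/|\lambda_1|^k$ stays within bounded distance of the real subspace spanned by the top-magnitude eigenvectors, which is exactly the smallest invariant subspace containing $\vv_1$ among those; and any dominant invariant $\As$ (containing $\vv_1$) contains this top-magnitude span restricted to the modes actually excited. Normalizing, the limit direction lies in this span, hence in $\As$.

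The main obstacle I anticipate is making the statement ``$\lim_{k\to\infty}\xiv(k)/|\xiv(k)| \in \As$'' precise and correct when the top-magnitude group contains genuine rotations, because then the normalized vector generally does \emph{not} converge --- it keeps rotating within the top-magnitude plane. The limit should be interpreted in the projective or set-distance sense: the distance from $\xiv(k)/|\xiv(k)|$ to the subspace $\As$ tends to zero, even though the sequence itself need not converge. I would therefore phrase the conclusion as $\lim_{k\to\infty} \operatorname{dist}\big(\xiv(k)/|\xiv(k)|, \As\big) = 0$, and the proof reduces to bounding this distance by the norm of the subdominant tail $\sum_{i\notin J} a_i(\lambda_i/\lambda_1)^k\vv_i$ divided by a lower bound on $|\sum_{i\in J} a_i\e^{\imag k\phi_i}\vv_i|$. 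The delicate point is the lower bound on the denominator: I must argue it does not collapse to zero along a subsequence. Since $a_1\neq 0$ and $\vv_1$ participates, I would invoke almost-periodicity of the top-magnitude combination together with the irrational-rotation / linear-independence structure guaranteed by mixedness to ensure $\liminf_k |\sum_{i\in J} a_i\e^{\imag k\phi_i}\vv_i| > 0$, which keeps the ratio controlled and drives the distance to zero.
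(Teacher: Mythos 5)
Your argument is correct, and it is essentially the argument the paper itself relies on: the paper gives no proof here, deferring entirely to Lemma~3 of the cited prior work, and that proof is the same modal decomposition you carry out (diagonalize, factor out $|\lambda_1|^k$, kill the strictly subdominant tail, observe that mixedness confines the top-magnitude group to a single real eigenvalue or one conjugate pair). You add genuine value in two places. First, you correctly flag that the statement $\lim_{k\to\infty}\xiv(k)/|\xiv(k)|\in\As$ cannot be read as pointwise convergence when $\lambda_1$ is complex --- the normalized iterate keeps rotating in the dominant plane --- and must be read as $\operatorname{dist}(\xiv(k)/|\xiv(k)|,\As)\to 0$; the paper leaves this implicit. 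Second, your reduction of the whole estimate to (tail norm)/(dominant-part norm) is the right quantitative skeleton. The one soft spot is the lower bound on the denominator: you invoke almost-periodicity and irrational rotations, but neither is needed. In the conjugate-pair case the dominant part is $2\,\mathrm{Re}\bigl(a_1\e^{\imag k\phi}\vv_1\bigr)$, which ranges over a fixed ellipse; its norm is bounded below by $\min_{\theta\in[0,2\pi]}\bigl|2\,\mathrm{Re}\bigl(a_1\e^{\imag\theta}\vv_1\bigr)\bigr|$, and this minimum is strictly positive because $\mathrm{Re}(a_1\e^{\imag\theta}\vv_1)=0$ for some $\theta$ would force $\vv_1$ to be a complex scalar multiple of a real vector, contradicting the linear independence of $\vv_1$ and $\vv_1^*$ (guaranteed since $\mathrm{Im}(\lambda_1)\neq 0$). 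Compactness plus this linear-independence argument closes the gap cleanly, whether or not $\phi$ is an irrational multiple of $\piconst$.
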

\begin{proof}
This is consequence of the proof of \cite[Lemma 3]{gleizer2021computing} when $a_1 \neq 0$. We omit the details here due to space limitations.
\end{proof}

Lemma \ref{lem:conestable} paraphrases the known fact that almost every trajectory of a linear system converges to its dominant mode.

\begin{customproof}[Proof of Proposition \ref{prop:unstable}]
{\bf Item (I):} For contradiction, assume that $\sigma^\omega$ is stable, and let $m \coloneqq |\sigma|.$ First, we check $\xv_0 \in \Qs_\sigma$. In this case, the samples $\xv_i$ evolve according to $\xv_{i+m} = \Mm_\sigma\xv_i$. Let $\xv_0 = a_1\vv_1 + \cdots = a_n\vv_n$ where $\vv_j$ are the eigenvectors of $\Mm_\sigma$ ordered as in Lemma \ref{lem:conestable}. For any $\xv_0 \in \R^n$ almost all points in its neighborhood satisfy $a_1 \neq 0$. Hence, by Lemma \ref{lem:conestable}, $\lim_{i\to\infty}\frac{\xv_{mi}}{|\xv_{mi}|} \in \As$, but $\As \nsubseteq \cl(\Qs_\sigma)$. Thus, $\{\xv_{mi}\}$ escapes $\Qs_\sigma$ at a some finite $i$, hence $\{y(\xv_{mi})\} \neq \sigma^\omega$. This contradicts the assumption that $\sigma^\omega$ is stable.

We now see that the set of states $\xv$ such that $\Bs^\omega_{\xv}(\Ss) = \alpha\sigma^\omega$, $|\alpha| < \infty$ is measure zero. This is done by induction on the length of $\alpha$. Let $\Xs_m$ be the set of states whose behavior is if $\alpha\sigma^\omega$ with $|\alpha| = m.$ If $m=0$, we have already seen that $\Xs_0$ is a linear invariant of $\Mm_\sigma$; because this linear subspace does not contain $\vv_1$, it has zero measure. Now assume $\Xs_m$ has zero measure. The set $\Xs_{m+1}$ is the pre-image of $\Xs_m$, hence, $\Xs_{m+1} \subseteq \cup_{k=1}^{\bar{k}}\Mm_k^{-1}\Xs_m$. Because $\Mm_k$ is nonsingular and the union is finite, $\Xs_{m+1}$ is also measure zero. This concludes the proof that $\sigma^\omega$ is unstable.

{\bf Item (II):}
First, note that $c$ must be a multiple of $|\sigma|$. Let $\alpha$ be any $l$-long subsequence of $\sigma^\omega$. We have already seen that for almost every $\xv \in \Qs_\alpha$ there exists a finite $k$ such that the solution to \eqref{eq:samplemap} $\xiv_{\xv}(ck-1) \notin \Qs_\alpha.$ 
To prove absolutely instability, it suffices to check the behavior from any such $\xiv_{\xv}(ck-1)$ does not contain $\sigma^L$ for some $L$ large enough. We show that it is true with $L=c/|\sigma|$.

Let $\Cs$ be the simple-cycle SCC formed by $\{x_1, x_2, ..., x_c\}$. Since $\alpha \in \Cs$, w.l.o.g., let $x_1=\alpha,$ which is related to $\xv$. Let $\xv' \coloneqq \xiv_{\xv}(ck)$ and take $x'$ as the unique state in $\Ss_l$ related to $\xv'$, respectively. We first show that $x' \notin \Cs$: since there is a run from $\xv$ to $\xv'$ with length $ck$, there must be a run segment of length $ck$ from $x_1$ to $x'$. Because $\Cs$ is strongly connected, if $x' \in \Cs$, the only path would be $(x_1...x_c)^kx_1,$ hence $x' = x_1$. But this is a contradiction because $x' \neq \alpha$ since $\xiv_{\xv}(ck-1) \notin \Qs_\alpha.$ Thus, $x' \notin \Cs$.

Now, there is no path in the abstraction connecting $x'$ to $\Cs$ (otherwise $\Cs$ would not be a simple cycle). Therefore, because $\Ss_l \succeq \Ss$, it is trivial to see that there is also no path from $\xv'$ back to $\Qs_\alpha$, i.e., $\xiv_{\xv'}(k) \notin \Qs_\alpha, \forall k \in \N$. Thus, $\Bs_{\xv'}^\omega(\Ss)$ does not contain $\sigma^L$ as a subsequence, concluding the proof.
\end{customproof}

\end{document}